\newcommand{\gtodo}[1]{\todo[color=yellow]{#1}}
\newcommand{\itodo}[1]{\todo[color=green]{#1}}
\crefname{enumi}{}{}
\crefname{equation}{}{}
\crefname{claim}{Claim}{Claims}
\newtheorem{theorem}{Theorem} 
\newtheorem{lemma}[theorem]{Lemma}
\theoremstyle{remark}
\theoremstyle{definition}
\newtheorem{definition}[theorem]{Definition}
\theoremstyle{remark}
\newtheorem{remark}[theorem]{Remark}
\newcommand{\Exp}{\ensuremath{\operatorname{\mathbb{E}}}}
\renewcommand{\Pr}{\ensuremath{\operatorname{\mathbb{P}}}}
\renewcommand{\AA}{\mathcal{A}}
\newcommand{\EE}{\mathcal{E}}
\newcommand{\BB}{\mathcal{B}}
\newcommand{\GG}{\mathit{\Gamma}}
\newcommand{\HH}{\mathcal{H}}
\newcommand{\WW}{\mathcal{W}}
\newcommand{\XX}{\mathcal{X}}
\DeclareMathOperator{\poly}{poly}
\DeclareMathOperator{\diam}{diam}
\newcommand{\ErdosRenyi}{Erdős-Rényi}
\newcommand{\twostate}{2-state MIS process}
\newcommand{\black}{\mathtt{black}}
\newcommand{\white}{\mathtt{white}}
\newcommand{\gray}{\mathtt{gray}}
\newcommand{\on}{\mathtt{on}}
\newcommand{\off}{\mathtt{off}}
\newcommand{\level}{\mathit{level}}
\newcommand{\blkone}{\mathtt{black1}}
\newcommand{\blkzero}{\mathtt{black0}}
\title{Distributed Self-Stabilizing MIS with Few States and Weak Communication}
\author{George Giakkoupis\\
Inria, Rennes, France\\
\texttt{george.giakkoupis@inria.fr}
\and
Isabella Ziccardi\\
Bocconi University, Milan, Italy\\
\texttt{isabella.ziccardi@unibocconi.it}
}
\date{}
\begin{document} 

\maketitle 

\begin{abstract}
We study a simple random process that computes a maximal independent set (MIS) on a general $n$-vertex graph.
Each vertex has a binary state, black or white, where black indicates inclusion into the MIS.
The vertex states are arbitrary initially, and  are updated in parallel: 
In each round, every vertex whose state is ``inconsistent'' with its neighbors', i.e., it is black and has a black neighbor, or it is white and all neighbors are white, changes its state with probability $1/2$.
The process stabilizes with probability 1  on any graph, and the resulting set of black vertices is an MIS.  
It is also easy to see that the expected stabilization time is $O(\log n)$ on certain graph families, such as cliques and trees.
However, analyzing the process on graphs beyond these simple cases seems challenging.

Our main result is that the process stabilizes in $\poly(\log n)$ rounds w.h.p.\ on $G_{n,p}$ random graphs, for $0\leq p \leq  \poly(\log n)\cdot n^{-1/2}$ and $p \geq 1/\poly(\log n)$.
Further, an extension of this process, with larger but still constant vertex state space, stabilizes in $\poly(\log n)$ rounds on $G_{n,p}$ w.h.p., for all $1\leq p\leq 1$.
We conjecture that this improved bound holds for the original process as well.
In fact, we believe that the original process stabilizes in $\poly(\log n)$ rounds \emph{on any given $n$-vertex graph} w.h.p.
Both processes readily translate into distributed/parallel MIS algorithms, which are self-stabilizing, use constant space (and constant random bits per round), and assume restricted communication as in the beeping or the synchronous stone age models.
To the best of our knowledge, no previously known MIS algorithm is self-stabilizing, uses constant space and constant randomness, and stabilizes in $\poly(\log n)$ rounds in general or random graphs.
\end{abstract}


\section{Introduction}

Finding a maximal independent set (MIS) is a fundamental problem in parallel and distributed computing.
Given a graph $G=(V,E)$, the objective is to identify a set of vertices $S\subseteq V$ such that no two vertices $u,v\in S$ are adjacent to each other (\emph{independence} property), and no vertex $u\in V\setminus S$ can be added to $S$ without violating independence (\emph{maximality} property).
The significance of the problem in parallel computing was first recognised in the early 80s~\cite{Valiant82,Cook83}, due to its various applications in symmetry breaking~\cite{Luby86}, and it has been studied extensively every since (see \cite{BarenboimEPS16} for a review of work until 2015, and \cite{BalliuBHORS21,GhaffariGR21} for state of the art results).




In this paper we explore simple distributed random processes on graphs that find an MIS starting from arbitrary initial states of the vertices.
These processes immediately translate into self-stabilizing~\cite{Dijkstra74,Dolev2000} synchronous distributed algorithms for network systems with severely restricted computation and communication capabilities, such as wireless sensor networks.
The processes we consider are also relevant to certain biological cellular networks.
For example, it is known that a biological process occurring during the development of the nervous system of a fly is equivalent to computing an MIS~\cite{AfekABHBB11,JeavonsS016}.

The main random process we consider, which we call the \emph{2-state MIS process},
is as follows.
Each vertex has a binary state, \emph{black} or \emph{white}, where black indicates inclusion into the MIS.
The vertex states are arbitrary initially and are updated in synchronous rounds. 
In each round, every vertex $u$ whose state violates the independence or maximality properties, i.e., $u$ is black and has a black neighbor, or it is white and has no black neighbor, changes its state to the opposite state with probability $1/2$.
It is easy to see that the state of a vertex stabilizes as soon as it is black and has no black neighbors, or it is white and has a \emph{stabilized black} neighbor; and when all vertices have stabilized, the set of black vertices is an MIS.
It is also immediate that, on any graph $G$, the process stabilizes eventually with probability 1 (due to the randomization)
.\footnote{We could have defined the process so that the transition from white to black (when the white vertex has no black neighbors) occurs with probability $1$, but we opted for a randomized transition because it simplifies our analysis.}

The 2-state MIS process can be viewed as a natural parallelization (with the addition of randomness) of a simple self-stabilizing sequential deterministic algorithm, proposed in~\cite{Shukla1995,Hedetniemi2003}, where in each step a single node updates its state (from black to white, if the node has a black neighbor, and from white to black if it has no black neighbors).
\cite{Shukla1995} also observed that by randomizing the transitions of the sequential algorithm we obtain an algorithm that stabilizes with probability 1 on a general adversarial scheduler model, which includes the synchronous model.
A similar observation follows from a general transformation framework proposed in~\cite{TurauW06}.
The sequential algorithm is know to stabilize after each process has taken at most 2 steps (regardless of the scheduling order).
However, analyzing the stabilization time of the parallel process seems a much more challenging problem, and has not been studied until now.

The 2-state MIS process directly translates into a self-stabilizing MIS algorithm for the harsh beeping communication model~\cite{CornejoK10}.
In that model, in every synchronous round,
each node either listens or beeps, and a listening node can only differentiate between none of its neighbors' beeping, or at least one beeping.
In our case, we can let black nodes beep in each round, while white nodes listen. 
Black nodes must be able to detect collisions (otherwise they cannot tell if they have a black neighbor), thus we assume the beeping model version with \emph{sender collision detection} (a.k.a.\ full-duplex model)~\cite{AfekABCHK13,Ghaffari17}.

We also propose 
a simple variant of the 2-state MIS process, called the \emph{3-state MIS process},
which has an additional state and does not require collision detection (see \cref{def:3-state-MIS}).
This variant is suitable for the synchronous  stone age model~\cite{EmekW13,EmekK21}. 
The synchronous  stone age model can be viewed as an extension of the beeping model over a constant number of channels (without collision detection): each node beeps in at most one channel and listens to the other channels. 


Overall, the algorithms obtained from the 2-state and 3-state MIS processes have several attractive properties: they use a constant number of states (2 or 3) and one random bit per round, they do not require node IDs or any global graph information (such as the number of vertices $n$ or the maximum degree $\Delta$), assume very week communication (the beeping or stone age models), they are self-stabilizing, and are extremely simple.
We will prove that, on some families of graphs, these algorithms are also \emph{fast}, i.e., they stabilize (from an arbitrary initial state) in a number of rounds that is poly-logarithmic in $n$, w.h.p.\footnote{In this paper, we do not analyze the 3-state MIS process, but we expect that it behaves similarly (or better than) the 2-state MIS process.}
Moreover, despite that we were not  able to prove such as strong result here, we believe that these algorithms are fast \emph{in all graphs}.

Several self-stabilizing distributed MIS algorithms have been proposed in the literature, but as far as we know, none possesses all the above properties. 
Known self-stabilizing MIS algorithms for the beeping model require (approximate) knowledge of $n$, use space that is a super-constant function of $n$, and require a super-constant number of random bits \cite{AfekABCHK13,JeavonsS016,Ghaffari17}.
In the stone age model, an MIS algorithm proposed in~\cite{EmekW13} has similar properties as our algorithms (and is provably fast for all graphs) but is not self-stabilizing;
while a self-stabilizing algorithm for the model proposed recently in~\cite{EmekK21} is fast only on graphs whose diameter is bounded by a known constant $D$.
Other randomized self-stabilizing MIS algorithms required super constant state and communication~\cite{Turau19}.
Finally, known deterministic self-stabilizing MIS algorithms require distinct node IDS, super constant state and communication, and are in general much slower than the randomized algorithms, stabilizing in time linear in $n$ or in the maximum degree $\Delta$~\cite{Ikeda2002,GoddardHJS03,Turau07,BarenboimEG22}. 


\subsection{Our Contribution}

We first analyze the stabilization time of the 2-state MIS process on complete graphs and on graphs with bounded arboricity.\footnote{The arboricity of a graph is the minimum number of forests into which we can partition its edges.}
We also provide an upper bound in terms of the maximum degree for general graphs.
The proof of these results is mostly straightforward.

\begin{theorem}
    \label{thm:intro-simple}
    The stabilization time of the 2-state MIS process on $n$-vertex graph $G$ is
    \begin{itemize}
        \item 
          $O(\log n)$ in expectation and $O(\log^2 n)$
         \itodo{$\Theta(\log^2 n)$?} 
          w.h.p., if $G$ is the complete graph $K_n$. 
        
        \item
        $O(\log n)$ w.h.p., if $G$ has bounded arboricity.

        \item
        at most $O(\Delta\log n)$ w.h.p., if the maximum degree of $G$ is $\Delta$.     
    \end{itemize}
\end{theorem}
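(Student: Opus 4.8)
I would establish the three bounds by separate arguments. Throughout, write $B_t$ for the set of black vertices and $W_t$ for the set of white vertices with no black neighbor at the start of round $t$; recall that a black vertex is (permanently) stabilized once it has no black neighbor, a white vertex once it has a stabilized black neighbor, and that the whole configuration is stabilized if and only if $B_t$ is independent in $G$ (no ``conflict edge'') and $W_t=\emptyset$.

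\emph{Complete graph $K_n$.} On $K_n$ the only relevant quantity is the number $k$ of black vertices: if $k\ge 2$ every black vertex is unstable and every white vertex is consistent, so $k$ becomes $\mathrm{Bin}(k,1/2)$ in one round; if $k=0$ every vertex is unstable, so $k$ becomes $\mathrm{Bin}(n,1/2)$; and $k=1$ is the unique stabilized state. I would analyze this chain in phases of $\Theta(\log n)$ rounds and show that, from any starting value, a phase ends at $k=1$ with probability bounded below by an absolute constant. While $k$ exceeds a suitable constant, a Chernoff bound gives $k_{t+1}\le \tfrac34 k_t$ with probability $\ge \tfrac12$ and $k_{t+1}>0$ with overwhelming probability (as $\Pr[\mathrm{Bin}(k,1/2)=0]=2^{-k}$), so $O(\log n)$ rounds suffice to bring $k$ down to a constant; from a constant value, a short Markov-chain calculation shows $1$ is reached before $0$ with constant probability, while reaching $0$ merely restarts the current phase. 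The number of phases is then stochastically dominated by a geometric random variable, hence $O(1)$ in expectation and $O(\log n)$ w.h.p., giving the claimed $O(\log n)$ expected and $O(\log^2 n)$ w.h.p.\ bounds. I expect no real difficulty here.

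\emph{Bounded arboricity.} Arboricity $a$ implies degeneracy $d\le 2a-1=O(1)$, and in particular $|E(G[S])|\le a|S|$ for every $S\subseteq V$. The natural attempt is a potential such as $\Phi_t=\lambda\,|E(G[B_t])|+|W_t|$ that vanishes exactly on the stabilized configurations: a conflict edge survives a round with probability at most $1/4$ (both endpoints are unstable and flip independently), a vertex of $W_t$ leaves $W$ with probability at least $1/2$ (it flips to black, or gains a black neighbor), and the bounded arboricity controls the influx of new conflict edges (created only when two \emph{adjacent} vertices of $W_t$ simultaneously flip to black, so at most $\tfrac14|E(G[W_t])|\le \tfrac{a}{4}|W_t|$ in expectation). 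If one can choose $\lambda$ so that $\Exp[\Phi_{t+1}\mid\Phi_t]\le\gamma\Phi_t$ for a constant $\gamma<1$, a standard supermartingale/Chernoff argument converts this into a w.h.p.\ $O(\log n)$ bound. \textbf{The main obstacle is precisely the design of $\Phi$:} the influx of new \emph{$W$-vertices} is problematic — when a black vertex flips to white, every white vertex whose only black neighbor it was enters $W$, so a single destroyed conflict edge can spawn $\Theta(\Delta)$ new $W$-vertices — and the plain two-term potential above does not contract once $a\ge 2$. One therefore has to weight $W$-vertices by a function of their $W$-degree (such bursts are ``star-like'' and spread out, so they decay cleanly without creating new conflicts), or to stratify the potential along the degeneracy ordering so that low-rank vertices are treated as resolved first; getting such a refinement to contract is the crux of this case.

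\emph{Maximum degree $\Delta$.} A crude bound is immediate: if a vertex $u$ is unstable in round $t$ then it is stabilized in round $t+1$ with probability at least $2^{-(\Delta+1)}$ — if $u$ is black it stays black while each of its $\le\Delta$ (unstable) black neighbors flips to white; if $u$ is white with no black neighbor it flips to black while none of its $\le\Delta$ neighbors flips — which already yields $O(2^{\Delta}\log n)$ w.h.p.\ by a union bound. To obtain the sharper $O(\Delta\log n)$ I would argue per vertex, tracking $X_t:=|B_t\cap N[u]|$: as long as $u$ is black, every black vertex of $N[u]$ is unstable and flips with probability $1/2$ while no white vertex of $N[u]$ can flip to black (it neighbors the black vertex $u$), so $X_{t+1}=\mathrm{Bin}(X_t,1/2)$ — exactly the $K_{|N[u]|}$ dynamics — which drives $X_t$ into $\{0,1\}$ in $O(\log\Delta)$ rounds; a white $u$ with no black neighbor flips to black within $O(1)$ rounds in expectation, and a white $u$ with a stabilized black neighbor is already stabilized. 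Splicing these regimes together and union-bounding over the $n$ vertices gives the bound (alternatively it follows from the previous case if that is carried out with an explicit $O(a\log n)$ dependence, since $a\le\Delta$). The obstacle is that $N[u]$ is not isolated: stabilized black vertices \emph{outside} $N[u]$, and the rounds in which $u$ is white with an \emph{unstable} black neighbor, spoil the clean clique reduction, and it is the cost of handling this interference that leaves the $\Delta$ factor in place.
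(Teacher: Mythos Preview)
Your clique argument is essentially the paper's.

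For the other two parts, the obstacles you yourself flag are real and your sketches do not resolve them. The missing ingredient, used by the paper in both cases, is the elementary observation that an active vertex with at most $k$ active neighbors becomes \emph{stable black} within $\lceil\log(k{+}1)\rceil$ rounds with probability at least $(2ek)^{-1}$: it suffices that its coin is black for that many consecutive rounds while no active neighbor's coins are. For bounded arboricity, the paper does not use a potential at all; it tracks the monotone set $V_t$ of non-stabilized vertices (so your ``influx of new $W$-vertices'' problem never arises) and uses that the average degree of $G[V_t]$ is bounded by a constant $d$, so at least $|V_t|/(d{+}1)$ vertices of $V_t$ have at most $d$ non-stabilized neighbors; each such vertex becomes active within one round with probability $\ge 2^{-d}$ (all its non-stable black neighbors turn white) and is then stable black with probability $\Omega(1/d)$ by the lemma, yielding $\Exp[|V_{t+O(\log d)}|]\le (1-\epsilon(d))\,|V_t|$ and hence $O(\log n)$ rounds. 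For maximum degree $\Delta$, the paper does not try to isolate $N[u]$: while $u$ is non-stabilized, \emph{some} vertex of $N^+(u)$ is always active, and one follows a sequence $v_1,v_2,\ldots$ of such active vertices, switching to a new one whenever the current $v_i$ ceases to be active; each $v_i$ becomes stable black (stabilizing $u$) with probability $\ge(2e\Delta)^{-1}$ by the lemma, and each gap $t_i-t_{i-1}$ is geometrically distributed with parameter $1/2$, so after $\Theta(\Delta\log n)$ indices $u$ is stabilized w.h.p.\ and the total round count is $O(\Delta\log n)$ w.h.p. Note also that your proposed fallback of reading off the $\Delta$ bound from an explicit $O(a\log n)$ arboricity argument does not work along the paper's route, since $\epsilon(d)$ there is exponentially small in $d$.
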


A main technical contribution of the paper is the analysis of the 2-state MIS process on \ErdosRenyi\ $G_{n,p}$ random graphs.
We show a poly-logarithmic upper bound 
for $G_{n,p}$ random  graphs when the average degree $np$ is at most $\poly(\log n)\cdot \sqrt n$. 
The same bound is easily obtained also when the average degree is at least $n/\poly(\log n)$.

\begin{theorem}
    \label{thm:intro-2s}
    The stabilization time of the 2-state MIS process on a $G_{n,p}$ random graph, such that $0\leq p\leq \poly(\log n)\cdot n^{-1/2}$ or $p\geq 1/\poly(\log n)$, is at most $\poly(\log n)$ w.h.p.
\end{theorem}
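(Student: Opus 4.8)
The plan is to track a shrinking \emph{residual graph}. Call a vertex \emph{frozen} at round $t$ if it is black and has no black neighbour; such a vertex stays black forever, and within $O(\log n)$ further rounds each of its neighbours turns white and stays white (an over-covered black neighbour flips with probability $1/2$ each round). Write $S_t$ for the set of frozen vertices and $R_t:=V\setminus(S_t\cup N(S_t))$. Up to an additive $O(\log n)$ slack, the restriction of the \twostate\ to $R_t$ is again a \twostate\ on $G[R_t]$, since a vertex of $R_t$ is white-consistent exactly when it has a black neighbour inside $R_t$ (all of $N(S_t)$ is eventually white and all of $S_t$ is dominated away). Hence it suffices to establish a \emph{progress} statement: starting from an arbitrary configuration, every vertex $v$ that is not yet dominated by a frozen vertex becomes dominated by one within $O(\log n)$ rounds with probability at least $1/\poly\log n$, uniformly over the starting configuration. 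Repeating this over $\poly\log n$ fresh phases then dominates every vertex w.h.p.\ by a union bound over vertices and phases, and once $|R_t|\leq\poly\log n$ the residual graph has maximum degree $\poly\log n$ (by the uniformity of $G_{n,p}$), so \cref{thm:intro-simple} finishes the process in $O(\Delta\log n)=\poly\log n$ more rounds. Note one cannot invoke \cref{thm:intro-simple} directly from the start, because $\Delta\approx np$ can be as large as $\poly\log n\cdot\sqrt n$; the phases serve precisely to drive the residual's maximum degree down to $\poly\log n$ (and already when $np=\poly\log n$, i.e.\ outside the genuinely new regime, \cref{thm:intro-simple} applies immediately).

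The mechanism behind the progress statement is a rapid \emph{equilibration} of the black density $\beta_t$ to the scale $1/d$, where $d$ is the typical degree of the relevant graph. In one round every over-covered black vertex flips to white with probability $1/2$ and every uncovered white vertex flips to black with probability $1/2$. When $\beta_t d\gg 1$, essentially every black vertex is over-covered while essentially no white vertex is uncovered, so $\beta_{t+1}\approx\beta_t/2$; when $\beta_t d\ll 1$, almost every white vertex is uncovered, so $\beta_{t+1}\approx 1/2$. Thus within $O(\log d)=O(\log n)$ rounds we reach a round with $\beta_t d=\Theta(1)$, and at that point a constant fraction of the black vertices have \emph{no} black neighbour, i.e.\ they become frozen; since these frozen vertices are spread out (pairwise overlaps of their neighbourhoods are controlled by the codegree $\approx np^2$), they dominate a constant fraction of the vertices. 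In particular a fixed not-yet-dominated vertex $v$ has $\Omega(1)$ frozen vertices in its closed neighbourhood with probability bounded below, which is the progress statement.

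The main obstacle is that all of these estimates presuppose that the black set behaves like a \emph{spread-out random set}: that a black vertex is over-covered with probability $\approx 1-e^{-\beta_t d}$, that the black density inside each neighbourhood stays within a constant factor of $\beta_t$, and that the frozen vertices do not cluster. After the adversarial initial configuration the law of the black set is opaque, so one must argue that $O(\log n)$ rounds of fresh coin flips ``smooth it out'' inside every neighbourhood $N(v)$ (both in density and in how it is distributed over common neighbours), and one must also check that the residual sets $R_t$ remain pseudorandom enough that $G[R_t]$ is $G_{m,p}$-like --- plausible because they arise from the randomized process rather than from an adversary, but requiring care. This is the delicate part of the proof, and it is exactly what forces the degree regime $np\leq\poly\log n\cdot\sqrt n$: when $np\leq\sqrt n$, every $N(v)$ induces a graph of maximum degree $O(\log n)$ and every two vertices have $O(\log n)$ common neighbours w.h.p., which keeps all of these local dependencies tame enough to control.

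For $p\geq 1/\poly\log n$ the same scheme applies but is far less demanding: the graph is so dense --- diameter $2$, strong expansion, independence number (hence the size of any MIS) only $\poly\log n$ --- that the progress step is robust to crude bounds and needs no careful local analysis, the equilibrium black set already has size $\Theta(1/p)=\poly\log n$, and $\Theta(1/p)$ essentially random vertices dominate a constant fraction of the graph because $p\cdot\Theta(1/p)=\Theta(1)$. The intermediate range $\poly\log n\cdot\sqrt n\ll np\ll n/\poly\log n$ resists both simplifications --- codegrees exceed $1$, so the equilibrium black set (of size $\Theta(1/p)$) clusters in a way that is hard to control, yet it is still far too large to fall back on the dense-graph shortcut --- which is why the theorem leaves that range open.
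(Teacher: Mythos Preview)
Your high-level scaffolding---show that from any configuration each non-stable vertex stabilizes within $O(\log n)$ rounds with probability at least $1/\poly(\log n)$, then iterate---matches the paper's. You also correctly isolate the codegree bound as the reason for the restriction $p\le\poly(\log n)\cdot n^{-1/2}$, and your treatment of the dense regime $p\ge 1/\poly(\log n)$ is essentially right: the paper too reduces first to $|V_t|=O(p^{-1}\log^2 n)\le\poly(\log n)$ and then invokes the $O(\Delta\log n)$ bound.

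The genuine gap is the mechanism you propose for the progress step. Tracking a global black density $\beta_t$ and arguing it equilibrates to $\Theta(1/d)$ presupposes exactly what you flag as ``the delicate part'': that after an adversarial start the black set is spread out enough that ``over-covered with probability $\approx 1-e^{-\beta_t d}$'' is meaningful. It is not, and no amount of $O(\log n)$ rounds of fresh randomness forces it---the black set can remain highly structured (e.g., concentrated on a small dense subgraph) indefinitely without any density-based contradiction. The claims $\beta_{t+1}\approx\beta_t/2$ and $\beta_{t+1}\approx 1/2$ both silently assume near-uniformity of the black set inside every neighbourhood, which is circular. Relatedly, asserting that the residual $G[R_t]$ stays ``$G_{m,p}$-like'' because $R_t$ is produced by the process rather than an adversary is not an argument; the paper sidesteps this entirely by fixing once and for all a deterministic \emph{good} graph (satisfying average-degree, codegree, and edge-count inequalities uniformly over \emph{all} vertex subsets) and never re-randomizing the graph.

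What the paper does instead of density equilibration: it tracks $|A_t|$ (active vertices) and $|V_t|$ (non-stable vertices) and proves expected multiplicative shrinkage of $|V_t|$ in three regimes. The first two regimes ($|A_t|$ large, or $|V_t|$ large but $|A_t|$ small) follow fairly directly from the average-degree property of good graphs plus \cref{lem:active-vertex-st,lem:neighbors2-st}. The hard regime is $|V_t|=O(p^{-1}\log^2 n)$, and here the engine is a lemma you are missing entirely (\cref{lem:key-lemma} and its iterates): if a white non-active vertex $u$ has probability $p$ of being active white two rounds later, then it has probability at least $p^{1/\log(4/3)}$ of being $k$-active after \emph{one} round, with $k$ bounded by the number of $u$'s neighbours at distance $\le 1$ from $u$'s currently active neighbours---hence by the codegree. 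This converts ``some neighbour of $u$ is likely to be non-stable black'' into ``some neighbour of $u$ is likely to be $\poly(\log n)$-active,'' which then stabilizes via \cref{lem:active-vertex-st}. That lemma, not any smoothing of the black set, is the idea that handles the adversarial start.
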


Our proof techniques do not yield a poly-logarithmic upper bound for the 2-state MIS process on $G_{n,p}$ for the complete range of $p$.
Our second technical contribution is an extension of the 2-state MIS process that provably stabilizes in poly-logarithmic time w.h.p.\ on $G_{n,p}$ for all $0\leq p\leq 1$.
The extended process uses a phase clock sub-process proposed in~\cite{EmekK21}.
Interestingly, unlike~\cite{EmekK21}, we do not use the phase clock for synchronization, but rather as a local non-synchronized counter (see \cref{sec:Techniques} for a more detailed discussion).

\begin{theorem}
    \label{thm:intro-3c}
    There is an extension of the $2$-state process, with $18$ states, such that 
    the stabilization time of the process on a $G_{n,p}$ random graph, for any $0\leq p\leq 1$,  is at most $\poly(\log n)$ w.h.p.
\end{theorem}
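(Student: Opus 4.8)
The plan is to reduce to the single density range that the proof of \cref{thm:intro-2s} leaves open, and to use the extra states there to impose a clean Luby-type structure that does not require knowing the degrees. For $np\le\poly(\log n)\cdot\sqrt n$ or $np\ge n/\poly(\log n)$ the extended process can be analyzed essentially as in \cref{thm:intro-2s} (its colour dynamics degrade to those of the $2$-state process once the auxiliary machinery is inactive), so the crux is the intermediate regime $\poly(\log n)\cdot\sqrt n\le np\le n/\poly(\log n)$. There, a vertex typically shares $\Theta(np^2)$ neighbours with each of its neighbours, so many white vertices with no black neighbour sit in a common neighbourhood, flip to black together, become inconsistent, and flip back, and the induced subgraph on the still-undecided vertices keeps churning; it is this churn, rather than any genuine slowness, that our $2$-state techniques cannot control.

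The extension attaches to each vertex a constant-size phase-clock variable in the style of \cite{EmekK21}, which, together with the two colours, brings the total to $18$ states. Two features of the clock are used. First, \emph{self-stabilization}: from an arbitrary initial assignment the clock variables become genuine local counters (rather than garbage inherited from the start state) within $\poly(\log n)$ rounds, which is what keeps the whole construction self-stabilizing. Second, and unlike in \cite{EmekK21}, the clock is run purely locally and is never synchronized across the graph, so different vertices reach any given clock value in different rounds; we exploit this staggering to break local symmetry. Concretely, a black vertex that has had no black neighbour for a constant number of consecutive rounds becomes \emph{confirmed} (and visibly so to its neighbours), a white or black vertex that sees a confirmed neighbour deterministically backs off, and confirmation is revoked as soon as a black neighbour appears — so correctness is unaffected, but in a contested neighbourhood the first vertex to reach confirmation suppresses all the others before they can re-flip, which is exactly a Luby step with no knowledge of $\Delta$. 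A garbage-large initial clock value can only cause a vertex to confirm a little early, which is harmless because confirmation still checks the current absence of black neighbours.

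To obtain the running-time bound I would use the following w.h.p.\ properties of $G_{n,p}$ in the intermediate range: degrees concentrate around $np$; neighbourhoods are quasi-random (so common neighbourhoods have size $\Theta(np^2)$ and the independence number is $O(p^{-1}\log(np))$); and the graph expands well. Take the potential to be the number of undecided vertices, meaning white vertices with no confirmed neighbour together with black vertices that are not yet confirmed-stable. Over one super-phase of $\poly(\log n)$ rounds — long enough for every asynchronous clock to cycle and for confirmations to resolve — I would show that, w.h.p., a constant fraction of the currently undecided vertices become either confirmed or permanently adjacent to a confirmed black vertex: quasi-randomness and expansion force the newly confirmed vertices to cover a constant fraction of the rest, while the staggering ensures the confirmed set is almost independent, so it really stabilizes. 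Iterating $O(\log n)$ super-phases drives the potential below $1$, which, together with the $\poly(\log n)$ clock-stabilization prefix, gives the claimed bound; that the final black set is an MIS is immediate from the stabilization rules, as for the $2$-state process.

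The step I expect to be hardest is the single-super-phase progress lemma: showing that constant-size \emph{unsynchronized} clocks plus the $1/2$-coin flips genuinely produce, with probability close to $1$, a $\Theta(1)$-per-neighbourhood almost-independent set of newly confirmed vertices, and then union-bounding this over all $n$ vertices and all $\poly(\log n)$ super-phases while separating the randomness of the graph from that of the process (e.g.\ by conditioning on a ``nice graph'' event and exposing neighbourhoods only as needed). A secondary difficulty is plumbing the phase-clock self-stabilization into the MIS dynamics, so that the $\poly(\log n)$ rounds of possibly-garbage clock values at the start cannot steer the colour process into a configuration from which recovery is slow.
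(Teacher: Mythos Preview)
Your proposal has a real gap: the mechanism you describe does not touch the actual obstruction in the intermediate regime, and the ``staggering'' argument cannot work with constant-size clocks.

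First, your confirmation rule is redundant. In the $2$-state process a black vertex with no black neighbour is already stable from that very round; all its neighbours see it and never flip to black. Requiring ``no black neighbour for a constant number of consecutive rounds'' adds nothing, and ``back off when you see a confirmed neighbour'' is exactly what white vertices already do. So the colour dynamics you obtain are indistinguishable from the $2$-state process, and the churn you diagnosed---a large all-white neighbourhood flipping to black together, all becoming inconsistent, and flipping back---persists: none of those newly black vertices ever has ``no black neighbour,'' so none is ever confirmed.

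Second, even if you intended the clock to gate \emph{when} a white vertex may flip to black (which you do not say), a constant-size unsynchronized counter gives only $O(1)$ distinct phases. A vertex in the intermediate regime has $\Theta(np)$ active neighbours, so $\Theta(np)$ of them share any given clock value; the ``staggering'' is swamped and you are back to simultaneous flipping. A Luby step needs $\Omega(\log \Delta)$ bits of symmetry breaking, not $O(1)$.

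The paper's mechanism is almost the opposite of what you sketch. The extra colour is a \emph{gray} waiting state inserted between black and white: an active black vertex goes to gray (not white) with probability $1/2$, and a gray vertex may become white only when its switch variable is $\on$. The switch is the RandPhase clock with $6$ states (hence $3\cdot 6=18$), and when $\diam(G)\le 2$---which holds w.h.p.\ exactly in the regime you care about---the clock actually \emph{synchronizes} across the graph, producing $\off$-runs of length $\Theta(\log n)$ and $\on$-runs of length at most~$3$. The point is therefore not staggering but \emph{throttling}: each vertex can complete the cycle $\black\to\gray\to\white\to\black$ at most $O(1)$ times in any $\Theta(\log n)$ window (\cref{lem:freq-black}). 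This single estimate replaces the failing step in the $2$-state analysis: for a low-degree vertex $u$ in $G[V_t]$, its $O(\log n)$ ``high-degree'' neighbours are now non-stable black only $O(\log n)$ times total over $\Theta(\log n)$ rounds, so in some round none of them blocks $u$, and $u$ (or a low-degree neighbour of $u$) becomes $\poly(\log n)$-active. The rest of the argument is the same case analysis as for the $2$-state process, not a fresh Luby-style coverage lemma. For small $p$ the switch may not synchronize, but property~\cref{prop:switch-off-ub} still caps gray runs at $O(\log n)$, which is enough to recycle the $2$-state analysis there.
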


We believe that the bound of \cref{thm:intro-3c} holds for the 2-state MIS process, as well.
In fact, we conjecture that the stabilization time of the  2-state MIS process is $\poly(\log n)$ w.h.p.\ on \emph{any given $n$-vertex graph}.
We also conjecture that the same is true for the 3-state MIS process.
For the 2-state process, the best general upper bound we can hope for is $O(\log^2 n)$, as the process requires $\Theta(\log^2 n)$ rounds to stabilize on the complete graph $K_n$ w.h.p.\footnote{
It also requires $\Theta(\log^2 n)$ rounds \emph{in expectation} to stabilize on a graph consisting of $\sqrt{n}$ disjoint cliques $K_{\sqrt n}$.}
For the 3-state process, we have no example of a graph where the stabilization time is larger than $O(\log n)$.

\subsection{Analysis Overview and Techniques}
\label{sec:Techniques}

Below we  give an overview of the analysis of the 2-state MIS process and its extension, on $G_{n,p}$ random graphs.

To avoid having to deal simultaneously with the randomness of the graph and the arbitrary initialization of vertex states, we deal with graph randomness first.
We define a family of \emph{good graphs}, containing those graphs that satisfy all structural properties that we will need for the analysis, e.g., bounds on the average degree of any induced subgraph, and bounds on the number of common neighbors of any two vertices 
(see \cref{def:good-graph}).
We then show that a $G_{n,p}$ random graph is good  w.h.p., and assume an arbitrary good graph in the analysis.

The analysis proceeds by showing that starting from any vertex states, the process makes sufficient progress after $O(\log n)$ rounds, where progress is measured by the expected number of vertices that stabilize.

In the 2-state MIS process, 
we call a vertex \emph{active} if it is black and has a black neighbor, or it is white and has no black neighbors.
Thus, active vertices change their state to a uniformly random state in the next step. 
A vertex is  \emph{$k$-active} if it is active and has \emph{at most} $k$ active neighbors.

An elementary property of the 2-state MIS process is that if a vertex is $k$-active, then it becomes stabilized black in $O(\log k)$ rounds with probability $\Omega(1/k)$.
We also use an extension of this property to sets of active vertices.\footnote{Similar properties are commonly used in the analysis of distributed MIS algorithms in the literature.}
%
%
These two properties, combined with structural properties of good graphs, suffice to show the desired expected progress in the case in which the number of non-stabilized vertices or the number of active vertices is large enough. 

The more difficult case is when the number of non-stabilized vertices is relatively small, namely $O(p^{-1}\log^2n)$, and a smaller than $1/\poly(\log n)$ fraction of them are active.
One may expect this to be an easy case, since the induced subgraph on a \emph{random} subset of $O(p^{-1}\log^2n)$ vertices has maximum degree $\Delta = O(\log^2n)$ w.h.p.\ (and \cref{thm:intro-simple} gives an $O(\log^3n)$ bound for that $\Delta$).
However, the above bound on $\Delta$ does not apply to an induced subgraph on an \emph{arbitrary} subset of $O(p^{-1}\log^2n)$ vertices.
Nevertheless, it is true that the \emph{average} degree is $O(\log^2n)$, thus a constant fraction of vertices have degree $O(\log^2n)$.

Let $u$ be one such vertex, i.e., of degree $d = O(\log^2n)$ in the induced subgraph of non-stabilized vertices.
To prevent $u$ from becoming active (and thus $d$-active) or becoming stabilized, in each round at least one neighbor of $u$ must be non-stabilized black.
We show that, roughly speaking, if a vertex $v$ has probability $b$ of being non-stabilized black  at some point during an interval of $r$ rounds (ignoring the first few rounds, e.g., if $v$ is black initially) then $v$ has probability $\poly(b/r)$ of becoming $\theta$-active in that interval.
For the purposes of the analysis, it suffices to set $r = O(\log \log n )$.
Then $\theta$ is, roughly, bounded by the maximum number of common neighbors two nodes may have, thus $\theta \leq \poly(\log n)$ if $p \leq \poly(\log n)\cdot n^{-1/2}$
(see \cref{sec:refined} for the relevant lemmas).

If each of the $d$ neighbors of $u$ has probability less than $1/(2d)$ of becoming non-stabilized black in the next $r$ rounds, 
then $u$ has a constant probability of becoming \itodo{What if we write d-active? (and, therefore, it stabilizes with prob. 1/d)}
active (or stabilize).
On the other hand, if there is some neighbor $v$ that has probability $b \geq 1/(2d)$ of becoming non-stabilized black in the next $r$ rounds, we saw above that $v$ becomes $\theta$-active with probability  at least $\poly(b/r) = 1/\poly(\log n)$.
We conclude that, with probability $1/\poly(\log n)$, $u$ is $\poly(\log n)$-active or has some $\poly(\log n)$-active neighbor at some point in the next $r = O(\log\log n)$ rounds.
It follows that $u$ stabilizes with probability $1/\poly(\log n)$ in the next $O(\log n)$ rounds.\footnote{We suspect that a refinement of this argument may be useful for a broader class of graphs.}

When $p > \poly(\log n) \cdot n^{-1/2}$, the last case of the analysis above does not give a poly-logarithmic bound.
A way to overcome this problem is to control how often a vertex can change its state from white to black.
We extend the 2-state MIS process by incorporating such a control mechanism.

We call the new process the \emph{3-color MIS process}.
It consists of two sub-processes running in parallel: 
The first is similar to the 2-state MIS process with the addition of a third color, \emph{grey};
a black vertex now becomes gray instead of white, a gray vertex becomes white after a while, and other vertices treat gray vertices as white.
The transition from gray to white is controlled by the second sub-process, called the \emph{logarithmic switch}.

In the logarithmic switch, each vertex has an on/off binary variable, and a gray vertex changes to white if the switch variable of the vertex is on.
We would like that the logarithmic switch satisfy two basic properties: 
(i)~a vertex switches from off to on every $\Theta(\log n)$ rounds;
and (ii)~it switches from on to off every $O(1)$ rounds.\footnote{The reason why a logarithmic switch suffices, rather than a `double-logarithmic' switch is that, in the induced subgraph on  $O(p^{-1}\log^2n)$ vertices consider in the last case of the analysis of the 2-state MIS process, a constant fraction of vertices have at most $O(\log n)$ neighbors of degree $\Omega(\log^3 n)$.}
However, we do not know how to implement property~(i) using constant states.
We observe that it suffices if property~(i) is satisfied only when $p > \poly(\log n)\cdot n^{-1/2}$; for smaller $p$, a weaker property suffices:
(i$'$)~a vertex switches from off to on after \emph{at most} $O(\log n)$ rounds.
It is not immediately obvious how to implement this distinction, because we want the process to work for all $0\leq p\leq 1$ \emph{without knowing $p$} (or anything else about the graph topology).
We achieve that as follows.

We exploit the fact that if $p > \poly(\log n)\cdot n^{-1/2}$ then the graph has constant diameter (in fact diameter 2).
The logarithmic switch process we devise is similar to the phase clock process \emph{RandPhase} proposed in \cite{EmekK21}.
RandPhase assumes that an upper bound $D$ on the graph diameter is available to the process and uses $D+3$ states.
The core mechanism of 
the logarithmic switch is identical to that of RandPhase for $D = 3$ (not 2!), but the underlying graph may have arbitrary (and unknown) diameter.
The logarithmic switch includes also a mapping of the states to the on/off values of the switch.
Unlike RandPhase which is used for sychronization (it achieve synchronous phases of length $D+\Theta(\log n)$), the purpose of the logarithmic switch is not synchronization, as it is not required that the switch variables of different vertices change simultaneously.

\paragraph{Roadmap.}
The rest of the paper is organized as follows.  
\cref{sec:2-state-3-state-def} contains the definition and some basic properties of the 2-state and 3-state MIS processes. 
\cref{sec:simple-bounds} provides a proof of \cref{thm:intro-simple}. 
\cref{sec:2-state-random-graphs} proves \cref{thm:intro-2s}. 
\cref{sec:log-switch-3-color} defines the 3-color MIS process and proves \cref{thm:intro-3c}.
And \cref{sec:related-work} reviews related work. 
\gtodo{we must update that in the end}

\paragraph{Notation.}
\label{sec:2-state-mis}

Let $G = (V,E)$ be a graph on $n$ vertices.
For each vertex $u\in V$, $N(u) = \{v\colon (u,v)\in E\}$ is the set of neighbors of $u$, and $N^+(u) = N(u)\cup \{u\}$. 
Similarly, for a set of vertices $S\subseteq V$, we define $N(S) = \bigcup_{u\in S}N(u)\setminus S$  and $N^+(S) = \bigcup_{u\in S}N^+(u) = N(S)\cup S$.
For two (not necessarily disjoint) sets $S,T\subseteq V$, we let $E(S,T) = \{(u,v)\in E\colon u\in S, v\in T\}$ be the set of edges with one endpoint in $S$ and the other in $T$.
We also define $E(S) = E(S,S)$. 
By $G[S]$ we denote the induced subgraph of $G$ on $S\subseteq V$, i.e., $G[S] = (S, E(S))$.

\section{The 2-State and 3-State MIS Processes}
\label{sec:2-state-3-state-def}
We define two self-stabilizing distributed graph processes that compute a maximal independent set when applied on any given graph.

\begin{definition}[2-State MIS Process]
    \label{def:2-state-MIS}
    In the \emph{$2$-state MIS process} on graph $G=(V,E)$, each vertex $u\in V$ has a binary state from  the set $\{\black,\white\}$, and all states are updated in parallel rounds.  
    The initial state $c_0(u)$ of vertex $u$ can be arbitrary, and in each round $t=1,2,\ldots,$ $u$'s state is updated from $c_{t-1}(u)$ to $c_t(u)$ according to the following rule.
    
    \begin{algorithm}[H] {\small
        \DontPrintSemicolon
        let $\mathit{NC}_t(u) = \{c_{t-1}(v)\colon v\in N(u)\}$\;
        \uIf{$\big(c_{t-1}(u) = \black$ {\bf and} 
        $\mathit{NC}_t(u) \ni \black
        \big)$ 
        {\bf or} 
        $\big(c_{t-1}(u) = \white$ {\bf and} 
        $\mathit{NC}_t(u)\not\:\!\ni \black
        \big)$}{
            let $c_t(u)$ be a uniformly random state from $\{\black,\white\}$\;}
        \lElse{set $c_t(u) = c_{t-1}(u)$}
    }
    \end{algorithm}
\end{definition}


We say that vertex $u$ is \emph{\textbf{black}} or \emph{\textbf{white}} 
if its state is $\black$ or $\white$, respectively.
We say that $u$ is \emph{\textbf{active}} if 
it is black and has some black neighbor, or it is white and has no black neighbors.

We say that vertex $u$  
is \emph{\textbf{stable}}, if either it is black and has no black neighbors, or it is white and has a neighbor that is black \emph{and stable}.
It is immediate from the update rule that once a vertex becomes stable, it remains stable thereafter, and its state no longer changes.
The \emph{\textbf{stabilization time of vertex $u$}} is the earliest round at the end of which $u$ is stable.
The \emph{\textbf{stabilization time of the process}} is the earliest round at the end of which all vertices are stable.
It is easy to verify that after the stabilization time of the process, the set of black vertices is an MIS of $G$.  

We let $B_t = \{u\in V\colon c_t(u) = \black\}$ be the set of  black vertices at the end of round $t\geq 0$, and let $W_t = V\setminus B_t$ be the set of white vertices.
We let $$A_t = \{u\in B_t \colon N(u) \cap B_t \neq \emptyset \} \cup \{u\in W_t \colon N(u) \cap B_t = \emptyset \}$$ denote the set of active vertices at the end of round $t$.
We let $I_t = \{u \in B_t\colon N(u)\cap B_t = \emptyset\}$ be the set of stable black vertices at the and of round $t$  (note that $I_t$ is an independent set and is a subset of the final MIS).  
Finally, we let $V_t = V \setminus N^+(I_t)$ be the set of vertices that are not stable at the end of round $t$.

\begin{definition}[3-State MIS Process]
    \label{def:3-state-MIS}
    In the \emph{$3$-state MIS process} on $G=(V,E)$, each vertex $u\in V$ has a state from  set $\{\blkone, \blkzero,\white\}$, and the states are updated in parallel rounds.  
    The initial state $c_0(u)$ of $u$ is arbitrary, and in each round $t\geq 1,$ $u$'s state is updated as follows.
    
    \begin{algorithm}[H] {\small
        \DontPrintSemicolon
        let $\mathit{NC}_t(u) = \{c_{t-1}(v)\colon v\in N(u)\}$\;
        \uIf{$c_{t-1}(u) = \blkone$
        {\bf or}
        $\big(c_{t-1}(u) = \blkzero$ {\bf and} 
        $\mathit{NC}_t(u)\not\:\!\ni \blkone
        \big)$
        {\bf or} 
        $\big(c_{t-1}(u) = \white$ 
        \\$\,$
        {\bf and} 
        $\mathit{NC}_t(u)=\{\white\}
        \big)$}{
            let $c_t(u)$ be a uniformly random state from $\{\blkone,\blkzero\}$\;}
        \uElseIf{$c_{t-1}(u) = \blkzero$}
            {set $c_t(u) = \white$\;}
        \lElse   
            {set $c_t(u) = c_{t-1}(u)$}
    }
    \end{algorithm}
\end{definition}

In the 3-state MIS process, we say that a vertex $u$ is \emph{black} when its state is $\blkone$ or $\blkzero$.
Then the \emph{stable} vertices and the \emph{stabilization times} are defined as before. 
Note that the state of a stable black vertex alternates perpetually between states $\blkone$ and $\blkzero$.

In this paper we focus on the 2-state MIS process, but we expect that all our upper bound results should carry over to the 3-state MIS process.

\subsection{Basic Properties of the 2-State MIS Process}
\label{sec:basic-properties-2s}

We show some elementary properties of the 2-state MIS process.
In the analysis, it will be convenient to assume that at the beginning of each round $t\geq 1$, we flip for each vertex $u$ an independent coin $\phi_t(u)$ such that $\Pr[\phi_t(u) = \black]=\Pr[\phi_t(u)=\white]=1/2$. 
Then if $u$ must update its state to a random state in that round, i.e., if $u\in A_{t-1}$, we set $c_t(u) = \phi_t(u)$; while if $u\notin A_{t-1}$, then $\phi_t(u)$ is not used by the algorithm.

The lemmas below apply for any graph 
$G=(V,E)$, 
and the probabilistic statements assume that we know the states of vertices at the end of round $t$ (i.e., $B_t$ or $W_t$ is given). 
The first lemma says than an active vertex $u$ with $k$ active neighbors has probability $\Omega(1/k)$ to become stable black in the next $O(\log k)$ rounds.

\begin{lemma}
\label{lem:active-vertex-st}
  If $u\in A_t$ and $|N(u) \cap A_t|  = k\geq 1$, 
  then the probability that $u\in I_{t+\log(k+1)}$ is at least $(2ek)^{-1}$.  
\end{lemma}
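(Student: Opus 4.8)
The plan is to show that a short, explicit sequence of coin flips forces $u$ into a stable black state, and then lower-bound the probability of that sequence. Set $\ell = \lceil\log(k+1)\rceil$ (so $2^\ell \geq k+1$). First I would describe the "good" event $\EE$ in which, over the rounds $t+1,\dots,t+\ell$, exactly one vertex of $N^+(u)\cap A_t$ ``survives'' as black. Concretely: partition or order the at most $k$ active neighbors of $u$ together with $u$ itself, and ask that in round $t+1$ roughly half of the active vertices in $N^+(u)\cap A_t$ flip to $\white$ and the rest to $\black$, arranged so that no two adjacent surviving-black vertices remain; iterate. A cleaner way to get the clean bound $(2ek)^{-1}$ is the following: require that $u$ flips to $\black$ in round $t+1$ (probability $1/2$), and that over rounds $t+1,\dots,t+\ell$ every active neighbor of $u$ eventually flips to $\white$ and stays there. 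But staying white requires having a black neighbor, so the argument must be a touch more careful — the standard trick is to let the neighbors' coins "knock each other out" logarithmically.

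The key steps, in order, are: (1) Condition on $B_t$, so that $A_t$, and in particular $u$ and its active neighbor set $S = N(u)\cap A_t$ with $|S|=k$, are determined. (2) Define the target event: $u\in B_{t+1}$ (from $\phi_{t+1}(u)=\black$), and within $\ell$ rounds all of $S$ has been absorbed into $\white$ with a stable black witness, while $u$ acquires no black neighbor. The slick implementation: in each round, the still-active vertices in $S$ each flip independently; a vertex of $S$ that flips $\white$ while at least one neighbor in $S$ flips $\black$ becomes "covered". Since in each round every still-uncovered vertex of $S$ flips $\white$ with probability $1/2$, after $\ell$ rounds the probability that a particular vertex of $S$ is still uncovered is at most $2^{-\ell}\le 1/(k+1)$; by a union bound over the $k$ vertices of $S$, with probability $\ge 1 - k/(k+1) = 1/(k+1)$ all of $S$ is white by round $t+\ell$. (3) Combine with $\phi_{t+1}(u)=\black$ and the requirement that none of $u$'s neighbors is black at the relevant moment — here one has to be slightly careful that $u$ staying black is consistent with neighbors flipping; the point is that once all of $S$ is white and $u$ is black with no black neighbor, $u\in I$. (4) Collect the probability: the independent factor for $u$ is $1/2$, the factor for $S$ being cleared is at least $1/(k+1)$, and accounting for the conditioning/ordering overhead gives $(2e(k+1))^{-1}$ or better, which is at most $(2ek)^{-1}$ for $k\ge 1$; I would track the constants so the final bound is exactly $(2ek)^{-1}$ as stated, the factor $e$ presumably absorbing a $(1-1/(k+1))^{k}\ge e^{-1}$–type estimate from a more refined version of step (2).

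The main obstacle is step (2)/(3): ensuring that a neighbor of $u$ which flips to $\white$ actually \emph{stays} white (i.e., becomes stable white or at least is non-black at the decisive round), because a white vertex with no black neighbor is active and may flip back. The honest fix is not to demand that $S$ be cleared permanently, but only that there is \emph{some} round $\tau \le t+\ell$ at which $u$ is black and all of $N(u)$ is white at the end of round $\tau-1$ — at that moment $u\in I_\tau$ and is stable forever. So the event should be phrased as: $\phi_{\tau}(u) = \black$ for the first such $\tau$, together with a logarithmic-depth "elimination tournament" on $S$ that drives all of $S$ white simultaneously at the end of some round $\le t+\ell-1$. Formalizing "simultaneously" is where the $\log(k+1)$ comes from and where the bookkeeping is delicate; I would handle it by the union-bound computation above applied to the event that each vertex of $S$ has flipped to white at least once by round $t+\ell-1$ and pinning the last such flips — or, more robustly, by induction on $k$, splitting $S$ into the vertices that flip black vs white in round $t+1$ and recursing on each part with a smaller parameter, which naturally produces the $\log(k+1)$ depth and a clean product bound.
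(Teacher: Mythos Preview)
Your proposal circles around the right event but misses the one-line simplification that makes the proof work. You require only $\phi_{t+1}(u)=\black$ and then worry, correctly, that a white neighbor with no black neighbor becomes active and may flip back. You try to resolve this with an elimination tournament, an induction on $k$, or a ``simultaneous clearing'' argument, none of which you actually carry out.

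The fix is much simpler: require $\phi_{t+1}(u)=\cdots=\phi_{t+r}(u)=\black$ for \emph{all} $r=\lceil\log(k+1)\rceil$ rounds, not just the first. This costs $2^{-r}$ instead of $1/2$, but it guarantees that $u$ is black at the end of every round $t+1,\ldots,t+r$. Once $u$ is black, any white neighbor of $u$ has a black neighbor and is therefore non-active, so it \emph{stays} white --- your obstacle disappears. Each $v\in N(u)\cap A_t$ is then either already white or is active-black (since its neighbor $u$ is black); in the latter case it flips each round, and the first time $\phi_j(v)=\white$ it becomes and remains white. So the good event is simply: $u$'s coins are all black and, for each of the $k$ active neighbors $v$, not all of $v$'s coins are black. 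These are independent, giving
\[
(1/2)^r\cdot\bigl(1-(1/2)^r\bigr)^k \;\geq\; (1/2)^r\cdot e^{-k/(2^r-1)} \;\geq\; \frac{1}{2k}\cdot\frac{1}{e},
\]
using $(1-1/m)^{m-1}\ge e^{-1}$ and $2^r\ge k+1$. This is exactly where the factor $e$ comes from; your union-bound route would only give $(1/2)^r\cdot\frac{1}{k+1}\approx \frac{1}{2k^2}$, which is too weak.
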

\begin{proof}
  Let $r = \lceil\log(k+1)\rceil$.
  The probability that $u\in I_{t+r}$ is lower bounded by the probability that  $\phi_{t+1}(v)=\dots=\phi_{t+r}(v)=\black$ holds for $v=u$ and does not hold for any  $v\in N(u)\cap A_t$, which is
  \begin{equation}
    \label{eq:coins_2}
    (1/2)^r\cdot \left(1-(1/2)^r\right)^k
    \geq
    (1/2)^r\cdot e^{-k/(2^r-1)}
    \geq
    ({1}/{2k})\cdot ({1}/{e})
    .
  \end{equation}
  For the first inequality we used the fact $(1-1/n)^{n-1}\geq e^{-1}$, and for the second we used that $\log(k+1)\leq r\leq \log(k)+1$. 
\end{proof}

The next statement is a generalization of \cref{lem:active-vertex-st} to multiple active vertices $u_1,\ldots,u_\ell$.
We will apply this result to the set of active neighbors of a vertex $u$, to lower bound the probability that $u$ is stable after a logarithmic number of rounds (because a neighbors becomes stable black).
The proof can be found in \cref{sec:proof-neighbors2-st}.

\begin{lemma}\label{lem:neighbors2-st}
  Suppose that $u_1,\ldots,u_\ell\in A_t$, and
  $|N(u_i) \cap A_t|  = k_i > 0$, for each $1\leq i\leq \ell$.
  Then the probability that $\{u_1,\ldots,u_\ell\}\cap I_{t+\log(\max_i k_i+1)}\neq \emptyset$ is at least $(1/5)\cdot\min\left\{1,\, \sum_i(2k_i)^{-1}\right\}$.
\end{lemma}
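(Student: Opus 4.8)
The plan is to reduce the multi-vertex statement to a single ``good event'' that can be analyzed via the coin flips $\phi_t(\cdot)$, mimicking the proof of \cref{lem:active-vertex-st} but now aiming to make \emph{some} $u_i$ stabilize black. Set $r=\lceil\log(\max_i k_i+1)\rceil$. For each $i$, consider the event $\EE_i$ that $\phi_{t+1}(u_i)=\dots=\phi_{t+r}(u_i)=\black$ and that no vertex in $N(u_i)\cap A_t$ has all of its coins $\phi_{t+1},\dots,\phi_{t+r}$ equal to $\black$. On $\EE_i$, vertex $u_i$ is black from round $t+1$ on, and each of its active neighbors flips to $\white$ at some round in $[t+1,t+r]$ and then, having no black neighbor at that moment \emph{unless some other $u_j$ keeps it}, tends to stay white --- so I need to be a little careful: the clean statement is that on $\EE_i$, $u_i\in I_{t+r}$ \emph{provided} that also none of the neighbors is ``rescued'' into black by a path outside $\{u_i\}$. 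Rather than chase that, I would instead directly lower-bound $\Pr[\bigcup_i\{u_i\in I_{t+r}\}]$ by $\Pr[\bigcup_i \EE_i']$ where $\EE_i'$ is the event that $u_i$'s coins are all $\black$ over the window \emph{and} for every $v\in N(u_i)\cap A_t$ at least one coin in the window is $\white$; on $\EE_i'$ one checks by induction on rounds that $u_i$ stays black and all its active neighbors become and remain white (a white neighbor of the black vertex $u_i$ is never active), hence $u_i\in I_{t+r}$. So it suffices to bound $\Pr[\bigcup_i\EE_i']$ from below.

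The core estimate is then a union-bound-type inequality for the $\EE_i'$. Let $q=2^{-r}$, so $1/(2(\max_i k_i+1))\le q\le 1/(\max_i k_i+1)$, and note $\Pr[\EE_i']=q(1-q)^{k_i}\ge q\cdot e^{-1}\ge 1/(2e k_i)$ by the computation in \eqref{eq:coins_2}. The events $\EE_i'$ are positively/negatively correlated in a controllable way: they are \emph{increasing} in the indicator variables ``$\phi_{t+s}(u_i)=\black$'' and \emph{decreasing} in ``$\phi_{t+s}(v)=\black$'' for $v$ a neighbor. Two clean routes to finish: (a)~use inclusion–exclusion truncated at the second term, $\Pr[\bigcup_i\EE_i']\ge \sum_i\Pr[\EE_i']-\sum_{i<j}\Pr[\EE_i'\cap\EE_j']$, and bound $\Pr[\EE_i'\cap\EE_j']\le \Pr[\EE_i']\Pr[\EE_j']\cdot(\text{correction})$ using that the only dependence is through shared neighbors and the (at most one) coin of $u_i$ that could also be required of $u_j$ if $u_i\in N(u_j)$; or (b)~condition on all coins of the neighbor sets $\bigcup_i N(u_i)\cap A_t$ first, under which the $\EE_i'$ become independent events (each now depending only on the private coins of $u_i$, since on the ``bad'' side they are already determined), so that $\Pr[\bigcup_i\EE_i'\mid \text{neighbor coins}]=1-\prod_i(1-p_i)$ where $p_i$ is $q$ if the neighbor-coin condition for $i$ holds and $0$ otherwise, and then take expectations. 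Route (b) is cleaner: after conditioning, $1-\prod_i(1-p_i)\ge \tfrac12\min\{1,\sum_i p_i\}$ by the standard elementary inequality, and taking expectation over the neighbor coins, $\Exp[\sum_i p_i]=\sum_i q\,\Pr[\text{all }k_i\text{ neighbor coins have a }\white]=\sum_i q(1-q)^{k_i}=\sum_i\Pr[\EE_i']\ge \sum_i 1/(2ek_i)$; a further application of a convexity/Jensen or a direct ``$\Exp[\min\{1,X\}]\ge \tfrac{?}{}\min\{1,\Exp X\}$'' step (valid because $X=\sum_i p_i$ is a sum of bounded independent terms, or simply because $\min\{1,\cdot\}$ is concave) converts $\Exp[\tfrac12\min\{1,X\}]$ into $c\cdot\min\{1,\sum_i 1/(2k_i)\}$, and chasing the constants one lands at the claimed $\tfrac15\min\{1,\sum_i(2k_i)^{-1}\}$.

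The main obstacle, and the one I would spend the most care on, is the conditioning/independence step: the neighbor sets $N(u_i)\cap A_t$ need not be disjoint, and a given $u_i$ may itself be a neighbor of some $u_j$, so the coins $\phi_{t+s}(u_i)$ simultaneously play the ``private'' role for $\EE_i'$ and a ``neighbor'' role for $\EE_j'$. The fix is to order the $u_i$ and, when revealing ``neighbor coins,'' reveal the coins of $u_i$ only in its neighbor role for indices $j<i$; equivalently, define the conditioning $\sigma$-algebra to contain all coins of $\bigl(\bigcup_i N(u_i)\cap A_t\bigr)\setminus\{u_1,\dots,u_\ell\}$ and handle the at most $\binom{\ell}{2}$ ``internal'' edges by a small direct correction, or simply observe that on the relevant events an internal neighbor $u_j\in N(u_i)\cap A_t$ is \emph{required to have a $\white$ coin} by $\EE_i'$ while $\EE_j'$ \emph{requires all $\black$ coins}, so $\EE_i'$ and $\EE_j'$ are disjoint and contribute nothing to the ``overlap'' --- which actually \emph{helps}. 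Once that bookkeeping is pinned down, the remaining inequalities ($1-\prod(1-p_i)\ge\tfrac12\min\{1,\sum p_i\}$, concavity of $\min\{1,\cdot\}$, and the $e$ vs.\ $5$ constant slack) are routine, and the choice $r=\lceil\log(\max_i k_i+1)\rceil$ is exactly what makes $q(1-q)^{k_i}\ge 1/(2ek_i)$ go through uniformly in $i$.
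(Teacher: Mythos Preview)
Your central quantitative step is wrong. With the single window $r=\lceil\log(\max_i k_i+1)\rceil$ you have $q=2^{-r}\le 1/(\max_i k_i+1)$, so $\Pr[\EE_i']=q(1-q)^{k_i}\le q$ is of order $1/\max_j k_j$ for \emph{every} $i$; the claimed inequality ``$q\cdot e^{-1}\ge 1/(2ek_i)$'' needs $q\ge 1/(2k_i)$ and fails whenever $k_i\ll\max_j k_j$. (Take $\ell=2$, $k_1=1$, $k_2=100$: then $q=2^{-7}=1/128$ and $\Pr[\EE_1']\approx 1/128$, nowhere near the asserted $1/(2e)$.) Hence $\sum_i\Pr[\EE_i']$ is only $\Theta(\ell/\max_j k_j)$, which can be arbitrarily smaller than the target $\sum_i 1/(2k_i)$, and no amount of correlation bookkeeping in routes~(a)/(b) recovers that gap. (Separately, the ``$\min\{1,\cdot\}$ is concave'' step at the end points the wrong way: concavity gives $\Exp[\min\{1,X\}]\le\min\{1,\Exp X\}$, not $\ge$.)

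The missing idea is to give each $u_i$ its \emph{own} window $r_i=\lceil\log(k_i+1)\rceil$ and work only with the own-coin events $\EE_i=\{\phi_{t+1}(u_i)=\dots=\phi_{t+r_i}(u_i)=\black\}$. These satisfy $\Pr[\EE_i]=2^{-r_i}\ge 1/(2k_i)$ and are mutually independent (they involve disjoint coin sets), so $\Pr\!\big[\bigcup_i\EE_i\big]\ge(1-e^{-1})\min\{1,\sum_i 1/(2k_i)\}$ immediately. For the neighbor condition, first sort so that $k_1\le\dots\le k_\ell$ (hence $r_1\le\dots\le r_\ell$), condition on the smallest index $j$ with $\EE_j$, and require that every $v\in N(u_j)\cap A_t$ have a white coin within the window of length $r_j$. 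Any internal neighbor $u_i$ with $i<j$ satisfies this for free, since $\bar\EE_i$ already says its coins are not all black over the \emph{shorter} window $r_i\le r_j$; the remaining neighbors are independent of the conditioning and contribute $(1-2^{-r_j})^{k_j}\ge e^{-1}$. The factor $e^{-1}(1-e^{-1})>1/5$ gives the constant. The per-vertex windows are what make the $\sum_i 1/k_i$ accounting go through, and the sorting is what cleanly dissolves the $u_i\in N(u_j)$ overlap you were worried about---both at once.
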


\section{Simple Bounds for the 2-State MIS Process}
\label{sec:simple-bounds}

We show some simple bounds on the stabilization time of the 2-state MIS process on certain graph families, namely, the complete graph and trees (or more generally, graphs of bounded arboricity). 
We also show a basic upper bound in terms of the maximum degree on a general graph.

\begin{theorem}
    \label{thm:clique}
    The stabilization time of the 2-state MIS process on the complete graph $K_n=(V,E)$ is $O(\log n)$ in expectation and $O(\log^2n)$ w.h.p. 
    More concretely, for any $k>0$, the stabilization time is at least $k\cdot \log n$ with probability $2^{-\Theta(k)}$. 
\end{theorem}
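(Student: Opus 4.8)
The plan is to track the number of black vertices $|B_t|$ on $K_n$. The key observation is that on a clique, a vertex is stable iff it is the unique black vertex, so the process stabilizes exactly when $|B_t|=1$ is reached; and as long as $|B_t|\ge 2$, every black vertex is active, while every white vertex is active iff $|B_t|=0$. I would split the analysis into three regimes according to the value of $b_t:=|B_t|$.

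\textbf{Step 1: from $b_t=0$ to $b_t\ge 1$.} If all vertices are white then all $n$ vertices flip independently with probability $1/2$, so $\Pr[b_{t+1}=0]=2^{-n}$, and with overwhelming probability $b_{t+1}=\Theta(n)$. So this state is essentially never revisited after the first round and costs $O(1)$ rounds.

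\textbf{Step 2: from $b_t=\Theta(n)$ down to $b_t=O(\log n)$ (the ``halving'' phase).} When $2\le b_t\le n$, all $b_t$ black vertices are active and flip to white w.p.\ $1/2$; white vertices stay white (they have a black neighbor) unless $b_t$ would drop to $0$. One cleanly analyzable sub-case: if at some round $b_t\ge 2$ then $\Exp[b_{t+1}\mid b_t]=b_t/2$ (each black vertex stays black w.p.\ $1/2$, no white vertex turns black unless all blacks vanish), and by a Chernoff bound $b_{t+1}\le \tfrac34 b_t$ w.h.p.\ as long as $b_t=\Omega(\log n)$. Hence after $O(\log n)$ rounds we reach $b_t=O(\log n)$ w.h.p. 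This also gives the expected-time bound for this phase: a supermartingale-type argument on $\log b_t$ shows the expected number of rounds to bring $b_t$ below any threshold $C$ is $O(\log n)$. I must be a little careful that $b_t$ cannot jump back up: once $b_t\ge 1$, a white vertex turns black only in a round where every black vertex simultaneously flips to white, i.e.\ with probability $2^{-b_t}$, which is negligible while $b_t$ is large; I would condition on this not happening over the whole phase.

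\textbf{Step 3: from $b_t=O(\log n)$ (or more generally small $b_t\ge 2$) down to $b_t=1$.} This is the bottleneck and the source of the $\log^2 n$ w.h.p.\ bound and the $2^{-\Theta(k)}$ tail. When $b_t=m\ge 2$ is small, in one round: each of the $m$ blacks independently flips to white w.p.\ $1/2$; if all $m$ flip to white (prob.\ $2^{-m}$) then all $n-m$ whites also flip, landing at $b_{t+1}=\Theta(n)$ w.h.p.; otherwise $b_{t+1}=\mathrm{Binomial}(m,1/2)$ conditioned on being $\ge 1$. So the relevant quantity is: starting from $m$ black vertices with $m\ge 2$, the probability of hitting $b=1$ \emph{before} $b=0$ in one step is $m\cdot 2^{-m}$, while the chance of ``resetting'' to $\Theta(n)$ is $2^{-m}$. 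The worst case is $m=2$: each round, from $b=2$ we go to $b=1$ w.p.\ $1/2$, stay at $b=2$ w.p.\ $1/2$ minus the tiny reset probability $1/4\cdot[\text{all flip down}]$ — actually from $b=2$: both flip down w.p.\ $1/4$ (reset), exactly one flips down w.p.\ $1/2$ (reach $b=1$, done), neither flips down w.p.\ $1/4$ (stay at $b=2$). So from $b=2$ we finish in the next round with probability $1/2$, independently each round. Therefore the number of rounds spent with $b_t$ small is geometric-like with success probability $\Theta(1)$ per round \emph{per visit}, but each reset (prob.\ $2^{-m}\le 1/4$ while small, and far smaller once $b_t$ grows) sends us back to Step 2, costing another $O(\log n)$ rounds. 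Chaining $k$ such ``attempts'' gives stabilization time $\ge k\log n$ with probability $2^{-\Theta(k)}$ — because surviving $k$ attempts requires $k$ independent low-probability escapes (each ``attempt'' ends in $O(\log n)$ rounds and succeeds with constant probability of reaching $b=1$). Taking $k=\Theta(\log n)$ yields the $O(\log^2 n)$ w.h.p.\ bound, and $k=O(1)$ in expectation after summing the geometric series gives $O(\log n)$ expected time.

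\textbf{Main obstacle.} The delicate point is handling the interaction between the ``reset'' events (all black vertices flipping to white, repopulating $B_t$) and the halving phase, i.e.\ formalizing the ``each attempt costs $O(\log n)$ rounds and succeeds with constant probability'' claim into a clean geometric bound with the stated $2^{-\Theta(k)}$ tail. Concretely I would define a round to be a ``fresh start'' whenever $b_t$ jumps back to $\Omega(n)$, bound the number of rounds between a fresh start and either the next fresh start or stabilization by $O(\log n)$ w.h.p.\ (via Step 2 plus Step 3), show each such epoch reaches stabilization with probability $\Omega(1)$ independently of the past (by the $m=2$ analysis, or more carefully a uniform-over-small-$m$ bound), and conclude by a union/Chernoff argument over epochs. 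The lower-bound direction ($\ge k\log n$ w.p.\ $2^{-\Theta(k)}$) needs the matching statement that it is genuinely likely to spend $\Omega(\log n)$ rounds in the halving phase before first reaching small $b_t$, which follows from the fact that $b_t$ decreases by at most a constant factor per round and concentration keeps it from decreasing much faster.
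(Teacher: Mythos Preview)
Your approach is correct and essentially matches the paper's: both view the process as a sequence of epochs, each lasting $\Theta(\log n)$ rounds and terminating in stabilization with constant probability, which yields the $2^{-\Theta(k)}$ tail. The paper's formulation is somewhat cleaner---rather than three phases, it calls a round \emph{critical} when $|B_t|\le 1$ and directly bounds the conditional probability that a critical round is stable via the last-step computation $\Pr[|B_j|=1 \mid |B_j|\le 1,\ |A_{j-1}|=i] = i/(i+1)\in[2/3,1)$ (with a matching upper bound $\le 17/21$), together with inter-critical time $\Theta(\log n)$, thus avoiding your separate treatment of the halving and small-$b$ regimes; also note a harmless timing slip in your Step~3: when all $m$ blacks flip to white, the whites do not flip in that same round (they still saw black neighbors at its start), so $b_{t+1}=0$ and only $b_{t+2}\approx n/2$.
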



\begin{proof}
    We call round $t$ \emph{critical} if $|B_t|\leq 1$, and we call it \emph{stable} if $|B_t| = 1$.
    Let $p_a$ be the probability that the next critical round is stable, given that $|A_t| = a \geq 2$.
    Note that in graph $K_n$, $A_t= B_t$ if $|B_t|>1$, $A_t = \emptyset$ if $|B_t|=1$, and $A_t=V$ if $B_t = \emptyset$; thus $|A_t| \neq 1$.
    We argue that for any $a\geq 2$,
    \[
        2/3\leq p_a\leq 17/21 
        .
    \]
    The lower bound follows from the observation that, for any $i\geq 2$ and $j\geq1$, the conditional probability that round $j$ is stable, given that it is critical and that $|A_{j-1}| = i$, is $\frac{\binom{i}{1}2^{-i}}{\binom{i}{1}2^{-i} + 2^{-i}} = \frac{i}{i+1} \geq 2/3$, since $i\geq 2$.
    For the upper bound we observe that, for any $i\geq 3$ and $j\geq1$, the conditional probability of $|B_{j}| \in \{2,0\}$, given that $|B_{j}| \leq 2$ and that $|A_{j-1}| = i$, is 
    $\frac{\binom{i}{2}2^{-i} + 2^{-i}}{\binom{i}{2}2^{-i} + \binom{i}{1}2^{-i} + 2^{-i}} 
    = 
    \frac{i^2-i + 2}{i^2 + i +2} \geq 4/7$.    
    Also, $p_{2} = 2/3<17/21$.
    Then, for any $a\geq 3$, we have
    $1-p_a \geq (4/7)\cdot(1-p_2)$, which implies
    $p_a\leq 17/21$.

    Next, consider the number of rounds $r$ from a non-stable critical round (when all nodes are white) until the next critical round.
    The probability that $r>k$ is lower and upper bounded by
    \[
        1 - e^{-n2^{-k}} \leq 1-(1-2^{-k})^n \leq n 2^{-k}
        .
    \]

    Combining the above we obtain that (i) from any given non-stable round, the probability that a stable round is reached in at most $k = \log n + 1$ rounds is at least $2/3 - n 2^{-k} \geq 1/6$; (ii)~from any given non-stable critical round, the probability that the next critical round is non-stable and is reached in more than $k = \log n - 2$ rounds is at least $1 - 17/24 - e^{-n2^{-k}} > 1/6$; and (iii)~assuming round $t=0$ is not critical, the probability that the first critical round is non-stable is at least $1 - 17/24$.
    These statements, together, imply that the stabilization time is at least $k\log n$ with probability $2^{-\Theta(k)}$.
    And from that, the expectation and high-probability bounds follow.
 \end{proof}

\begin{remark}
    \label{rem:union-of-cliques}  
     From \cref{thm:clique}, it is immediate that the \emph{expected} stabilization time of the 2-state MIS process is $\Theta(\log^2 n)$ on a graph $G$ that is the disjoint union of $\sqrt{n}$ cliques $K_{\sqrt{n}}$.
     The same bound holds also w.h.p.
\end{remark}

\begin{remark}
    \label{rem:clique-3state}    
    A similar analysis as for \cref{thm:clique} gives an upper bound of $O(\log n)$ on the stabilization time of the \emph{3-state} MIS process on $K_n$, both in expectation and \emph{w.h.p.}
    The reason is that once $B_t\neq\emptyset$ then $B_{t'}\neq\emptyset$ for all $t'\geq t$ (thus the next critical round is stable).
\end{remark}

\begin{theorem}
    \label{thm:trees}
    The stabilization time of the 2-state MIS process on any graph $G=(V,E)$ of bounded arboricity (e.g., $G$ is a tree)
    is $O(\log n)$ w.h.p. 
\end{theorem}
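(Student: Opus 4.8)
The plan is to exploit the fact that a graph with arboricity $a$ has the property that every subgraph has a vertex of degree at most $2a$, hence a constant fraction of vertices in any induced subgraph have degree $O(a)=O(1)$. I would measure progress by the size of the set $V_t$ of non-stable vertices, and show that in $O(\log n)$ rounds a constant fraction of $V_t$ stabilizes w.h.p.; iterating $O(\log n)$ such phases gives the $O(\log n)$ bound overall.

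First I would set up the potential argument. Let $d=O(a)$ be a degree threshold such that at least, say, half of the vertices of $G[V_t]$ have degree at most $d$ in $G[V_t]$ (this follows from the arboricity bound and Markov's inequality on the sum of degrees $=2|E(V_t)|\le 2a|V_t|$). Call these the \emph{low-degree} non-stable vertices. For a fixed low-degree vertex $u$ I want to show it has constant probability of stabilizing within $O(\log n)$ rounds. The idea: within a short window, $u$ either is already active, or becomes active because one of its black neighbors turns white while $u$ stays white, or has a neighbor that becomes active. In all cases $u$ has at most $d$ active neighbors in $G[V_t]$ (its non-stable neighbors), so once $u$ is $d$-active, \cref{lem:active-vertex-st} gives it probability $\Omega(1/d)=\Omega(1)$ of becoming stable black in the next $O(\log d)=O(1)$ rounds. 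To guarantee $u$ becomes active (or stabilizes) within a short window: if $u$ is white with a black neighbor $v$, then with probability $1/2$, $v$ (if active) turns white or (if $v$ is stable black) $u$ already is stable. Tracking this carefully over $O(\log n)$ rounds, a standard coupling/stopping-time argument shows $u$ stabilizes with probability $\Omega(1)$.

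The next step is the concentration/independence issue: different low-degree vertices $u$ are correlated through shared neighbors, so I cannot just multiply probabilities. Here I would use the bounded-arboricity structure again — or more simply, partition the low-degree vertices into a constant number of classes, within each of which the relevant "witness" events are determined by disjoint sets of coins, or use a martingale/Azuma argument on the number of stabilized low-degree vertices. Since each low-degree vertex stabilizes with constant probability and the dependency neighborhoods are of bounded size (degree $O(a)$, so second-neighborhood $O(a^2)=O(1)$), a Chernoff-type bound with bounded dependence (e.g., via the Lovász Local Lemma-style partition, or Janson's inequality, or simply Azuma on a suitable exposure martingale) shows that $\Omega(|V_t|)$ low-degree vertices stabilize within the phase w.h.p. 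This reduces $|V_t|$ by a constant factor per phase.

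Finally I would chain the phases: after $O(\log n)$ phases, each of length $O(\log n)$ rounds — wait, that would give $O(\log^2 n)$. To get $O(\log n)$ I instead argue within a single $O(\log n)$-round interval: fix the interval of length $C\log n$; by the above, each non-stable vertex that is ever low-degree during the interval stabilizes with probability $\ge 1-1/\poly(n)$ (amplifying the constant success probability over $\Theta(\log n)$ independent sub-windows), so by a union bound over the $\le n$ vertices, \emph{all} currently-low-degree vertices stabilize w.h.p. But stabilizing the low-degree vertices can turn high-degree vertices into low-degree ones (their non-stable degree drops), and newly white-and-stably-dominated vertices stabilize too; one must argue that peeling off low-degree vertices repeatedly exhausts $V$ in $O(\log n)$ rounds total because the "arboricity peeling" depth — iteratively removing vertices of degree $\le 2a$ — has no a priori bound, which is the subtlety. \textbf{I expect this last point — controlling how many peeling layers there are and hence bounding the total number of rounds by $O(\log n)$ rather than $O(\log n)$ per layer — to be the main obstacle.} The resolution is likely that the process is "parallel": one does not wait for layer $i$ to fully stabilize before layer $i+1$ makes progress; instead, in every round a constant fraction of \emph{all} remaining non-stable vertices are low-degree and each independently has constant probability of stabilizing soon, so $|V_t|$ shrinks geometrically on the $O(\log n)$ timescale regardless of layer structure. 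Making that simultaneous-progress argument rigorous, with the right dependency-bounded concentration inequality, is the technical heart of the proof.
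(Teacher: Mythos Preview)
Your core intuition is right --- in any induced subgraph $G[V_t]$, a constant fraction of vertices have degree at most $d=O(a)=O(1)$, and each such vertex has a constant probability of stabilizing within the next $O(\log d)=O(1)$ rounds --- but you are making the argument much harder than it needs to be, and the ``main obstacle'' you identify is not an obstacle at all.

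The gap is that you believe concentration is required to turn per-vertex constant probability into geometric shrinkage of $|V_t|$. It is not. The paper simply shows
\[
\Exp\bigl[|V_{t+r}|\ \bigm|\ V_t\bigr]\le (1-\epsilon)\,|V_t|
\]
for some constant $\epsilon>0$ and $r=\lceil\log(d+1)\rceil+1=O(1)$, by exactly your reasoning: at least $|V_t|/(d+1)$ vertices of $V_t$ are low-degree; each is in $A_t\cup A_{t+1}$ with probability at least $2^{-d}$ (all its $\le d$ non-stable neighbors turn white); and then \cref{lem:active-vertex-st} makes it stable black with probability $\ge(2ed)^{-1}$. Iterating the expectation inequality gives $\Exp[|V_{rt}|]\le(1-\epsilon)^t n$, and for $t=\Theta(\log n)$ this is $\le n^{-2}$. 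A single application of Markov's inequality then yields $\Pr[|V_{rt}|\ge 1]\le n^{-2}$. No Azuma, no Janson, no bounded-dependence Chernoff, and no union bound over vertices.

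This also dissolves your ``peeling layers'' worry: you never need to track layers, because the inequality above is conditional on whatever $V_t$ happens to be at that moment. The arboricity bound guarantees low-degree vertices exist in \emph{every} induced subgraph, so geometric decay in expectation holds at every phase automatically; there is nothing sequential to wait for. Your attempt to amplify to $1-1/\poly(n)$ per vertex and union-bound is exactly the detour that creates the spurious $O(\log^2 n)$ versus $O(\log n)$ tension.
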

\begin{proof}
    Recall that the arboricity $\lambda$ of $G$ is the minimum number of forests into which its edges can be partitioned, and is equal up to a factor of 2 to the maximum average degree in any subgraph~\cite{Nash-Williams64}.
    Suppose that the average degree of any subgraph of $G$ is at most $d \leq 2\lambda$.
    Let $S_t$ be the subset of $V_t$ consisting of of all vertices $u\in V_t$ with $|N(u)\cap V_t| \leq d$. 
    Then $|S_t| \geq |V_t|/(d+1)$.
    If $u\in S_t\setminus A_t$ and $|N(u)\cap V_t| = d_u$, the probability that $N(u)\subseteq W_{t+1}$ is $2^{-d_u}\geq 2^{-d}$.
    Thus, for each $u\in S_t$, the probability that $u\in A_t\cup A_{t+1}$ is at least $2^{-d}$.
    And if $u\in A_t\cup A_{t+1}$, \cref{lem:active-vertex-st} gives that $u\in I_{t+\log(d+1)+1}$ with probability at least $(2ed)^{-1}$.
    It follows 
    \[
        \Exp\left[|V_{t+\log(d+1)+1}| \ \middle|\ |V_t|\right] 
        \leq
        |V_t| - (2ed)^{-1}\cdot 2^{-d}\cdot |V_t|/(d-1)
        \leq
        (1-\epsilon)\cdot |V_t|,
    \]
    for some constant $\epsilon = \epsilon(d)$.
    Let $r = \log(d+1)+1$. 
    Applying the above inequality iteratively, we obtain $\Exp[|V_{rt}|] \leq (1-\epsilon)^r n \leq e^{-\epsilon r}n$. 
    Thus for $t = 3\epsilon^{-1}\ln n$, $\Exp[|V_{rt}|] \leq n^{-2}$, and by Markov's inequality, $\Pr[|V_{rt}| \geq 1] \leq n^{-2}$, which implies the lemma.
    \end{proof}


\begin{theorem}
    \label{thm:UB-delta}
    The stabilization time of the 2-state MIS process on any graph $G=(V,E)$ of maximum degree $\Delta$ is at most $O(\Delta\log n)$ w.h.p. 
\end{theorem}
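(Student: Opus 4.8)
The plan is to show that in any window of $O(\Delta)$ consecutive rounds, each vertex that is not yet stable has a constant probability of becoming stable (independently enough across a suitable subset), and then iterate $O(\log n)$ times and apply a union bound. Concretely, I would argue that every vertex of $V_t$ that is \emph{active} has at most $\Delta$ active neighbors, so by \cref{lem:active-vertex-st} it becomes stable black in the next $\log(\Delta+1)$ rounds with probability at least $(2e\Delta)^{-1}$. The issue is that a vertex in $V_t$ need not be active; it may be white with a (non-stable) black neighbor, or black with a black neighbor that keeps flipping. So first I would show that within $O(\Delta)$ rounds, either $V_t$ shrinks by a constant factor, or every vertex in $V_t$ has become active at least once during the window.

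The key step is a potential/progress argument localized around a single non-stable vertex $u$. Fix $u\in V_t$. If $u$ is white and all its neighbors are white, it is active. Otherwise $u$ has at least one non-stable black neighbor (a stable black neighbor would make $u$ stable). I would track the number of black neighbors of $u$: each black neighbor $v$ of $u$ is itself active (since $u\sim v$ and — wait, only if $u$ is also black; if $u$ is white then $v$ black need not be active). Rather than chase cases, I would instead use the following cleaner mechanism. Consider $N^+(u)$, of size at most $\Delta+1$. Look at the event that, over $\Delta+1$ consecutive rounds, the independent coins $\phi$ assigned to vertices in $N^+(u)$ realize a specific pattern that forces $u$ to stabilize: e.g., there is a round in the window where $u$ itself, if active, flips to black and simultaneously all of $u$'s neighbors that are active flip to white, and this state then persists. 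Since $|N^+(u)|\le \Delta+1$, such a favorable configuration over a single round has probability at least $2^{-(\Delta+1)}$, and by then $u$ is stable (either $u$ became stable black, or one of its neighbors did, or $u$ was already handled). I would be careful that $u$ need not be active every round, but the crucial observation is: as long as $u\in V_t$, in \emph{every} round at least one vertex in $N^+(u)$ is active (if $u$ is active we are done; if $u$ is inactive white, some neighbor is black and non-stable, hence active or has a black neighbor... ) — this dichotomy is what needs to be pinned down carefully.

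Once the single-round (or $O(\log\Delta)$-round) success probability $q=q(\Delta)\ge 2^{-O(\Delta)}$ is established for each $u\in V_t$, I would conclude $\Exp[|V_{t+r}| \mid |V_t|] \le (1-q)|V_t|$ for an appropriate $r=O(\Delta)$ (or $r=O(\log\Delta)$ in the optimistic version), exactly as in the proof of \cref{thm:trees}: iterate the contraction $\Theta(q^{-1}\log n)$ times so that $\Exp[|V_T|]\le n^{-2}$, then apply Markov and a union bound to get $|V_T|=0$ w.h.p. Matching the claimed $O(\Delta\log n)$ bound requires the per-vertex success probability to be $\Omega(1/\poly(\Delta))$ rather than $2^{-\Theta(\Delta)}$, so the main obstacle is getting a $\poly(\Delta)^{-1}$ lower bound on the stabilization probability in an $O(\log\Delta)$-round window: the natural route is to only demand the right coins on $u$ and its \emph{active} neighbors (of which there are at most $\Delta$), using \cref{lem:active-vertex-st} directly, and to handle the rounds in which $u$ is temporarily inactive by noting that $u$ can be ``reawakened'' only through its black neighbors, whose number I would bound and whose own stabilization (via \cref{lem:neighbors2-st} applied to $N(u)$) happens with probability $\Omega(1/\Delta)$ in $O(\log\Delta)$ rounds. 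Combining these, the per-vertex progress probability in $O(\log\Delta)$ rounds is $\Omega(1/\Delta)$, giving the stated $O(\Delta\log n)$ w.h.p.\ bound.
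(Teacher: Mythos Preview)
Your proposal identifies the right engine---the observation that whenever $u\in V_t$ there is some active vertex in $N^+(u)$---but you are tentative about it where you need not be: if $u$ is non-stable white with a black neighbor $v$, then $v$ cannot be stable black (else $u$ would be stable), so $v$ has a black neighbor and is active; all other cases make $u$ itself active. This is the opening line of the paper's proof.

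The genuine gap is quantitative, in your last sentence. A per-vertex success probability $\Omega(1/\Delta)$ over a window of $O(\log\Delta)$ rounds, plugged into the contraction template of \cref{thm:trees}, yields only
\[
\Exp[|V_{t+c\log\Delta}|]\le (1-\Omega(1/\Delta))\,|V_t|,
\]
which forces $T=\Theta(\Delta\log\Delta\cdot\log n)$ rounds, not $O(\Delta\log n)$. The $\log\Delta$ loss comes from committing to a fixed $O(\log\Delta)$-round window per trial. The paper avoids it by a stopping-time argument: fix $u$, and build a sequence $(v_i,t_i)$ where $v_i\in N^+(u)\cap A_{t_{i-1}}$ and $t_i$ is the \emph{first} round after $t_{i-1}$ at which $v_i$ is no longer active. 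Two facts replace your window estimate. First, conditioned on all other coins, $v_i$ leaves $A$ in each round with probability exactly $1/2$, so each $t_i-t_{i-1}$ is geometric with mean $2$. Second, the coin pattern underlying \cref{lem:active-vertex-st} (all of $v_i$'s coins black, none of its active neighbors' all black) has probability $\ge(2e\Delta)^{-1}$ and, when it occurs, forces $v_i$ to be stable black precisely at time $t_i$. Hence $r=\Theta(\Delta\log n)$ trials make $u$ stable w.h.p., and a Chernoff bound on the sum of $r$ geometrics shows those trials consume only $O(r)=O(\Delta\log n)$ rounds. A union bound over $u\in V$ finishes. The expectation-contraction template is too coarse here; the per-vertex random-sequence accounting is what recovers the missing $\log\Delta$ factor.
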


\begin{proof}
    We observe that if $u\in V_t$ then $N^+(u)\cap A_t \neq \emptyset$.
    Let $u\in V_0$, and let $(v_1,t_1),(v_2,t_2),(v_3,t_3),\ldots$ be a random sequence of vertex-round pairs defined as follows:
    Let $t_0 = 0$.
    For each $i\geq 1$, if $u \in V_{t_{i-1}}$, then $v_i$ is an arbitrary vertex from the set $N(u)\cap A_{t_{i-1}}$, and
    $
        t_i = 
        \min 
        \{j>t_{i-1}\colon v_i\notin A_j\}
        ;
    $
    while if $u \notin V_{t_{i-1}}$, then $(v_i,t_i) = (u,t_{i-1})$.
    
    We focus on the first $r = 6e\Delta\log n$ elements of the sequence above.
    We bound the probability that $u\in V_{t_r}$.
    For each $1\leq i \leq r$, the conditional probability that $v_i \in I_{t_{i+1}}$ (and thus $u\notin V_{t_{i+1}}$), given $v_i$ and $B_{t_i}$, is at least $1/(2e\Delta)$, from \cref{lem:active-vertex-st}.    
    It follows that 
    \[
        \Pr[u\in V_{t_r}] 
        \leq
        (1-1/(2e\Delta))^r
        \leq
        e^{-r/(2e\Delta)}
        =
        n^{-3}
        .
    \]
    Next, we bound the value of $t_r$.
    For each $1\leq i \leq r$ and $t\geq t_{i-1}$, if $v_i\in A_t$ then the conditional probability that $v_i \notin A_{t+1}$, given $(v_i,t_i)$ and $B_t$, is exactly $1/2$ (in all cases).
    It follows that the probability of $t_r > 4r$ is upper bound by the probability that a sequence of $4r$ fair coin tosses contains fewer than $r$ heads. 
    Thus, by a Chernoff bound,
    \[
        \Pr[t_r > 4r]
        \leq
        e^{-(1/2)^2 2r/2}
        =
        e^{-e\Delta \log n}
        <
        n^{-3}
        .
    \]
    Combining the above results, we obtain that 
$
        \Pr[u\notin V_{4r}]
        \geq
        \Pr[\{u \notin V_{t_r}\}\cap\{t_r\leq 4r\}]
        \geq
        1- 2n^{-3}
        .
 $
    (Recall that $r=6e\Delta\log n$.)    
    Finally, a union bound over all $u\in V$ competes the proof.
\end{proof}

\section{The 2-State MIS Process 
on Random Graphs} 
\label{sec:2-state-random-graphs}

We first show some additional properties of the 2-state MIS process, which hold for any graph but are useful only when adjacent vertices do not have many common neighbors.
Then we show some structural properties of $G_{n,p}$ random graphs.
Finally, we use these properties to show a $\poly(\log n)$ upper bound on the stabilization time of the 2-state MIS process on $G_{n,p}$ random graphs. 

\subsection{Refined Properties of the 2-State MIS Process}
\label{sec:refined}


We call a vertex \emph{\textbf{$k$-active}} if it is active and has at most $k$ active neighbors. 
Let 
\[
    A^k_t = 
    \{u\in A_t\colon |N(u)\cap A_t| \leq k\}
\]
be the set of $k$-active vertices at the end of round $t$.
From \cref{lem:active-vertex-st}, a $k$-active vertex has probability at least $\Omega(1/k)$ to become stable black in the next $O(\log k)$ rounds. 
It is thus desirable to have $k$-active vertices for small values $k$.

In this section we establish lower bounds on the probability that a given vertex $u$ becomes $k$-active at some point in the next $r$ rounds, as a function of the probability that $u$ is active (but has possibly more than $k$ active neighbors) 
at a point in a certain subinterval of those $r$ rounds.

The next key lemma is the base of all the other results in the section. 
It lower bounds the probability $q$ of a white vertex $u$, which is non-active and non-stable, to become $k$-active  after a single round.
The lower bound is expressed in terms of the probability $p$ 
\gtodo{FUTURE-TODO: change $p$ to a different symbol; $p$ is used in $G_{n,p}$}
that $u$ is active white
after two rounds.
The value of $k$ depends on the number of active neighbors of $u$, and, crucially, on the number of their common neighbors with $u$.
\gtodo{FUTURE-TODO: minor changes needed to carry over to 3-color process: $u\in W_t$; at the beginning of proof, some white are grey; some coin-flips are grey; in general some `white' must be `non-black' instead}

\begin{lemma}
    \label{lem:key-lemma}
    Suppose that $u\in V_t\setminus A_t$,\footnote{Note that $u\in V_t\setminus A_t$ implies $u\in W_t\cap W_{t+1}$.} 
    and let $\theta = |N(u)\cap N^+(A_t\cap N(u))|$ be the number of $u$'s neighbors that are active or adjacent to an active neighbor of $u$ at the end of round $t$.
    Let $p$ be the probability that 
    $u\in A_{t+2}\cap W_{t+2}$, 
    and $q$ the probability that $u\in A_{t+1}^k$ where  $k =  \theta + \lceil\log (1/p)\rceil$. 
    Then $q\geq p^\alpha$, where $\alpha = {1}/{\log(4/3)}\leq 2.41$.
\end{lemma}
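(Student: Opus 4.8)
The plan is to express both $p$ and $q$ in terms of just two quantities: the number $a:=|N(u)\cap A_t|$ of active neighbours of $u$ at time $t$, and $M:=\Exp[2^{-N}]$, where $N$ is a ``surprise count'' defined below. First I would record the easy structural facts forced by $u\in V_t\setminus A_t$: $u$ is white at the end of rounds $t$ and $t+1$; every black neighbour of $u$ at time $t$ lies in $A_t$ (else it is a stable black vertex and $u$ is stable, contradicting $u\in V_t$); hence $a\ge 1$. The only neighbours of $u$ that can be black at the end of round $t+1$ are the $v\in N(u)\cap A_t$ with $\phi_{t+1}(v)=\black$, so the event $F:=\{u\in A_{t+1}\}$ equals $\{\phi_{t+1}(v)=\white\text{ for all }v\in N(u)\cap A_t\}$; thus $\Pr[F]=2^{-a}$, and $F$ is determined by the coins $\{\phi_{t+1}(v):v\in N(u)\cap A_t\}$.

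Next I would bound $|N(u)\cap A_{t+1}|$ on the event $F$. Write $\Theta:=N(u)\cap N^+(A_t\cap N(u))$, so $|\Theta|=\theta$. A neighbour $z$ of $u$ that is active at time $t+1$ and lies outside $\Theta$ — call it a \emph{surprise} — must, by unwinding the definitions, be white and non-active at time $t$ with $\emptyset\ne N(z)\cap A_t\subseteq A_t\setminus N(u)$, and it is active at time $t+1$ precisely when $\phi_{t+1}(w)=\white$ for every $w\in N(z)\cap A_t$. Let $N$ be the number of surprises active at time $t+1$. Then on $F$ we have $|N(u)\cap A_{t+1}|\le\theta+N$, and $N$ is determined by the coins $\{\phi_{t+1}(w):w\in A_t\setminus N(u)\}$, which are disjoint from those determining $F$. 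Hence, with $s:=\lceil\log(1/p)\rceil$ and $k=\theta+s$, independence gives $q\ge\Pr[F\cap\{N\le s\}]=\Pr[F]\cdot\Pr[N\le s]=2^{-a}\,\Pr[N\le s]$.

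Then I would upper bound $p$. Conditioning on the configuration at the end of round $t+1$: for $u$ to be active and white at the end of round $t+2$, each black neighbour of $u$ at time $t+1$ (the set $Y_{t+1}=N(u)\cap B_{t+1}$) must be active and flip to $\white$, and each active white neighbour of $u$ at time $t+1$ — in particular the $N$ active surprises — must flip to $\white$; these involve distinct fresh coins of round $t+2$, so $\Pr[E\mid\text{round }t+1]\le 2^{-|Y_{t+1}|-N}$. Now $|Y_{t+1}|$ is the number of $v\in N(u)\cap A_t$ with $\phi_{t+1}(v)=\black$, so it has a $\mathrm{Bin}(a,1/2)$ distribution and is independent of $N$; taking expectations yields $p\le\Exp[2^{-|Y_{t+1}|}]\cdot\Exp[2^{-N}]=(3/4)^a\,M$.

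Finally I would combine the bounds. Since $a\ge1$ we get $p\le(3/4)^aM<M$, so $s\ge\log(1/p)\ge\log(1/M)$, i.e. $2^{-(s+1)}\le M/2$. Splitting $M=\Exp[2^{-N}]$ over $\{N\le s\}$ and $\{N\ge s+1\}$ gives $M\le\Pr[N\le s]+2^{-(s+1)}\bigl(1-\Pr[N\le s]\bigr)$, hence $\Pr[N\le s]\ge\frac{M-2^{-(s+1)}}{1-2^{-(s+1)}}\ge\frac{M}{2-M}$; and $\frac{M}{2-M}\ge M^{\alpha}$ on $(0,1]$ because $g(M):=(2-M)M^{\alpha-1}$ has derivative $M^{\alpha-2}\bigl(2(\alpha-1)-\alpha M\bigr)>0$ there (using $\alpha>2$) and $g(1)=1$. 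Thus $q\ge 2^{-a}M^{\alpha}\ge\bigl((3/4)^aM\bigr)^{\alpha}\ge p^{\alpha}$, where the middle inequality uses the identity $(3/4)^{\alpha}=2^{-1}$ — which is exactly why the exponent is $\alpha=1/\log(4/3)$. The main obstacle is the bookkeeping in the middle two steps: correctly characterising which neighbours of $u$ can become active at time $t+1$ (the split into $\Theta$ and surprises), and verifying that the randomness governing the surprises is independent both of $F$ and of $|Y_{t+1}|$, so that the probabilities genuinely multiply; keeping the constants tight in the last step so that no slack is lost is the other point that needs care.
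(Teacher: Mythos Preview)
Your proof is correct and follows essentially the same route as the paper's. Both arguments identify the set $D=N(u)\cap A_t$ and the ``surprise'' set $Z=N(u)\cap A_{t+1}\setminus N^+(D)$ (your $N=|Z|$), establish the independence of $Z$ from the coins in $D$, derive $q\ge 2^{-|D|}\Pr[|Z|\le s]$ and $p\le(3/4)^{|D|}\,\Exp[2^{-|Z|}]$, and close with the same elementary inequality --- you phrase it as $M/(2-M)\ge M^\alpha$, the paper as $(2\varepsilon/(1+\varepsilon))^\alpha\le\varepsilon$, which are equivalent after the substitution $\varepsilon\leftrightarrow M$. The only organizational difference is that you name $M=\Exp[2^{-N}]$ and pivot on it, whereas the paper pivots on $\varepsilon=\Pr[|Z|\le\lambda]$ and substitutes the bound $2^{-(\lambda+1)}\le p/2$ back into the sum; neither buys anything over the other.
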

\begin{proof}
    Let $D = N(u)\cap A_t$.
    In round $t+1$, each $v\in D$ updates its state to a random state, while each $v\in N(u)\setminus D$ remains white. 
    Let $Z = N(u)\cap A_{t+1}\setminus N^+(D) $ be the set of active neighbors of $u$ at the end of round $t+1$ that are at distance at least two away from set $D$.
    Clearly, $Z$ does not depend on the random choices of vertices $v\in D$ in round $t+1$.

    We have that $u\in A_{t+1}$ \emph{if and only if} all $v\in D$ update their state to white in round $t+1$, i.e., $\phi_{t+1}(v) = \white$.\footnote{Recall the discussion about coin flips $\phi_{t}(v)$ at the beginning of \cref{sec:basic-properties-2s}.}
    Also $|N(u)\cap A_{t+1}| \leq |N(u)\cap N^+(D)| + |Z|= \theta + |Z|$.
    It follows
    \[
        q \geq \left(1/2\right)^d\cdot\Pr[|Z|\leq \lambda],
    \]
    where $d = |D|$ and  $\lambda = \lceil\log (1/p)\rceil$.

    We have that $u\in A_{t+2}\cap W_{t+2}$ \emph{only if} $\phi_{t+1}(v)$ or $\phi_{t+2}(v) = \white$ for every $v\in D$, and $\phi_{t+2}(v) = \white$ for every $v\in Z$.
    It follows that
    \begin{equation}
        \label{eq:3over4}
        p
        \leq \left(3/4\right)^d\cdot\sum_{i\geq0}\Pr[|Z| = i]/2^i
        .
    \end{equation}
    Let $\varepsilon = \Pr[|Z|\leq \lambda]$.
    Then
    \[
        p
        \leq 
        \left(3/4\right)^d\cdot
        \left(
        \varepsilon
        +
        (1-\varepsilon)/2^{\lambda+1}
        \right)
        \leq 
        \left(3/4\right)^d\cdot
        \left(
        \varepsilon
        +
        (1-\varepsilon)\cdot p/2
        \right).
    \]
    This implies that $p\leq \varepsilon +(1-\varepsilon)\cdotp p/2$, thus $p \leq 2\varepsilon/(1+\varepsilon)$, and substituting that above yields
    \[
        p
        \leq
        \left(3/4\right)^d\cdot
        \left(
        \varepsilon
        +
        (1-\varepsilon)\cdot\varepsilon/(1+\varepsilon)
        \right)
        =
        \left(3/4\right)^d\cdot
        \frac{2\varepsilon}{1+\varepsilon}
        .
    \]

    Finally, since $\left(3/4\right)^{d\alpha} = \left(1/2\right)^{d}$, 
    and for all $x\in [0,1]$,
    $\left(\frac{2x}{1+x}
        \right)^\alpha 
    \leq 
    \left(\frac{2x}{1+x}
        \right)^2
    =
    x\cdot \frac{4x}{(1+x)^2}
    \leq x$, 
    \[
        p^\alpha
        \leq
        \left(3/4\right)^{d\alpha}\cdot
        \left(
        \frac{2\varepsilon}{1+\varepsilon}
        \right)^\alpha
        \leq
        \left(1/2\right)^{d}
        \cdot
        \varepsilon
        \leq q.
        \qedhere
    \]
\end{proof}

Next, we use the above \cref{lem:key-lemma} to prove a similar result over a sequence of $r$ rounds.
For any vertex $u\in V$ and $i\geq 1$, let
\begin{equation}
    \label{eq:theta}
    \theta_u(i) = \max\{|N(u)\cap N^+(S)| \colon S\subseteq N(u),\, |S|\leq i\}
    . 
\end{equation}

\begin{lemma}
    \label{lem:non-active-to-pretty}
    Suppose that $u\in V_t\setminus A_t$ 
    \gtodo{FUTURE-TODO: minor changes needed for 3-color process: $u\in W_t$}
    and let $d = |N(u)\cap A_t|$.
    Let $p_r$ be the probability that $u \in A_{t+1}\cup\dots\cup A_{t+r}$, and let $q_r$ be the probability that $u\in A^k_{t+1}\cup\dots\cup A^k_{t+r-1}$, where 
    $$k = \theta_u\left(\alpha \log\left(\tfrac{4r}{p_r-2^{-d}}\right)\right)+\left\lceil \log\left(\tfrac{4r}{p_r-2^{-d}}\right)\right\rceil,
    $$
    and $\alpha = 1/{\log(4/3)}$.
    Then, for any $r\geq 2$, $q_{r}\geq r^{1-\alpha}\cdot \left(\frac{p_r-2^{-d}}{2}\right)^\alpha$.
\end{lemma}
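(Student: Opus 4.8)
The statement generalizes \cref{lem:key-lemma} from a single round to $r$ rounds. The natural approach is induction on $r$, peeling off the first round and applying \cref{lem:key-lemma} there while using the inductive hypothesis on the remaining $r-1$ rounds. Let me set $P_r := p_r - 2^{-d}$ to ease notation; intuitively $2^{-d}$ is the probability that all $d$ active neighbors of $u$ turn white in round $t+1$ and $u$ becomes active right away — we want to discard this contribution because it does not help produce a $k$-active vertex with the right $k$, or rather we want a clean recursion that isolates the "genuine" activation probability.

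\textbf{Base case.} For $r=2$, the claim reads $q_2 \geq 2^{1-\alpha}(P_2/2)^\alpha = 2^{1-2\alpha}P_2^\alpha$. Here $q_2$ is the probability that $u \in A^k_{t+1}$ with $k = \theta_u(\alpha\log(2/P_2)) + \lceil\log(2/P_2)\rceil$, and $p_2 = \Pr[u\in A_{t+1}\cup A_{t+2}]$. I would reduce this to \cref{lem:key-lemma} applied at round $t$: note $u \in A_{t+1}$ already requires all of $D = N(u)\cap A_t$ to flip white, an event of probability $2^{-d}$, so $\Pr[u\in A_{t+1}] \le 2^{-d}$ and hence $\Pr[u\in A_{t+2}\cap W_{t+2}] \geq \Pr[u\in A_{t+1}\cup A_{t+2}] - \Pr[u\in A_{t+1}] - \Pr[u \in A_{t+2}\cap B_{t+2}]$. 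The last term needs care — being active \emph{black} in round $t+2$ is already a form of progress (it is still active), but the hypothesis of \cref{lem:key-lemma} specifically uses $A_{t+2}\cap W_{t+2}$. The cleanest route is: either $\Pr[u\in A_{t+1}] \ge P_2/2$ or not; if yes we already have $u$ active white in round $t+1$ with $k$ large enough (since $d \le \theta$ trivially when $|D|\le$ the argument of $\theta_u$, and a crude bound handles the rest); if no, then $\Pr[u\in A_{t+2}\cap W_{t+2}] \ge$ a constant times $P_2$ after also handling the black-in-round-$(t+2)$ case, and \cref{lem:key-lemma} with $p \ge cP_2$ gives $q_2 \ge (cP_2)^\alpha$. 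One checks the constants line up with $2^{1-2\alpha}$.

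\textbf{Inductive step.} Assume the bound for $r-1$ and prove it for $r$. Condition on the state at the end of round $t+1$. With probability $2^{-d}$ all neighbors in $D$ go white and $u$ may become active immediately; otherwise, after round $t+1$ either $u$ has already been active (in which case, conditioned suitably, it is $k$-active with the required $k$ and we are done) or $u$ is still in $V_{t+1}\setminus A_{t+1}$ with some number $d' = |N(u)\cap A_{t+1}|$ of active neighbors, and we apply the inductive hypothesis over the remaining $r-1$ rounds starting at $t+1$. The key bookkeeping: the threshold $k$ that appears for the length-$(r-1)$ sub-interval is controlled because $\theta_u(\cdot)$ is monotone and the argument $\alpha\log(4(r-1)/(p_{r-1}'-2^{-d'}))$ is dominated by $\alpha\log(4r/P_r)$ on the relevant event — here I would use that $p_r \ge p_{r-1}' $ in expectation over the round-$(t+1)$ branching (linearity / law of total probability: $p_r - 2^{-d}$ is a lower bound for the expected value of $p_{r-1}' - 2^{-d'}$ over the non-trivial branches, or something very close to this). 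Then by convexity of $x\mapsto x^\alpha$ applied in the wrong direction — actually we need concavity-type control, since $\alpha>1$ makes $x^\alpha$ convex — one must instead average $q_{r-1}' \ge (r-1)^{1-\alpha}((p_{r-1}'-2^{-d'})/2)^\alpha$ and use Jensen carefully; because $x^\alpha$ is convex and $\alpha>1$, averaging the \emph{lower bounds} needs a lower bound on $\Exp[(\cdot)^\alpha]$, which is not directly Jensen. The standard fix is to split into "the branch where $p_{r-1}'-2^{-d'}$ is at least half its conditional mean" (which has at least constant conditional probability by a reverse-Markov argument, or rather: since $p_{r-1}' \le 1$, a Markov-type inequality gives the mass is concentrated enough) versus the rest; on the good branch $q_{r-1}'$ is $\Omega((r^{1-\alpha})(P_r/2)^\alpha)$ and the extra constant factors are absorbed by the slack between $r$ and $r-1$: indeed $(r-1)^{1-\alpha}/r^{1-\alpha} = (1-1/r)^{1-\alpha} \ge 1$ since $1-\alpha<0$, giving a free multiplicative gain of at least $(1-1/r)^{1-\alpha}$, which for $r\ge2$ is at least $2^{\alpha-1}$ — exactly the room needed to pay for the factor $4$ vs $2$ inside the logarithm and the branching loss.

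\textbf{Main obstacle.} The delicate point is the interaction between the convexity of $x\mapsto x^\alpha$ ($\alpha\approx 2.41 > 1$) and the need to average the inductive lower bound over the random state at the end of round $t+1$. A naive Jensen application goes the wrong way. I expect the resolution is the split-into-good-branch trick sketched above, combined with the observation that the loss from each peeled round is only a constant factor while the $r^{1-\alpha}$ term and the generous $4r$ (rather than $2r$) inside the log provide a matching constant-factor surplus per round; making these two constants balance for \emph{all} $r\ge2$ simultaneously — not just asymptotically — is the part that requires the most care, and is presumably why the exponent $\alpha$ and the constant $4$ were chosen as they are. A secondary nuisance is keeping track of the "active black" versus "active white" distinction introduced by \cref{lem:key-lemma}'s hypothesis, which I would handle by noting that once $u\in A_{t'}$ for any $t'$ in the interval we can simply stop and check $k$-activeness directly, so the only branch where \cref{lem:key-lemma}'s white hypothesis matters is the one where $u$ stays non-active, and there it applies cleanly.
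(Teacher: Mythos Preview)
Your inductive approach is structurally different from the paper's and has two issues.

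First, a minor point: Jensen for the convex map $x \mapsto x^\alpha$ gives $\Exp[X^\alpha] \geq (\Exp[X])^\alpha$, which \emph{is} a lower bound on $\Exp[(\cdot)^\alpha]$; so your worry that ``this is not directly Jensen'' is misplaced. That step by itself would go through.

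Second, the real gap: in the inductive step you condition on the state $\omega$ at the end of round $t+1$ and invoke the hypothesis for $r-1$. But the hypothesis delivers $k'(\omega)$-activeness with
\[
k'(\omega) = \theta_u\!\left(\alpha\log\tfrac{4(r-1)}{p'_{r-1}(\omega)-2^{-d'(\omega)}}\right) + \left\lceil\log\tfrac{4(r-1)}{p'_{r-1}(\omega)-2^{-d'(\omega)}}\right\rceil,
\]
which depends on $\omega$. On branches where $p'_{r-1}(\omega)-2^{-d'(\omega)}$ is tiny, $k'(\omega)$ exceeds the target $k$, and $u\in A^{k'(\omega)}_j$ does not imply $u\in A^k_j$. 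Your claim that the argument is ``dominated by $\alpha\log(4r/P_r)$ on the relevant event'' is only an \emph{expectation} statement, not a pointwise one; the good-branch split you sketch would have to be carried out with constants that survive all $r$, and you have not shown this.

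The paper avoids induction entirely. It decomposes $p_r$ by the \emph{first activation time}:
$p_r = p_1 + \sum_{i=2}^r \Pr[\AA_i \cap \HH_{i-1}]$, where $\HH_{i-1} = \bar\AA_0\cap\cdots\cap\bar\AA_{i-1}$ and $p_1 = 2^{-d}$. For each $i\geq 2$ it applies \cref{lem:key-lemma} \emph{once}, at round $t+i-2$ (under $\HH_{i-2}$ the vertex $u$ is still non-active white there), relating $\Pr[\AA_i\cap\WW_i \mid \cdots]$ to $\Pr[\AA^k_{i-1}\mid\cdots]$. The device that makes $k$ uniform across all $i$ is a \emph{fixed} threshold $\lambda = \lfloor\alpha\log(4r/(p_r-p_1))\rfloor$ together with an event $\XX_{i-2}$ saying that the conditional probability of $\AA_i\cap\WW_i$ given the round-$(t+i-2)$ state is at least $(p_r-p_1)/(4r)$. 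Terms with $d_{i-2}>\lambda$ or $\bar\XX_{i-2}$ contribute at most $(p_r-p_1)/(4r)$ each, hence at most $(p_r-p_1)/2$ in total; so at least $(p_r-p_1)/2$ of the mass lies in the ``good'' terms. On those, \cref{lem:key-lemma} yields the single global $k$ (since $\theta \leq \theta_u(\lambda)$ and $p \geq (p_r-p_1)/(4r)$). Finally Jensen, applied to the weighted sum over $i$ with total weight $\nu\leq r$, gives $q_r \geq \nu\cdot((p_r-p_1)/(2\nu))^\alpha \geq r^{1-\alpha}((p_r-p_1)/2)^\alpha$.

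The first-activation-time decomposition, together with the pre-fixed threshold $\lambda$ and the event $\XX_{i-2}$ that truncates bad states \emph{before} invoking \cref{lem:key-lemma}, is the idea your proposal is missing.
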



 \begin{proof}
    For $i\geq 0$, let $d_i = |N(u) \cap A_{t+i}|$, and define the following events:
    $\WW_i$ is the event that $u\in W_{t+i}$;
    $\AA_i$ is the event that $u\in A_{t+i}$;
    $\AA^k_i$ is the event that $u\in A^k_{t+i}$; and $\HH_i=\bar \AA_0\cap \bar \AA_1 \cap\dots\cap \bar \AA_i$. 
    Let also $\XX_i$ be the event that 
    the states of the vertices at the end of round $t+i$ are such that the conditional probability of $\AA_{i+2}\cap W_{i+2}$ is at least $\frac{p_r-p_1}{4r}$. 
    Let $r\geq 2$ and $\lambda = \lfloor\alpha \log\left(\frac{4r}{p_r-p_1}\right)\rfloor$.
    Then
    \begin{align*}
        p_r
        &=
        \sum_{1\leq i\leq r} \Pr[\AA_i \cap \HH_{i-1}]
        \\&
        =
        p_1 +
        \sum_{2\leq i\leq r} \Pr[\AA_i \cap \HH_{i-1}]
        \\&
        =
        p_1 +
        \sum_{2\leq i\leq r} \Pr[\AA_i\cap\WW_i \cap \HH_{i-1}]
        \text{\quad(since $\AA_i\cap \HH_{i-1}$ implies $\WW_i$)}
        \\&
        \leq
        p_1 +
        \sum_{2\leq i\leq r} \Pr[\AA_i\cap\WW_i \cap \HH_{i-2}]
        \text{\quad(since $\HH_{i-1}$ implies $\HH_{i-2}$)}
        \\&
        \leq
        p_1 +
        \sum_{2\leq i\leq r} \Pr[\AA_i\cap\WW_i \cap \HH_{i-2}\cap \{d_{i-2} \leq \lambda \}\cap \XX_{i-2}]
        \\&\qquad\
        +
        \sum_{2\leq i\leq r} \Pr[\AA_i\cap\WW_i \cap \HH_{i-2}\cap \{d_{i-2} > \lambda \}]
        +
        \sum_{2\leq i\leq r} \Pr[\AA_i \cap\WW_i \cap \bar \XX_{i-2}]
        .
    \end{align*}
    Each of the last two sums above is at most $\frac{p_r-p_1}{4}$, because for each non-zero sum term, we have
    \[
        \Pr[\AA_i\cap\WW_i \cap \HH_{i-2}\cap \{d_{i-2} > \lambda \}]
        \leq
        \Pr[\AA_i\cap\WW_i \mid \HH_{i-2},\, d_{i-2} > \lambda]
        \leq
        \left(\frac34\right)^{\lambda+1}
        \leq \frac{p_r-p_1}{4r}
        ,
    \]
    similarly to \cref{eq:3over4},
    and 
    $
        \Pr[\AA_i \cap\WW_i\cap \bar \XX_{i-2} ]
        \leq
        \Pr[\AA_i\cap\WW_i \mid  \bar \XX_{i-2} ]
        \leq
        \frac{p_r-p_1}{4r}
        .
    $
    Applying these above gives
    \begin{align*}
        \frac{p_r-p_1}{2}
        &\leq
        \sum_{2\leq i\leq r} \Pr[\AA_i\cap\WW_i \cap \HH_{i-2}\cap \{d_{i-2} \leq \lambda \}\cap \XX_{i-2}]
        \\&
        =
        \sum_{2\leq i\leq r} \Pr[\AA_i\cap\WW_i \mid \HH_{i-2},\,  d_{i-2} \leq \lambda,\ \XX_{i-2}]
        \cdot\Pr[\HH_{i-2}\cap \{d_{i-2} \leq \lambda \}\cap \XX_{i-2}]
        .
    \end{align*}
    Next we lower bound $q_{r}$. 
    We have
    \[
      \begin{aligned}
        q_{r} 
        &\geq
        \sum_{1\leq i\leq r-1} \Pr[\AA^k_i \cap \HH_{i-1}]
        \\&
        =
        \sum_{2\leq i\leq r} \Pr[\AA^k_{i-1} \cap \HH_{i-2}]
        \\&
        \geq
        \sum_{2\leq i\leq r} \Pr[\AA^k_{i-1} \cap \HH_{i-2}\cap \{d_{i-2} \leq \lambda\}\cap \XX_{i-2}]
        \\&
        =
        \sum_{2\leq i\leq r} \Pr[\AA^k_{i-1} \mid \HH_{i-2},\,  d_{i-2} \leq \lambda,\, \XX_{i-2}]
        \cdot\Pr[\HH_{i-2}\cap \{d_{i-2} \leq \lambda \}\cap \XX_{i-2}]
        .
      \end{aligned}
    \]
    From \cref{lem:key-lemma},
    applied for round $t+i-2$, using $p \geq \frac{p_r-p_1}{4}$ and $\theta \leq \theta_u(\lambda)$,  and observing that $p_1 = 2^{-d}$, we obtain
    \[
        \Pr[\AA^k_{i-1} \mid \HH_{i-2},\,  d_{i-2} \leq \lambda, \XX_{i-2}] \geq \left(\Pr[\AA_i\cap\WW_i \mid \HH_{i-2},\,  d_{i-2} \leq \lambda, \XX_{i-2}]\right)^\alpha
        .
    \]
    We substitute this to the previous equation above, and then use Jensen's inequality to complete the proof:
    Let $\nu = \sum_{2\leq i\leq r} \Pr[\HH_{i-2}\cap \{d_{i-2} \leq \lambda \}\cap \XX_{i-2}] \leq r$.
    \begin{align*}
        q_{r} 
        &\geq
        \sum_{2\leq i\leq r} \left(\Pr[\AA^k_i \mid \HH_{i-2},\,  d_{i-2} \leq \lambda,\,\ \XX_{i-2}]\right)^\alpha
        \cdot\Pr[\HH_{i-2}\cap \{d_{i-2} \leq \lambda \}\cap \XX_{i-2}]
        \\&
        \geq \nu\cdot
        \left(\sum_{2\leq i\leq r} \Pr[\AA^k_i \mid \HH_{i-2},\,  d_{i-2} \leq \lambda,\, \XX_{i-2}]
        \cdot\Pr[\HH_{i-2}\cap \{d_{i-2} \leq \lambda \}\cap \XX_{i-2}]/\nu\right)^\alpha
        \\&
        \geq \nu\cdot
        \left(\frac{p_r-p_1}{2\nu}\right)^\alpha
        \geq r\cdot
        \left(\frac{p_r-2^{-d}}{2r}\right)^\alpha
        .
        \qedhere
    \end{align*}
\end{proof}

\cref{lem:non-active-to-pretty} assumes that vertex $u$ is initially not active.
\gtodo{FUTURE-TODO: next two lemmas are probably not necessary; but the previous two are good -- are they?}
\itodo{Yes the first is good. Maybe the second (previous) one can be deleted in some way, but let's keep it and think about it for the journal version.}
The next lemma shows a similar result for the case where $u$ is active initially.
In this case, in place of the probability $p_r$ that $u$ becomes active at some point in the interval $\{t+1,\ldots, t+r\}$, we use the probability $b_r$ that $u$ becomes black  at some point of a subinterval $\{t+\ell,\ldots, t+r\}$.
The proof proceeds by considering the first round after $t$ when either $u$ has at most $k$ black neighbors, or $u$ is white.
If the first condition holds, then $u$ has probability $1/2$ of being black, and thus of being $k$-active. 
If only the second condition holds then we are in the case of \cref{lem:non-active-to-pretty}.
The proof can be found in \cref{sec:proof-of-active-to-pretty}.

\begin{lemma}
    \label{lem:active-to-pretty}
    Suppose that $u\in A_t$.
    Let $\ell \geq 2$ and $r \geq \ell +2$, let $b_{r}$ be the probability 
    that $u \in B_{t + \ell}\cup\dots\cup B_{t+r}$, and suppose that $b_r \geq 1/2^{\ell-2}$. 
    Let $q_r$ be the probability 
    that $u\in A^k_t\cup\dots\cup A^k_{t+r-1}$, where 
    \[
        k = \theta_u\big(\alpha \log\left({32r}/{b_r}\right)\big)
        +
        \log\left({32r}/{b_r}\right)
        + \log(1/b_r)+3
        .
    \]
    Then $q_{r} \geq r^{1-\alpha} \cdot \left({b_r}/{16}\right)^\alpha$,
    where $\alpha = {1}/{\log(4/3)}$.    
\end{lemma}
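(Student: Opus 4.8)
The plan is to reduce Lemma~\ref{lem:active-to-pretty} to Lemma~\ref{lem:non-active-to-pretty} by "waiting" until $u$ is in a good configuration, and then argue that during the wait $u$ has a decent chance of already being $k$-active because it is black. Concretely, I would define a stopping round $\tau$ as the first round $t+i$ with $i\ge 0$ at which \emph{either} $u$ is white, \emph{or} $u$ has at most $k'$ black neighbors, where $k'$ is chosen to be (roughly) $\theta_u(\cdot)+\log(32r/b_r)$, i.e., the part of $k$ coming from the common-neighbor term plus a $\log(1/p)$-type slack. Since $u\in A_t$ and $u$ is black would make $u$ trivially active, the key dichotomy is: if at round $\tau$ vertex $u$ is black with at most $k'$ black neighbors, then in the \emph{next} round each of those $\le k'$ neighbors turns white independently with probability $1/2^{k'}$... no, that is too weak; instead, observe that if $u$ is black at $\tau$ with at most $k'$ black \emph{active} neighbors then $u$ is already $k'$-active at round $\tau$, giving an immediate contribution to $q_r$ of at least $\Pr[u\text{ black with }\le k'\text{ black nbrs at }\tau]\cdot(\text{const})$. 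The remaining case is that $u$ is white at $\tau$ and non-active (if it were active-white we are again done immediately, contributing to $q_r$), so $u\in V_{\tau}\setminus A_{\tau}$, and we apply Lemma~\ref{lem:non-active-to-pretty} starting from round $\tau$.

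For the application of Lemma~\ref{lem:non-active-to-pretty} from $\tau$, I would lower-bound the probability $p_{r'}$ (in the notation of that lemma) that $u$ becomes active in the remaining $r' = r - (\tau - t)$ rounds. The point is that conditioned on reaching $\tau$ in the "white and non-active" case, $u$ still has probability close to $b_r$ (up to losing the rounds before $\tau$) of becoming black later in $\{t+\ell,\dots,t+r\}$, and each time $u$ becomes black after being non-black it must have passed through being active; so $p_{r'} \ge b_r$ minus small correction terms (this is where the hypothesis $b_r\ge 1/2^{\ell-2}$ and the subinterval starting at $t+\ell$ matter — we need enough rounds after $\tau$). One has to also subtract the $2^{-d'}$ term from Lemma~\ref{lem:non-active-to-pretty}, where $d' = |N(u)\cap A_\tau|$; but in the non-active case $d'=0$ is not guaranteed — $u$ non-active white means it has a black neighbor, and $2^{-d'}$ is the probability all black-active neighbors flip white. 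I would handle this by folding the $2^{-d'}\le 1$ slack into the already-generous constants (the $b_r/16$ vs $b_r$, and the extra additive $3$ and $\log(1/b_r)$ in the definition of $k$), so that $p_{r'} - 2^{-d'} \ge b_r/8$ say, after which Lemma~\ref{lem:non-active-to-pretty} yields $q_{r'}\ge (r')^{1-\alpha}(b_r/16)^\alpha$, and since $r'\le r$ and $x\mapsto x^{1-\alpha}$ is decreasing ($\alpha>1$) we get the claimed bound $q_r\ge r^{1-\alpha}(b_r/16)^\alpha$. The value $k$ in the statement is then matched by taking the max of the $k'$ from the immediate case and the $k$ produced by Lemma~\ref{lem:non-active-to-pretty} at $\tau$; the $\log(1/b_r)+3$ surplus absorbs the discrepancy between $\lceil\log(4r'/(p_{r'}-2^{-d'}))\rceil$ and $\log(32r/b_r)$ and the loss of a few rounds.

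The bookkeeping obstacle — and what I expect to be the genuinely delicate point — is that $\tau$ is a stopping time depending on the process, so I must be careful that "the probability $u$ becomes black later, conditioned on the history up to $\tau$ and on being in the white-non-active case" really is at least $\approx b_r$. This is not automatic: conditioning on reaching $\tau$ in a particular case could, a priori, be correlated with $u$'s future. The clean way around this is to \emph{not} condition on the case at all, but instead do a direct decomposition of $b_r$: write $\Pr[u\in B_{t+\ell}\cup\dots\cup B_{t+r}]$ as a sum over the round $j$ of the \emph{first} time after $\max(t+\ell,\tau)$ that $u$ is black, grouped by the value of $\tau$ and the state at $\tau$, exactly mirroring the $\HH_{i-1}$/$\AA_i$ decomposition in the proof of Lemma~\ref{lem:non-active-to-pretty}. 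For the terms where $\tau$ occurs with $u$ black-with-few-neighbors, the contribution to $q_r$ is immediate; for the terms where $\tau$ occurs white-non-active, I invoke Lemma~\ref{lem:non-active-to-pretty}'s internal estimate (or just the lemma as a black box applied at round $\tau$ on the event $\{d_\tau\le\text{something}\}$) term-by-term and reassemble via Jensen, just as in that proof. In short: re-run the Lemma~\ref{lem:non-active-to-pretty} argument but with the interval "anchored" at the random round $\tau$ instead of at $t$, carrying along the extra easy case that handles $u$ being black. I would relegate the full calculation to \cref{sec:proof-of-active-to-pretty} as the paper indicates, keeping the main-text intuition as the two-case split described above.
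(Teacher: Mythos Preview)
Your overall structure --- define a stopping time $\tau$ and split into ``$u$ has few black neighbors at $\tau$'' versus ``$u$ is white non-active at $\tau$, then invoke Lemma~\ref{lem:non-active-to-pretty}'' --- is exactly the paper's approach. But there is a concrete gap in how you dispose of the $2^{-d'}$ term. Writing ``$2^{-d'}\le 1$'' and hoping the constants absorb it does not work: if $b_r$ is small and $d'$ happens to be $1$ or $2$, then $p_{r'}-2^{-d'}$ can be negative and Lemma~\ref{lem:non-active-to-pretty} gives nothing. The additive $\log(1/b_r)+3$ in $k$ is there precisely to fix this, and it must sit \emph{inside the threshold that defines $\tau$}, not merely in the final $k$. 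With $\tau=\min\{j>t:u\in W_j\text{ or }|N(u)\cap B_j|\le k\}$, the ``white'' sub-case forces $|N(u)\cap B_\tau|>k$; and since on the relevant event $u$ is later black (hence not yet stable at $\tau$), every black neighbor of $u$ at $\tau$ is active, giving $d'>k\ge\log(1/b_r)+3$ and hence $2^{-d'}<b_r/8$. Your $k'$ omits this piece, so your $\tau$ does not deliver the needed lower bound on $d'$.

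Your conditioning worry is legitimate, but the remedy you propose --- re-running the Jensen decomposition of Lemma~\ref{lem:non-active-to-pretty} anchored at the random $\tau$ --- is heavier than necessary. The paper handles it with a single case split on $x=\Pr[\,|N(u)\cap B_\tau|\le k\mid\tau\le t+\ell\,]$. If $x\ge b_r/4$: since $u\in A_{\tau-1}$, the state of $u$ at round $\tau$ is a fresh coin independent of its neighbors' round-$\tau$ states, so $\Pr[u\in A_\tau\mid\tau=j,\,|N(u)\cap B_\tau|\le k]=1/2$, and one gets $q_r\ge 3b_r/32$ outright. If $x<b_r/4$: combine with $\Pr[\tau>t+\ell]\le 2^{-\ell}\le b_r/4$ to get $\Pr[\BB\cap\{|N(u)\cap B_\tau|>k\}\cap\{\tau\le t+\ell\}]\ge b_r/2$; observe $\BB\Rightarrow\AA$ here; introduce the single event $\XX=\{\text{conditional probability of }\AA\text{ given the configuration at }\tau\text{ is }\ge b_r/4\}$; subtract $\Pr[\AA\cap\bar\XX]\le b_r/4$; and invoke Lemma~\ref{lem:non-active-to-pretty} once, as a black box, on $\XX$. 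No term-by-term reassembly is needed.
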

    
In the last lemma of this section, we consider the case in which \cref{lem:non-active-to-pretty} does not give a large enough lower bound for $q_r$, even though $p_r$ is large, because the difference $p_r - 2^{-d}$ is small.
We proceed by essentially reducing this case to the case of \cref{lem:active-to-pretty}, after a single round.
The proof is in \cref{sec:proof-non-active-to-pretty2}.

\begin{lemma}
    \label{lem:non-active-to-pretty2}
    Suppose that $u\in V_t \setminus A_t$, and let $d = |N(u)\cap A_t|$.
    Let $\ell\geq 5$ and $r\geq \ell + 2$, let $p_{r}$ be the probability that $u \in A_{t+1}\cup\dots\cup A_{t+r-1}$, let $b_r$ be the probability that $u \in B_{t+\ell}\cup\dots\cup B_{t+r}$, and suppose that $b_r \geq 1/2^{\ell-4}$ and $b_r \geq 2(p_r - 2^{-d})$.
    Let $q_r$ be the probability 
    that $u\in A^k_t\cup\dots\cup A^k_{t+r-1}$, where 
    \[
        k = \theta_u\big(\alpha \log\left({128r}/{b_r}\right)\big)
        +
        \log\left({128r}/{b_r}\right)
        + \log(4/b_r)+3
        .
    \]
    Then $q_{r} \geq r^{1-\alpha} \cdot \left({b_r}/{64}\right)^\alpha$,
    where $\alpha = {1}/{\log(4/3)}$.    
\end{lemma}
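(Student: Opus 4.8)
The statement to be proved is \cref{lem:non-active-to-pretty2}: starting from $u\in V_t\setminus A_t$ with $d=|N(u)\cap A_t|$, given that $b_r=\Pr[u\in B_{t+\ell}\cup\dots\cup B_{t+r}]$ is reasonably large but the ``activation gap'' $p_r-2^{-d}$ is at most $b_r/2$, we want to lower bound the probability $q_r$ of $u$ becoming $k$-active at some point in $\{t,\dots,t+r-1\}$ for the stated $k$. The plan is to condition on the state at the end of round $t+1$ and split according to whether $u$ is active at that point, then reduce to \cref{lem:active-to-pretty} (for the active branch) and to \cref{lem:non-active-to-pretty} (for the non-active branch), choosing $k$ large enough that the thresholds used in both lemmas are dominated.

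\textbf{Step 1: the one-round split.} Since $u\in W_t\cap W_{t+1}$ (as noted, $V_t\setminus A_t$ forces this), each of the $d$ active neighbors of $u$ flips an independent fair coin in round $t+1$. With probability $2^{-d}$ all of them turn white, in which case $u\in A_{t+1}$ (indeed $u\in A_{t+1}\cap W_{t+1}$); otherwise at least one stays black and $u\notin A_{t+1}$, i.e. $u\in V_{t+1}\setminus A_{t+1}$. So I would write $q_r = \Pr[u\in A^k_{t+1}] + \dots \geq 2^{-d}\cdot q_r^{\mathrm{act}} + (1-2^{-d})\cdot q_r^{\mathrm{nonact}}$ where $q_r^{\mathrm{act}}$ is the conditional probability (given all neighbors whitened) that $u$ becomes $k$-active in $\{t+1,\dots,t+r-1\}$, and $q_r^{\mathrm{nonact}}$ the analogous conditional probability given $u\notin A_{t+1}$.

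\textbf{Step 2: the active branch.} Conditioned on $u\in A_{t+1}\cap W_{t+1}$, apply \cref{lem:active-to-pretty} with $t\mapsto t+1$, the same $\ell$ (now $\ell\geq 5\geq 2$, and $r-1\geq \ell+2$ needs $r\geq\ell+3$ — I should double-check the index bookkeeping and, if necessary, absorb an off-by-one via $r\geq\ell+2$ giving $r-1\geq\ell+1$; this is the kind of routine adjustment the constants have slack for), and the conditional value $b'$ of the ``becomes black'' probability over $\{t+1+\ell,\dots\}$. The key point: the hypothesis $b_r\geq 2(p_r-2^{-d})$ means that most of the $b_r$-mass of ``$u$ becomes black'' comes from sample paths where $u$ does \emph{not} first become active in $\{t+1,\dots,t+r-1\}$ — but $u$ starting white can only become black via first becoming active, so in fact essentially all of that mass is accounted for by the event that the activation happens either exactly at round $t+1$ (the active branch) or... wait — more carefully, ``$u$ black in some round of $\{t+\ell,\dots,t+r\}$'' is a subevent of ``$u$ active in some round of $\{t+1,\dots,t+r-1\}$'', so $b_r\leq p_r$ always; the real use of $b_r\geq 2(p_r-2^{-d})$ is that it forces $b_r$ to be comparable to its own ``residual'' after removing the $2^{-d}$ term, letting me conclude that conditioned on the active branch the black-probability is $\Omega(b_r)$ (up to the $2^{-d}$ loss, which is harmless since if $2^{-d}\geq b_r/2$ then $p_r-2^{-d}\leq b_r/2 \leq 2^{-d}$ and one argues directly). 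Feeding $b'=\Omega(b_r)$ into \cref{lem:active-to-pretty} yields $q_r^{\mathrm{act}}\geq r^{1-\alpha}(b_r/\Theta(1))^\alpha$ with a $k$-threshold that, after tracking the extra $\log(1/b_r)$ and $\log(r/b_r)$ terms plus a constant shift from the reindexing, is at most the stated $k=\theta_u(\alpha\log(128r/b_r))+\log(128r/b_r)+\log(4/b_r)+3$.

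\textbf{Step 3: the non-active branch and the obstacle.} Conditioned on $u\notin A_{t+1}$, I am back in the setting of \cref{lem:non-active-to-pretty} applied at round $t+1$, with $p_{r}$ replaced by the conditional ``becomes active'' probability over the remaining window and $2^{-d}$ replaced by the conditional $2^{-d'}$ where $d'=|N(u)\cap A_{t+1}|$. The catch — and this is what makes the lemma nontrivial rather than a corollary of \cref{lem:non-active-to-pretty} — is exactly the regime where \cref{lem:non-active-to-pretty} is weak: the gap $p_r-2^{-d}$ is tiny, so its bound $q_r\geq r^{1-\alpha}((p_r-2^{-d})/2)^\alpha$ is useless. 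I expect the \textbf{main obstacle} to be showing that the non-active branch \emph{either} regenerates enough gap after one round (so \cref{lem:non-active-to-pretty} becomes usable with the conditional parameters) \emph{or} channels its contribution into the active branch. The clean way: observe that $u$ becoming black in $\{t+\ell,\dots,t+r\}$ means $u$ was active in some round $s\in\{t+1,\dots,t+r-1\}$ and chose black; by \cref{lem:active-to-pretty} applied at that first active round $s$ (with the remaining window), each such occurrence already gives the desired $k$-active probability. So the correct skeleton is: decompose by the first round $s$ in $\{t+1,\dots,t+r-1\}$ at which $u\in A_s$; on $\{s=t+1\}$ use \cref{lem:active-to-pretty} directly; on $\{s\geq t+2\}$ the prefix $\{t+1,\dots,s-1\}$ witnesses $u$ non-active, and I apply \cref{lem:non-active-to-pretty} on that prefix if its gap is large, else \cref{lem:active-to-pretty} from round $s$. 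The hypothesis $b_r\geq 2(p_r-2^{-d})$ guarantees the ``else'' branch carries $\Omega(b_r)$ probability mass of eventually-black paths, so that \cref{lem:active-to-pretty} can be invoked with a healthy parameter. Finally I collect the constants: the reductions lose a factor $64$ versus the $16$ in \cref{lem:active-to-pretty} (an extra factor-$4$ from the single-round conditioning and the $b_r/2$-vs-$b_r$ slack), and the $k$-threshold picks up exactly the extra $\log(4/b_r)$ in place of $\log(1/b_r)$ and $128r$ in place of $32r$, matching the statement. Throughout, the bookkeeping of which $\theta_u(\cdot)$ argument dominates is routine given that $\theta_u$ is nondecreasing, so I would not grind it out here.
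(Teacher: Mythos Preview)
Your proposal has the right ingredients but misses the clean reduction, and your weighted split in Step~1 creates a real problem rather than solving one.

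The paper's proof is a two-line reduction to \cref{lem:active-to-pretty}. The single observation you are circling around in Steps~2--3 but never state directly is this: if $u\notin A_{t+1}$ then $u\in W_{t+1}$ and $u$ cannot become black later without first being active in some round $\geq t+2$; hence
\[
\Pr\bigl[\{u\in B_{t+\ell}\cup\cdots\cup B_{t+r}\}\setminus A_{t+1}\bigr]
\;\leq\;
\Pr\bigl[u\in (A_{t+2}\cup\cdots\cup A_{t+r-1})\setminus A_{t+1}\bigr]
\;=\; p_r - 2^{-d}
\;\leq\; b_r/2,
\]
so $\Pr[\{u\in B_{t+\ell}\cup\cdots\cup B_{t+r}\}\cap A_{t+1}]\geq b_r/2$. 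This kills the non-active branch entirely: there is no need for your Step~3 decomposition by first active round, nor for invoking \cref{lem:non-active-to-pretty} again. One then introduces the event $\XX$ that the state at the end of round $t+1$ gives conditional black-probability $\geq b_r/4$, obtains $\Pr[A_{t+1}\cap\XX]\geq b_r/4$ by the usual Markov-type argument, and applies \cref{lem:active-to-pretty} once, starting from round $t+1$ with parameter $b_r/4$.

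Your Step~1 decomposition $q_r\geq 2^{-d}\cdot q_r^{\mathrm{act}}+(1-2^{-d})\cdot q_r^{\mathrm{nonact}}$ is where things go wrong. The factor $2^{-d}$ can be arbitrarily small (nothing in the hypotheses bounds $d$), so even $q_r^{\mathrm{act}}=1$ would give only $2^{-d}$, far below the target $r^{1-\alpha}(b_r/64)^\alpha$. Your parenthetical ``the $2^{-d}$ loss \dots is harmless since if $2^{-d}\geq b_r/2$ then \dots'' handles exactly the wrong direction: the problematic case is $2^{-d}\ll b_r$, which you do not address. The paper sidesteps this by never weighting by $\Pr[A_{t+1}]$; it works with the \emph{joint} probability $\Pr[B\cap A_{t+1}]$, which is already $\geq b_r/2$ regardless of how small $2^{-d}$ is, and then conditions on the state via $\XX$.
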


\subsection{Structural Properties of \texorpdfstring{$G_{n,p}$}{Gnp }
and Good Graphs}
\label{sec:structural-properties}

We describe some structural properties that a graph must possess in order for the analysis given in the following sections to carry through.
A graph satisfying these properties is called a \emph{good graph}.
Then we show that a random $G_{n,p}$ graph is a good graph w.h.p.

\begin{definition}[Good Graphs]
    \label{def:good-graph}
    Let $n$ be a positive integer and $0<p<1$.
    A graph $G=(V,E)$ with $n$ vertices is \emph{$(n,p)$-good} if it satisfies all the following properties:
    \begin{enumerate} [(P1)]
    
        
        \item\label{def:good-degree}
        For any set $S\subseteq V$, the average degree of induced subgraph $G[S]$ is at most $\max\{8p|S|,\, 4\ln n\}$.

        \item\label{def:good-subset-neighbors}
        For any set $S\subseteq V$ of size $|S| \geq 40\ln (n)/p$, 
        $$|\{u\in V\setminus S \colon |N(u)\cap S| < p|S|/2\}|\leq |S|/2.$$

        \item\label{def:good-STI}
        For any three disjoint sets $S,T,I \subseteq V$ such that $|S| \geq 2|T|$ and $(S\cup T) \cap N(I) = \emptyset$,
        \[
            |N(T) \setminus N^+(S\cup I)| 
            \leq|
            N(S)\setminus N^+(I)| 
            + 
            8\ln^2(n)/p 
            .  
        \]

        \item \label{def:good-ST}
        For any two disjoint sets $S,T\subseteq V$ such that $|S|\geq |T|$ and $|T|\leq \ln (n)/p$,
        $
            |E(S,T)| \leq 6|S|\ln n.
        $

        \item \label{def:good-common}
        No two vertices $u,v\in V$ have more than $\max\{6np^2,\, 4\ln n\}$ common neighbors.

        \item \label{def:good-common-lb}
        If $p \geq 2(\ln (n)/n)^{1/2}$ then $\diam(G) \leq 2$.        
    \end{enumerate}
\end{definition}

\begin{lemma}
    \label{lem:Gnp-good}
    A random graph $G=(V,E)$ drawn from $G_{n,p}$ is $(n,p)$-good with probability $1-O(n^{-2})$.  
\end{lemma}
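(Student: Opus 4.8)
The plan is to establish each of the six properties (P\ref{def:good-degree})--(P\ref{def:good-common-lb}) separately by a concentration argument, and then take a union bound over them; since we aim for failure probability $O(n^{-2})$, it suffices to show each property fails with probability $O(n^{-2})$. Throughout we use the standard Chernoff bounds for sums of independent indicators (the edges of $G_{n,p}$), together with union bounds over the relevant families of vertex sets.

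For (P\ref{def:good-degree}), fix a set $S$ with $|S|=s$. The number of edges $|E(S)|$ is a sum of $\binom{s}{2}$ independent $\mathrm{Bernoulli}(p)$ variables with mean $\mu = \binom{s}{2}p \le ps^2/2$; the average degree of $G[S]$ is $2|E(S)|/s$, so it suffices to show $|E(S)|$ does not exceed $\max\{4ps^2, 2s\ln n\}/2$. A Chernoff bound gives $\Pr[|E(S)| \ge 2\max\{\mu, s\ln n\}] \le e^{-\Omega(\max\{\mu, s\ln n\})} \le e^{-\Omega(s\ln n)} = n^{-\Omega(s)}$, which beats the $\binom{n}{s}\le n^s$ choices of $S$ of size $s$; summing over $s$ from $1$ to $n$ keeps the bound $O(n^{-2})$. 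For (P\ref{def:good-subset-neighbors}), fix $S$ with $|S|=s\ge 40\ln(n)/p$; for each $u\notin S$, $|N(u)\cap S|$ is $\mathrm{Bin}(s,p)$ with mean $ps \ge 40\ln n$, so $\Pr[|N(u)\cap S| < ps/2] \le e^{-ps/8} \le n^{-5}$. The expected number of such ``deficient'' vertices is at most $n\cdot n^{-5}$, and since these events are independent across $u\notin S$, a Chernoff/Markov bound shows the count exceeds $s/2 \ge 20\ln(n)/p$ with probability at most $n^{-\Omega(\ln n / p)}$ — comfortably small enough to union-bound over all $2^n$ choices of $S$ (or over all $s$ and all $\binom n s$ sets). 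Property (P\ref{def:good-ST}) is similar: for $|S|\ge|T|$ and $|T|=t\le \ln(n)/p$, $|E(S,T)|$ is stochastically dominated by $\mathrm{Bin}(st, p)$ with mean $stp \le s\ln n$; a Chernoff bound gives $\Pr[|E(S,T)| \ge 6s\ln n] \le e^{-\Omega(s\ln n)}$, beating $\binom n s\binom n t \le n^{s+t}\le n^{2s}$.

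For (P\ref{def:good-common}), fix two vertices $u,v$; their number of common neighbors is (at most) $\mathrm{Bin}(n, p^2)$ with mean $np^2$, so a Chernoff bound gives $\Pr[\text{more than }\max\{6np^2, 4\ln n\}] \le e^{-\Omega(\max\{np^2, \ln n\})} \le n^{-4}$, and a union bound over the $\binom n 2$ pairs leaves $O(n^{-2})$. For (P\ref{def:good-common-lb}), when $p \ge 2(\ln(n)/n)^{1/2}$ we have $np^2 \ge 4\ln n$; for any pair $u,v$, the probability they have no common neighbor (and are non-adjacent) is at most $(1-p^2)^{n-2} \le e^{-p^2(n-2)} \le n^{-3}$ for large $n$, so a union bound over pairs shows $\diam(G)\le 2$ with probability $1-O(n^{-1})$, hence $O(n^{-2})$ after adjusting constants in the definition of ``large $n$'' (or one can sharpen the exponent). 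The one genuinely delicate property is (P\ref{def:good-STI}), which bounds $|N(T)\setminus N^+(S\cup I)|$ against $|N(S)\setminus N^+(I)|$: here the two quantities being compared are themselves random and correlated, so a naive union bound over all triples $(S,T,I)$ does not obviously work. The plan is to condition on the set $N^+(I)$ (equivalently, reveal all edges incident to $I$ first), restrict attention to $V' = V\setminus N^+(I)$, and then argue that inside $V'$, for every pair of disjoint sets $S,T$ with $|S|\ge 2|T|$, a vertex $w\notin S\cup T$ lying in $N(T)$ but not in $N(S)$ requires $w$ to have an edge into $T$ but none into $S$ — an event of probability $\le (1-(1-p)^{|T|})(1-p)^{|S|} \le |T|p\cdot e^{-p|S|} \le |T|p\cdot e^{-2p|T|}$. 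When $|T| \ge \ln(n)/p$ this is exponentially small in $|T|$ and one can union-bound; when $|T| < \ln(n)/p$ one instead bounds the total count of such $w$ directly, showing it is at most $8\ln^2(n)/p$ with the required probability, using that $|E(V\setminus(S\cup T), T)| = O(|T| n p) $-type estimates are too weak and one must instead observe $|N(T)| \le |E(\,\cdot\,,T)| \le$ (a Chernoff bound on a $\mathrm{Bin}(n|T|,p)$-type variable) — essentially reusing the style of (P\ref{def:good-ST}) but tracking the ``not in $N(S)$'' restriction. I expect (P\ref{def:good-STI}) to be the main obstacle: getting the additive slack $8\ln^2(n)/p$ to come out with the right constant, and handling the dependence between $N(S)$ and $N(T)$ cleanly, will require care, whereas (P\ref{def:good-degree})--(P\ref{def:good-common-lb}) are routine Chernoff-plus-union-bound arguments. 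Finally, a union bound over the $O(1)$ properties completes the proof that $G\sim G_{n,p}$ is $(n,p)$-good with probability $1-O(n^{-2})$.
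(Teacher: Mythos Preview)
Your treatment of properties \cref{def:good-degree}, \cref{def:good-subset-neighbors}, \cref{def:good-ST}, \cref{def:good-common}, and \cref{def:good-common-lb} is essentially the paper's: Chernoff plus union bound, with the same structure. The issue is \cref{def:good-STI}, where your sketch has a real gap rather than just missing constants.

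Your plan for small $|T|$ is to ``bound the total count of such $w$ directly, showing it is at most $8\ln^2(n)/p$.'' But this is simply false: $|N(T)\setminus N^+(S\cup I)|$ by itself need not be at most $8\ln^2(n)/p$. Take $I=\emptyset$, $|T|$ small, and $|S|=2|T|$ small; then $N(T)\setminus N^+(S)$ has expected size of order $|T|p\cdot n$, which for, say, $p=n^{-1/3}$ vastly exceeds $8\ln^2(n)/p$. Property \cref{def:good-STI} only bounds the \emph{difference} $|N(T)\setminus N^+(S\cup I)| - |N(S)\setminus N^+(I)|$, and both terms can be large. So you cannot hope to control either term separately. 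A second problem is your case split on $|T|$: it does nothing to control the size of $I$, so the union bound over triples $(S,T,I)$ can involve $2^{\Theta(n)}$ choices, and no per-triple Chernoff bound of the strength you describe will survive that.

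The paper's argument is different on both points. It splits on $|S\cup I|$: if $|S\cup I|\ge 3\ln(n)/p$, then a separate lemma (every set of that size has at most $3\ln(n)/p$ non-neighbors, proved by a fixed-size union bound) gives $|V\setminus N^+(S\cup I)|\le 3\ln(n)/p$, which already bounds the left term absolutely. If $|S\cup I|<3\ln(n)/p$, then also $|T|\le |S|/2$ is small, so there are only $n^{O(\ln(n)/p)}$ triples. Here the paper reveals the edges of $I$, sets $U=V\setminus(S\cup T\cup N^+(I))$, and observes that for each $u\in U$ independently, $p_2=\Pr[u\in N(S)]\ge 2\,p_1=2\Pr[u\in N(T)\setminus N(S)]$ (this uses $|S|\ge 2|T|$). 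The difference $D$ is then analyzed as a biased random walk with step $+1$ probability $p_1$ and step $-1$ probability $p_2$, and gambler's ruin gives $\Pr[D\ge i]\le (p_1/p_2)^i\le 2^{-i}$. Setting $i=8\ln^2(n)/p$ beats the $n^{O(\ln(n)/p)}$ union bound. This coupling/random-walk idea is the missing ingredient in your approach; a direct concentration bound on $D$ (e.g., Bernstein) does not give decay strong enough to beat the union bound, because the variance scales with $|U|\,p_2$, which can be of order $n$.
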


The proof of \cref{lem:Gnp-good} can be found in \cref{sec:proof-Gnp-good}. 


\subsection{Analysis of the 2-State MIS Process on \texorpdfstring{$G_{n,p}$}{Gnp}}
\label{sec:2state-Gnp}

In this section, we prove the following bound on the stabilization time of the 2-state MIS process on a random $G_{n,p}$ graph. 

\begin{theorem}
    \label{thm:GnpUB-2s}
    The stabilization time of the 2-state MIS process on a random graph drawn from $G_{n,p}$, where $p = O(\sqrt{\log(n)/ n})$ or $p=\Omega(1/\log^{2.5} n)$, \itodo{changed} 
    is $O(\log^{5.5} n)$
    with probability $1-O(n^{-2})$.
\end{theorem}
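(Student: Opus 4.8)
The plan is to prove \cref{thm:GnpUB-2s} by a potential-function argument on $|V_t|$, the number of non-stable vertices, showing that from \emph{any} configuration of vertex states, after $r = \Theta(\log^{4.5} n)$ rounds the expected number of non-stable vertices drops by a $(1-1/\poly\log n)$ factor; iterating this $O(\log^{2} n)$ times drives $\Exp[|V_t|]$ below $n^{-2}$, and Markov plus the union bound (over the two cases below and over the $O(n^{-2})$ failure probability that $G$ is not $(n,p)$-good) finishes the argument. Throughout we condition on $G$ being $(n,p)$-good, which holds w.h.p.\ by \cref{lem:Gnp-good}, and we treat $p = \Omega(1/\log^{2.5}n)$ separately: in that regime $np = \Omega(n/\log^{2.5}n) \geq n/\poly\log n$, the diameter is $2$ by (P\ref{def:good-common-lb}), and a direct argument (essentially that of \cref{thm:clique}, or the observation that the MIS has $O(\log n)$ size and few candidate vertices survive) gives an $O(\poly\log n)$ bound; so the bulk of the work is the sparse regime $p = O(\sqrt{\log(n)/n})$, where $\theta_u(i)$ is controlled since by (P\ref{def:good-common}) any two vertices have at most $O(\log n)$ common neighbors, hence $\theta_u(i) = O(i\log n)$.

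The core lemma I would establish is a ``progress in $O(\log n)$ rounds'' statement: for any round $t$, $\Exp[\,|V_t| - |V_{t+O(\log n)}|\,] \geq |V_t|/\poly\log n$. To prove it I split on the structure of $G[V_t]$. \textbf{Case 1 (many active vertices):} if $|A_t| \geq |V_t|/\poly\log n$, then using (P\ref{def:good-degree}) a constant fraction of active vertices have $O(\log^2 n)$ active neighbors, so they are $\poly\log n$-active, and \cref{lem:active-vertex-st}/\cref{lem:neighbors2-st} give that $\Omega(|A_t|/\poly\log n)$ vertices stabilize (black) within $O(\log\log n)$ rounds in expectation — this is the easy case. \textbf{Case 2 (few active vertices):} $|A_t| < |V_t|/\poly\log n$. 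Here I further split on whether $|V_t|$ is large. If $|V_t| > p^{-1}\log^2 n$, then by (P\ref{def:good-subset-neighbors}) most vertices of $V_t$ have $\geq p|V_t|/2 \geq \log^2 n$ neighbors inside $V_t$; since almost all of those are white-non-active, a vertex has a good chance that all such neighbors turn white in the next round, making it (or forcing it to become) active — pushing us back toward Case 1 after a couple of rounds. The genuinely delicate situation is $|V_t| \leq p^{-1}\log^2 n$ with few active vertices, handled as in the overview of \cref{sec:Techniques}.

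For that delicate sub-case: by (P\ref{def:good-degree}) the average degree in $G[V_t]$ is $O(\log^2 n)$, so a constant fraction of vertices $u \in V_t$ have $d_u = |N(u)\cap V_t| = O(\log^2 n)$. Fix such a $u$. If each neighbor $v$ of $u$ (in $V_t$) has probability $< 1/(2 d_u)$ of being non-stable-black at \emph{any} point in the next $r_0 = O(\log\log n)$ rounds, then a union bound shows $u$ has constant probability that all its neighbors are simultaneously white (hence $u$ is active, indeed $d_u$-active) at some such round; otherwise some neighbor $v$ has probability $b \geq 1/(2d_u) = 1/\poly\log n$ of becoming non-stable-black in $\{t+\ell,\dots,t+r_0\}$, and then \cref{lem:active-to-pretty} (if $v$ is active at $t$) or \cref{lem:non-active-to-pretty}/\cref{lem:non-active-to-pretty2} (if not) yields that $v$ becomes $k$-active with probability $\geq r_0^{1-\alpha}(b/\mathrm{const})^\alpha = 1/\poly\log n$, where $k = \theta_v(\alpha\log(\poly\log n)) + O(\log\poly\log n) = O(\log^2 n)$ using (P\ref{def:good-common}). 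Either way, with probability $1/\poly\log n$, $u$ or one of its neighbors is $\poly\log n$-active within $O(\log\log n)$ rounds, so by \cref{lem:active-vertex-st} $u$ stabilizes within a further $O(\log n)$ rounds with probability $1/\poly\log n$. Summing over the $\Omega(|V_t|)$ such $u$ gives $\Exp[|V_t| - |V_{t+O(\log n)}|] \geq |V_t|/\poly\log n$, completing the core lemma. The main obstacle is precisely this last sub-case: the dependence of $\theta_u(i)$ on $i$ through (P\ref{def:good-common}) only keeps $k = \poly\log n$ when $np^2 = O(\log n)$, i.e.\ $p = O(\sqrt{\log n/n})$, which is exactly where the theorem's hypothesis bites; a careful bookkeeping of how the $r = O(\log^{4.5}n)$ factor in \cref{thm:GnpUB-2s} accumulates (one factor $O(\log n)$ for the per-round stabilization horizon, another $O(\log^{3.5}n)$ from the $\poly\log$ loss in the success probability that must be compensated by repetition, and the outer $O(\log^2 n)$ rounds of the potential argument contributing to the final $O(\log^{5.5}n)$) is the routine but error-prone part.
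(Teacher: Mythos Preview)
Your overall scaffold --- condition on $(n,p)$-goodness, run a potential argument on $|V_t|$, split into cases, and in the hardest case invoke \cref{lem:non-active-to-pretty,lem:active-to-pretty,lem:non-active-to-pretty2} --- matches the paper, and your treatment of the delicate sub-case $|V_t|\le p^{-1}\log^2 n$ is essentially correct.  But three of your four remaining cases have real problems.

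\textbf{Case~1 (many active vertices).}  Your claim that ``a constant fraction of active vertices have $O(\log^2 n)$ active neighbors'' is false in general: by (P\ref{def:good-degree}) the average degree of $G[A_t]$ is $\max\{8p|A_t|,4\ln n\}$, and $p|A_t|$ can be as large as $pn=\Theta(\sqrt{n\log n})$ in the sparse regime.  The paper's \cref{lem:GG-case-large-At} does \emph{not} bound this degree by $\poly\log n$; instead it splits $A_t$ in half, notes the low-degree half $S$ is $(16p|A_t|)$-active, then uses (P\ref{def:good-subset-neighbors}) to show that most $u\in V_t$ have $\Omega(p|A_t|)$ neighbors in $S$.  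The point is that in \cref{lem:neighbors2-st} the sum $\sum_i(2k_i)^{-1}\ge \Omega(p|A_t|)/O(p|A_t|)=\Omega(1)$, so a constant fraction of $|V_t|$ stabilizes --- the two $p|A_t|$ factors cancel.  Note also that the paper's threshold is absolute, $|A_t|\ge 80\ln(n)/p$, precisely because (P\ref{def:good-subset-neighbors}) needs $|S|\ge 40\ln(n)/p$; your relative threshold $|A_t|\ge|V_t|/\poly\log n$ does not give this.

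\textbf{Case~2a ($|V_t|$ large, $|A_t|$ small).}  ``Pushing back to Case~1'' is not what happens, and your use of (P\ref{def:good-subset-neighbors}) is mis-aimed (it bounds vertices \emph{outside} $S$).  The paper's \cref{lem:GG-case-large-Vt} never argues that $|A_t|$ grows; it instead partitions $A_t$ into low-degree $S$ and high-degree $T$, and uses property (P\ref{def:good-STI}) --- which you did not invoke --- to show $|N(T)\setminus N^+(S\cup I_t)|\le |N(S)\setminus N^+(I_t)|+O(\log^2(n)/p)$, hence at least $|V_t|/10$ non-stable vertices lie in $N^+(S)$.  Since $S\subseteq A_t^{O(\log n)}$, each such vertex stabilizes with probability $\Omega(1/\log n)$ by \cref{lem:active-vertex-st}.

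\textbf{Dense regime $p=\Omega(1/\log^{2.5}n)$.}  The clique argument of \cref{thm:clique} does not transfer: a $G_{n,p}$ graph with $p=1/\poly\log n$ has vertices of degree $n/\poly\log n$, nothing like the all-or-nothing structure of $K_n$.  The paper's actual route is: run \cref{lem:GG-case-large-At,lem:GG-case-large-Vt} until $|V_t|<10\ln^2(n)/p\le O(\log^{4.5}n)$ (this takes $O(\log^3 n)$ rounds), then observe that the induced graph on $V_t$ has maximum degree $\Delta<|V_t|=O(\log^{4.5}n)$ and apply \cref{thm:UB-delta} for an $O(\Delta\log n)=O(\log^{5.5}n)$ finish.
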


The theorem follows by combining \cref{lem:Gnp-good} and the next lemma, which analyzes the 2-state MIS process on a good graph.

\begin{lemma}
    \label{lem:Good-2s}
    The stabilization time of the 2-state MIS process on any $(n,p)$-good graph $G=(V,E)$,  where $p = O(\sqrt{\log(n)/ n})$ or $p=\Omega(1/\log^{2.5} n)$, 
    \itodo{changed}
    is $O(\log^{5.5} n)$
    with probability $1-O(n^{-2})$.
\end{lemma}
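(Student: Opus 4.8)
The plan is to show that, starting from any vertex states, after $O(\log n)$ rounds a constant-or-$1/\poly(\log n)$ fraction of the currently non-stable vertices become stable (in expectation), and then iterate $O(\log^{4.5} n)$ times to drive $\Exp[|V_t|]$ below $n^{-2}$, at which point Markov's inequality gives the high-probability bound. So the core statement to prove is a "progress lemma" of the form: for any round $t$, there is an $r = O(\log n)$ such that $\Exp[|V_{t+r}| \mid \mathcal{F}_t] \le (1 - 1/\poly(\log n))\,|V_t|$, where $\mathcal{F}_t$ records the states at the end of round $t$. Summing $O(\log^{4.5} n)$ such windows and using $(1-\delta)^{k} \le e^{-\delta k}$ yields $\Exp[|V_{T}|] \le n^{-2}$ for $T = O(\log^{5.5} n)$, whence $\Pr[|V_T| \ge 1] \le n^{-2}$.

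\textbf{Structure of the progress lemma via case analysis on $|V_t|$ and $|A_t|$.} I would split into three regimes, using the structural properties of good graphs (\cref{def:good-graph}). \emph{Case 1: $|A_t|$ is large} (say $\ge |V_t|/\poly(\log n)$). By (P\ref{def:good-degree}) applied to $S = A_t$, the average degree inside $G[A_t]$ is $O(\max\{p|A_t|, \log n\})$, so a constant fraction of active vertices are $k$-active with $k = O(\max\{p|A_t|, \log n\})$; \cref{lem:active-vertex-st} then makes each of those stable black within $O(\log k) = O(\log n)$ rounds with probability $\Omega(1/k)$. Handling the fact that these events are correlated (one vertex stabilizing black can prevent a neighbor from doing so) is exactly what \cref{lem:neighbors2-st} is for — applied to the active neighborhoods. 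This gives $\Exp[|V_{t+O(\log n)}|] \le (1 - 1/\poly(\log n))|V_t|$. \emph{Case 2: $|V_t|$ is large}, say $|V_t| \ge C p^{-1}\log^2 n$, but $|A_t|$ is small. Then many non-active non-stable vertices exist; I'd want to argue that a good fraction of them become active within $O(1)$ rounds — roughly, if $u \in V_t \setminus A_t$ is white with few non-stable-black neighbors, its neighbors have constant probability of all flipping white, making $u$ active; the average-degree bound (P\ref{def:good-degree}) and the edge-count bounds (P\ref{def:good-ST}) ensure enough such $u$ exist — and then we reduce to Case 1. \emph{Case 3: the hard case} — $|V_t| = O(p^{-1}\log^2 n)$ and fewer than a $1/\poly(\log n)$ fraction of $V_t$ are active.

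\textbf{The hard case is the main obstacle}, and it is where the refined lemmas of \cref{sec:refined} do the work, following the overview in \cref{sec:Techniques}. Restricting to $G[V_t]$, property (P\ref{def:good-degree}) gives average degree $O(\log^2 n)$, so a constant fraction of $u \in V_t$ have degree $d = O(\log^2 n)$ there. Fix such a $u$. If every non-stable neighbor $v$ of $u$ has probability $< 1/(2d)$ of becoming non-stable-black in the next $r = O(\log\log n)$ rounds, then by a union bound $u$ has constant probability of having all neighbors non-black at some point, i.e. becoming $d$-active, hence stabilizing black within $O(\log d) = O(\log\log n)$ further rounds with probability $\Omega(1/d) = 1/\poly(\log n)$. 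Otherwise some neighbor $v$ has probability $b \ge 1/(2d)$ of becoming black in a subinterval, and \cref{lem:active-to-pretty} (or \cref{lem:non-active-to-pretty}/\cref{lem:non-active-to-pretty2}, depending on whether $v$ is active at $t$) shows $v$ becomes $k$-active, with $k = \theta_v(\cdot) + \poly(\log\log n) = \poly(\log n)$, with probability $r^{1-\alpha}(b/\poly)^\alpha = 1/\poly(\log n)$; here the bound $k = \poly(\log n)$ uses (P\ref{def:good-common}) to bound $\theta_v$ by the max number of common neighbors, $O(\max\{np^2, \log n\})$, which is $\poly(\log n)$ precisely when $p \le \poly(\log n)\cdot n^{-1/2}$. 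Either way, $u$ or a neighbor of $u$ is $\poly(\log n)$-active within $O(\log\log n)$ rounds with probability $1/\poly(\log n)$, so $u$ stabilizes within $O(\log n)$ rounds with probability $1/\poly(\log n)$, again by \cref{lem:active-vertex-st}. Since a constant fraction of $V_t$ are such low-degree vertices, $\Exp[|V_{t+O(\log n)}|] \le (1-1/\poly(\log n))|V_t|$ in this case too.

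\textbf{The $p = \Omega(1/\log^{2.5} n)$ regime} is the easy extreme and I would dispatch it separately at the start: by (P\ref{def:good-common-lb}) the diameter is $2$, and more strongly any two vertices share many common neighbors, so once a single stable black vertex appears — which happens within $O(\log n)$ rounds w.h.p. by the Case-1 argument applied once to the whole graph — essentially every other vertex is adjacent to it or stabilizes quickly; a direct argument (in the spirit of \cref{thm:clique}) then gives $O(\log^2 n)$ rounds w.h.p. The one bookkeeping subtlety throughout is that \cref{lem:active-vertex-st}, \cref{lem:neighbors2-st}, and the \cref{sec:refined} lemmas are all stated conditionally on the state at the \emph{start} of a window and give only $1/\poly(\log n)$ success probability for an individual vertex; converting "each vertex succeeds with probability $\rho = 1/\poly(\log n)$" into "$\rho \cdot |V_t|$ vertices succeed in expectation" is fine by linearity of expectation (no independence needed), but one must be careful that a vertex $u$ leaving $V_t$ really means it is permanently stable — which it does, since stability is monotone. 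Stitching the three cases into one clean inequality $\Exp[|V_{t+r}| \mid \mathcal F_t] \le (1 - 1/\poly(\log n))|V_t|$ with a uniform $r = O(\log n)$, and then choosing the iteration count $O(\log^{4.5} n)$ so the product telescopes below $n^{-2}$, completes the proof; I expect the $\log^{5.5} n$ exponent to come out of multiplying the $O(\log n)$ window length by the $O(\log^{4.5} n)$ iterations, where the $4.5$ tracks the worst-case $\poly(\log n)$ success probability from the hard case.
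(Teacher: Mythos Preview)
Your overall architecture---a ``progress lemma'' $\Exp[|V_{t+r}|\mid\mathcal F_t]\le (1-1/\poly(\log n))|V_t|$ iterated $O(\log^{4.5}n)$ times over windows of length $r=O(\log n)$, then Markov---matches the paper exactly, and your treatment of the hard case (Case~3) follows the paper's \cref{sec:refined} machinery correctly. Two points, however, diverge from the paper and the second is a genuine gap.

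\textbf{Case 2.} Your plan is to argue that many non-active non-stable vertices ``become active within $O(1)$ rounds'' and then ``reduce to Case~1''. This is not how the paper proceeds and it is not clear it works: a white non-active $u$ becomes active only when all its (possibly many) black neighbors flip white simultaneously, and there is no reason $|A_{t+1}|$ should cross the Case-1 threshold. The paper instead observes that every vertex in $V_t$ is already in $A_t$ or adjacent to $A_t$, splits $A_t$ into a large low-active-degree part $S$ and a small part $T$, and invokes property~\cref{def:good-STI} to show $|N(T)\setminus N^+(S\cup I_t)|$ is small, so at least a constant fraction of $V_t$ lies in $S\cup N(S)$; those vertices stabilize with probability $\Omega(1/\log n)$ via \cref{lem:active-vertex-st}. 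You never use \cref{def:good-STI}, which is the key structural ingredient here.

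\textbf{The $p=\Omega(1/\log^{2.5}n)$ regime.} Your argument here does not work. Diameter~2 does not mean that one stable-black vertex stabilizes ``essentially every other vertex'': it stabilizes only its $\Theta(np)$ neighbors, and the remaining $\Theta(n)$ vertices (for $p=1/\poly(\log n)$) still have to solve MIS among themselves; there is no clique-style shortcut. The paper's route is entirely different and you are missing it: first run Cases~1 and~2 alone for $O(\log^3 n)$ rounds to drive $|V_t|$ below $10\ln^2(n)/p \le 10\varepsilon^{-1}\ln^{4.5}n$; then note that the induced graph on $V_t$ has maximum degree $\Delta<|V_t|=\poly(\log n)$, and finish by applying the general $O(\Delta\log n)$ bound (\cref{thm:UB-delta}). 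The $\log^{5.5}n$ total comes from $\Delta\log n = O(\log^{5.5}n)$ in this regime, not from the Case-3 analysis (which is only valid for small $p$).
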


It is straightforward to extend the above statements so that $p \leq \poly(\log n)\cdot n^{-1/2}$ or $p \geq 1/\poly(\log n)$, for any desired $\poly(\log n)$ term, by adjusting the exponent of $\log n$ in the stabilization time bound.\gtodo{new}

\subsubsection{
Proof of \texorpdfstring{\cref{lem:Good-2s}}{Lemma GG}}
\label{sec:2state-GG}

We show that starting from any vector of vertex states, the process makes sufficient progress after $\poly(\log n)$ rounds, where progress is measured by the expected number of vertices that become stable.
All lemmas below assume $G=(V,E)$ is an arbitrary $(n,p)$-good graph, and the probabilistic statements assume we know the states of the vertices at the end of round $t$. 
%
The first lemma considers the case in which the number of active vertices is large, namely, $|A_t|= \Omega(\log (n)/p)$.

\begin{lemma}\label{lem:GG-case-large-At}
  If $|A_t| \geq 80\ln(n)/p$ then there is a constant $\epsilon>0$ such that
  $
     \Exp[|V_{t+\log n}| ] \leq \left(1-
     \epsilon
     \right)\cdot |V_{t}|
     .
  $
\end{lemma}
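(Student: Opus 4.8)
The goal is to show that when there are many active vertices ($|A_t| \geq 80\ln(n)/p$), a constant fraction of the non-stable vertices become stable within $\log n$ rounds, in expectation. The natural strategy is: identify a large subset $S \subseteq A_t$ of active vertices that each have relatively few active neighbors (so each is $k$-active for a modest $k$), then apply Lemma~\ref{lem:active-vertex-st} (or better, Lemma~\ref{lem:neighbors2-st}) to argue that many of them—or many of their neighbors—become stable black, which in turn kills off a constant fraction of $N^+(\cdot)$ from the non-stable set. I would first reduce to the induced subgraph $G[A_t]$: by property~(P\ref{def:good-degree}), the average degree of $G[A_t]$ is at most $\max\{8p|A_t|, 4\ln n\}$, so at least $|A_t|/2$ vertices of $A_t$ have at most $\max\{16p|A_t|, 8\ln n\}$ active neighbors; call this set $S$ and let $k = \max\{16p|A_t|, 8\ln n\}$, so $S \subseteq A_t^k$ and $|S| \geq |A_t|/2 \geq 40\ln(n)/p$.

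**The counting argument.** Now I want to count how many non-stable vertices become stable because of $S$. The clean way is Lemma~\ref{lem:neighbors2-st}: apply it to the set of active neighbors in $S$ of a fixed vertex. But actually the more direct route is to bound $\Exp[|I_{t+\log(k+1)} \cap N^+(S)|]$ from below, since every vertex in $N^+(I_{t+\log(k+1)})$ is stable. From Lemma~\ref{lem:active-vertex-st}, each $v \in S$ independently-enough has probability at least $(2ek)^{-1}$ of lying in $I_{t + \log(k+1)}$, so $\Exp[|I_{t+\log(k+1)}|] \geq |S|/(2ek)$. However, these new stable-black vertices may overlap heavily in their neighborhoods, so I need a lower bound on $|N^+(I_{t+\log(k+1)})|$, not just on $|I_{t+\log(k+1)}|$. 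This is where property~(P\ref{def:good-subset-neighbors}) enters: since $|S| \geq 40\ln(n)/p$, at most $|S|/2$ vertices outside $S$ have fewer than $p|S|/2$ neighbors in $S$; equivalently, "most" vertices see many vertices of $S$. I would instead argue directly about $V_{t+\log n}$: a non-stable vertex $u \in V_{t+\log(k+1)}$ must have \emph{no} stable black neighbor, i.e., $N(u) \cap I_{t+\log(k+1)} = \emptyset$. So the number of $u$ that fail to become stable is at most the number of vertices with no neighbor in $I_{t+\log(k+1)}$, and I must show this is at most $(1-\epsilon)|V_t|$ in expectation.

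**Handling the dependence and the two ranges of $p$.** The coins used in Lemma~\ref{lem:active-vertex-st} for distinct $v \in S$ are the $\phi_{t+1}(v),\dots,\phi_{t+\log(k+1)}(v)$, which are independent across $v$; but whether $v$ actually ends in $I$ also depends on its \emph{active} neighbors' coins. The standard fix, as in the proof of Theorem~\ref{thm:trees}, is to lower-bound the event "$v$ becomes stable black in $O(\log k)$ rounds" by the fully-independent event "$\phi$-coins of $v$ are all black and the $\phi$-coins of each active neighbor are not all black," which only references each vertex's own coins—but here that is exactly the bound $(2ek)^{-1}$ already baked into Lemma~\ref{lem:active-vertex-st}'s proof, and crucially it makes the indicator of "$v \in I$" a function of coins in $v$'s closed neighborhood only. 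I would then use a second-moment or, more simply, a direct expectation computation: $\Exp[\text{number of non-stable } u \text{ at round } t+\log(k+1)]$. For this I'd bound, for each non-stable $u$, $\Pr[N(u) \cap I_{t+\log(k+1)} = \emptyset]$ — but this requires $u$ to have \emph{some} vertex of $S$ in its neighborhood in the first place. Vertices of $V_t$ with no neighbor in $S$ are a worry; I'd want to show there are few of them, again via (P\ref{def:good-subset-neighbors}) applied with the set $S$ (after possibly discarding a $1/2$-fraction). Finally, note that when $p = \Omega(1/\log^{2.5}n)$ we have $k = \max\{16p|A_t|, 8\ln n\} = O(pn) = O(n)$ and $\log(k+1) = O(\log n)$, so $t + \log(k+1) \leq t + O(\log n)$, consistent with the claimed horizon; and when $p = O(\sqrt{\log(n)/n})$, if $|A_t|$ is as large as $\Theta(n)$ then $k = O(p n) = O(\sqrt{n \log n})$, still giving $\log(k+1) = O(\log n)$. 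So in both ranges the number of rounds fits inside the $\log n$ budget (with room to spare), and iterating is handled by the outer proof.

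**Main obstacle.** The delicate point is going from "$\Exp[|I_{t+O(\log n)}|]$ is large" to "$\Exp[|V_t| - |V_{t+O(\log n)}|]$ is a constant fraction of $|V_t|$," i.e., controlling the \emph{overlap} of neighborhoods of the newly-stabilized black vertices and ensuring enough of $V_t$ is actually dominated by $S$. I expect the cleanest handling is: restrict attention to $V_t \cap N^+(S)$ (the rest is small or irrelevant), partition or greedily select vertices of $S$ that are "spread out," and combine Lemma~\ref{lem:neighbors2-st} applied per-vertex-$u$ (to the active neighbors of $u$ inside $S$) with property~(P\ref{def:good-subset-neighbors}) to guarantee most $u \in V_t$ have $\Omega(1)$ — in fact many — candidate stabilizing neighbors, so each such $u$ leaves $V$ with constant probability. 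That per-vertex probability, summed, gives the claimed $\Exp[|V_{t+\log n}|] \leq (1-\epsilon)|V_t|$. Getting the quantifiers on set sizes to line up with (P\ref{def:good-subset-neighbors})'s threshold $40\ln(n)/p$ is the main bookkeeping hazard.
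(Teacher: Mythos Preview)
Your proposal lands on exactly the paper's argument---take $S$ to be the $|A_t|/2$ lowest-degree vertices of $G[A_t]$ (so $S\subseteq A_t^{2k}$), use (P\ref{def:good-subset-neighbors}) to show most of $V_t$ has many neighbors in $S$, and apply Lemma~\ref{lem:neighbors2-st} per vertex---but you have not yet noticed that your ``main obstacle'' is a phantom. Once you go per-vertex with Lemma~\ref{lem:neighbors2-st} and sum via linearity of expectation, there is no overlap to manage: correlations between the stabilization events for different $u$ are irrelevant to $\Exp[|V_{t+\log n}|]$.

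The one piece of arithmetic you left undone is the crux. Since $|A_t|\geq 80\ln(n)/p$, we have $8p|A_t|\geq 640\ln n$, so the $\max$ in (P\ref{def:good-degree}) resolves to $k=8p|A_t|$; no case split on the range of $p$ is needed. By (P\ref{def:good-subset-neighbors}), at most $|S|/2$ vertices outside $S$ have fewer than $p|S|/2$ neighbors in $S$, and since $|S|+|S|/2=3|A_t|/4\leq 3|V_t|/4$, at least $|V_t|/4$ vertices $u\in V_t$ have $\geq p|S|/2$ neighbors in $S\subseteq A_t^{2k}$. Lemma~\ref{lem:neighbors2-st} then gives, for each such $u$,
\[
\Pr[u\notin V_{t+\log n}]\;\geq\;\tfrac15\min\Bigl\{1,\ \tfrac{p|S|/2}{2\cdot 2k}\Bigr\}
\;=\;\tfrac15\cdot\frac{p|A_t|/4}{32\,p|A_t|}\;=\;\tfrac{1}{640},
\]
an absolute constant because both $|S|$ and $k$ scale linearly with $|A_t|$. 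Summing yields $\Exp[|V_{t+\log n}|]\leq(1-1/2560)|V_t|$. Your detour through $\Exp[|I_{t+\log(k+1)}|]$ and neighborhood coverage is unnecessary and should be dropped.
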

\begin{proof} 
    From property~\cref{def:good-degree} in \cref{def:good-graph} of good graphs, the average degree of the induced subgraph $G[A_t]$ is at most $k = \max\{8p|A_t|,\, 4\ln n\} = 8p|A_t|$.
    
    Let $S$ be a subset of $A_t$ consisting of the  $|A_t|/2$ vertices $u\in A_t$ with the smallest degree in $G[A_t]$,
    i.e., for any two vertices $u\in S$ and $u'\in A_t\setminus S$, $|N(u)\cap A_t| \leq|N(u')\cap A_t|$.
    It follows that for all $u\in S$, 
    $|N(u)\cap A_t|\leq 2k$; thus $S \subseteq A_t^{2k}$.
    Let $R = \{u\in V\setminus S \colon |N(u)\cap S| < p|S|/2\}$. 
    Since $|S| = |A_t|/2 \geq 40\ln(n)/p$,  property~\cref{def:good-subset-neighbors} in \cref{def:good-graph} yields $|R| \leq |S|/2$.
    Then the number of vertices $u\in V_t$ with $|N(u)\cap S| \geq p|S|/2$ is at least
    \[
        |V_t\setminus (S\cup R)| 
        \geq |V_t| - (|S| + |R|) 
        \geq |V_t| - 3|S|/2 
        = |V_t| - 3|A_t|/4 
        \geq |V_t|/4
        .
    \]
    Since each of those vertices $u$ has at least $p|S|/2$ neighbors in $S\subseteq A_t^{2k}$, \cref{lem:neighbors2-st} gives that the probability at least one neighbor of $u$ is stable black (and thus $u$ is also stable) at the end of round $t+\log n$ is at least
    \[
        (1/5)\cdot\min\left\{1,\, (p |S|/2) \cdot (4k)^{-1}\right\}
        =
        (1/5)\cdot\min\left\{1,\, (p |A_t|/4) \cdot (32p|A_t|)^{-1}\right\}
        =
        1/640
        .
    \]
    Then the expected number of vertices that are not stable at the end of round $t + \log n$ is
    \[
        \Exp[|V_{t+\log n}|] 
        \leq 
        |V_t| - (|V_t|/4) \cdot 1/640
        \leq
        |V_t| -|V_t|/2560
        .
        \qedhere
    \]
\end{proof}


The next lemma considers the case in which the number of vertices that are not stable is large, namely $|V_t| = \Omega(\ln^2 (n)/p)$, and  $|A_t| = O(\ln(n)/p)$. 

\begin{lemma}\label{lem:GG-case-large-Vt}
  If $|V_t| \geq  10\ln^2(n)/p$
  and
  $|A_t|\leq 80\ln(n)/p$ 
  then 
  there is a constant $\epsilon>0$ such that
  $
     \Exp[|V_{t+\log n}| ] 
     \leq \left(1-
     \epsilon/\ln n
     \right)
     \cdot |V_t|
     .
  $
\end{lemma}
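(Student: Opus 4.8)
The plan is to exploit the fact that, by property~\cref{def:good-degree}, even though $|A_t|$ is small, the induced subgraph $G[V_t]$ on the (possibly adversarial) set $V_t$ has average degree at most $\max\{8p|V_t|,\,4\ln n\}$, so a constant fraction of vertices of $V_t$ have degree $O(\max\{p|V_t|,\ln n\})$ in $G[V_t]$; more importantly, since $|V_t|\geq 10\ln^2(n)/p$ we will in fact want to isolate vertices whose degree \emph{into the non-stable set} is at most roughly $O(\ln^2 n)$ — call this set $S$, with $|S| = \Omega(|V_t|)$ — because for such a low-degree non-stable vertex $u$, if all its non-stable neighbors happen to be white at some round, then $u$ becomes $k$-active with $k = O(\ln^2 n)$, or stabilizes. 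The difficulty is that $u$ may be prevented from ever becoming low-$k$-active because in every round at least one neighbor of $u$ is non-stable \emph{black}. First I would fix such a low-degree $u\in S$ with $d_u = |N(u)\cap V_t| = O(\ln^2 n)$ neighbors in $V_t$, and split into two cases over a short window of $r = O(\log\log n)$ rounds.

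\textbf{Case A: every neighbor $v\in N(u)\cap V_t$ has probability less than $1/(2d_u)$ of becoming non-stable black during rounds $t+1,\dots,t+r$.} Then by a union bound, with constant probability \emph{no} neighbor of $u$ in $V_t$ is ever non-stable black in this window; combined with the (constant, by the update rule) probability that $u$ itself is white and its current black neighbors — which all lie outside $V_t$? no: its stable black neighbors would make $u$ stable already, so $u$'s black neighbors are in $V_t$ too — hmm, more carefully: if $u\in V_t\setminus A_t$ then $u$ is white with a non-stable black neighbor, and if $u\in A_t$ it is even easier. In either case, once all of $u$'s $V_t$-neighbors turn white in some round $\leq t+r$, $u$ is white with no black neighbor, hence active with at most $d_u = O(\ln^2 n)$ active neighbors, i.e.\ $u\in A^k_{t+i}$ for some $i\leq r$ with $k = O(\ln^2 n)$; or $u$ becomes stable black directly. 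So in Case~A, $u$ is $O(\ln^2 n)$-active (or stable) within $O(\log\log n)$ rounds with constant probability.

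\textbf{Case B: some neighbor $v\in N(u)\cap V_t$ has probability $b\geq 1/(2d_u) = 1/\poly(\log n)$ of becoming non-stable black during $t+\ell,\dots,t+r$} (for a suitable small constant $\ell$). Here I invoke the refined machinery: applying whichever of \cref{lem:non-active-to-pretty}, \cref{lem:active-to-pretty}, \cref{lem:non-active-to-pretty2} fits $v$'s initial state, with $r = O(\log\log n)$, gives that $v$ becomes $k$-active during this window with probability at least $r^{1-\alpha}(b/O(r))^\alpha = 1/\poly(\log n)$, where $k = \theta_v(\cdots) + O(\log(r/b)) = O(\theta_v(O(\log\log n))) + O(\log\log n)$. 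By property~\cref{def:good-common}, any two adjacent vertices share at most $\max\{6np^2,4\ln n\}$ common neighbors, so $\theta_v(i) \leq i\cdot\max\{6np^2,4\ln n\} = O(\log\log n)\cdot\poly(\log n) = \poly(\log n)$ when $p = O(\sqrt{\log n/n})$ (the condition $np^2 = O(\log n)$), and when $p = \Omega(1/\log^{2.5}n)$ one instead uses property~\cref{def:good-common-lb} ($\diam(G)\leq 2$) together with the logarithmic-switch-free argument — actually for the 2-state process on large $p$ the bound will come through $np^2 = O(n/\poly\log n)$ being too large, so this regime must be handled by the separate $p=\Omega(\cdot)$ branch where $|V_t|$ small forces $np^2|V_t|$-type quantities to still be controlled; I would keep $k = \poly(\log n)$ throughout.

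In both cases, I conclude: for each $u\in S$, with probability $1/\poly(\log n)$, either $u$ or some neighbor of $u$ is $k$-active with $k = \poly(\log n)$ at some round in $\{t+1,\dots,t+r\}$, $r = O(\log\log n)$. By \cref{lem:active-vertex-st}, a $k$-active vertex becomes stable black in $O(\log k) = O(\log\log n)$ further rounds with probability $\Omega(1/k) = 1/\poly(\log n)$, whence $u$ becomes stable (itself, or because a neighbor stabilizes black) within $\log n$ rounds with probability $1/\poly(\log n) =: \epsilon/\ln n$ after absorbing the $\poly(\log n)$ loss into the polylog — wait, we need exactly a $\Theta(1/\ln n)$ factor, so the polylog losses must be shown to be only $O(\ln n)$; this is where the bookkeeping of exponents (matching the claimed $O(\log^{5.5}n)$ overall) lives, and is the place where the $\alpha \approx 2.41$ exponents and the $\theta_u$ bounds must be tracked carefully. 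Finally, since $|S| = \Omega(|V_t|)$ and each $u\in S$ independently-enough contributes expected stabilization probability $\Omega(1/\ln n)$ — I would phrase this via linearity of expectation on $\sum_{u\in S}\Pr[u\notin V_{t+\log n}]$, which needs no independence — we get $\Exp[|V_{t+\log n}|] \leq |V_t| - \Omega(|S|/\ln n) \leq (1 - \epsilon/\ln n)|V_t|$.

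\textbf{Main obstacle.} The hard part is Case~B: ensuring that the ``common neighbor'' bound $\theta_v(\cdot) = \poly(\log n)$ holds simultaneously with the probability lower bound from \cref{lem:non-active-to-pretty}/\cref{lem:active-to-pretty}/\cref{lem:non-active-to-pretty2}, and correctly selecting $\ell$ and $r$ so the hypotheses $b_r \geq 2^{-(\ell-2)}$ (resp.\ $2^{-(\ell-4)}$) of those lemmas are met given only $b = 1/\poly(\log n)$ — this forces $\ell = \Theta(\log\log n)$ and hence $r = \Theta(\log\log n)$, which is exactly why $r = O(\log\log n)$ rather than $O(1)$. Threading the $\poly(\log n)$ factors so the net per-vertex stabilization probability is $\Omega(1/\ln n)$ (not $1/\poly(\log n)$ for a larger polynomial) is the delicate quantitative point, and is presumably where the final $O(\log^{5.5}n)$ exponent is pinned down.
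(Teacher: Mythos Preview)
You have essentially written the proof of \cref{lem:GG-case-small-AtVt} (the case $|V_t|\le 10\ln^2(n)/p$), not of \cref{lem:GG-case-large-Vt}. The gap is at the very first step: you take $S\subseteq V_t$ to be the vertices with $d_u=|N(u)\cap V_t|=O(\ln^2 n)$, but this is false under the present hypotheses. Property~\cref{def:good-degree} bounds the average degree of $G[V_t]$ by $\max\{8p|V_t|,4\ln n\}$, and since here $|V_t|\ge 10\ln^2(n)/p$ that maximum can be $\Theta(p|V_t|)$, which is unbounded (as large as $\Theta(pn)$). So there is no reason any constant fraction of $V_t$ has degree $O(\ln^2 n)$ into $V_t$, and your Case~A/Case~B dichotomy, as well as the use of \cref{lem:non-active-to-pretty,lem:active-to-pretty,lem:non-active-to-pretty2} with $d_u=O(\ln^2 n)$, never gets off the ground. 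You even noticed the symptom at the end: your per-vertex probability is only $1/\poly(\log n)$ rather than the required $\Omega(1/\ln n)$.

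The paper's proof is completely different and much simpler. The key observation you are missing is that every non-stable vertex is either active or adjacent to an active (black) vertex, so $V_t\subseteq N^+(A_t)$. Since $|A_t|\le 80\ln(n)/p$, property~\cref{def:good-degree} applied to $G[A_t]$ gives average degree $k\le 640\ln n$ there; taking $S$ to be the $2|A_t|/3$ lowest-degree vertices in $G[A_t]$ yields $S\subseteq A_t^{3k}$. The crucial structural step is then property~\cref{def:good-STI} (which you do not use at all): with $T=A_t\setminus S$ and $I=I_t$, it forces $|N(T)\setminus N^+(S\cup I_t)|\le |N(S)\setminus N(I_t)|+8\ln^2(n)/p$, and combining with $|V_t|\ge 10\ln^2(n)/p$ gives $|S\cup(N(S)\setminus N(I_t))|\ge |V_t|/10$. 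Finally \cref{lem:active-vertex-st} says each $u\in S\subseteq A_t^{3k}$ becomes stable black in $O(\log k)$ rounds with probability $\Omega(1/k)=\Omega(1/\ln n)$, stabilizing all its neighbors, which immediately yields the claimed $(1-\epsilon/\ln n)$ contraction. No refined lemmas and no common-neighbor bounds are needed here.
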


\begin{proof}
    From property~\cref{def:good-degree} in \cref{def:good-graph}, the average degree of graph $G[A_t]$ is at most 
    \[
        k = \max\{8p|A_t|,\, 4\ln n\} \leq 640\ln n.
    \]
    Let $S$ be a subset of $A_t$ consisting of the  $2|A_t|/3$ vertices $u\in A_t$ with the smallest degree in $G[A_t]$, and let $T = A_t\setminus S$.
    Then for all $u\in S$,
    $|N(u)\cap A_t|\leq 3k$; thus $S \subseteq A_t^{3k}$.
%

    The set $V_t$ 
    consist of (i)~all the active vertices, $u\in A_t = S\cup T$, and (ii)~all the non-active vertices that are not in $N^+(I_t)$
    (these vertices are white and have at least one active neighbor).
    We can thus partition $V_t$ into the four distinct sets: $S$, $N(S) \setminus N(I_t)$, $T\setminus N(S)$, and $N(T) \setminus N^+(S\cup I_t)$.
    For the sizes of these sets, we have
    $|T\setminus N(S)| \leq |T| < |S|$ and,
    by property~\cref{def:good-STI} in \cref{def:good-graph},
    \[
        |N(T) \setminus N^+(S\cup I_t)|
        \leq
        |N(S) \setminus N(I_t)| + 8\ln^2(n)/p
        .
    \]
    Using these two inequalities, the fact that the sizes of the four sets above sum to $|V_t|$, and the assumption $|V_t| \geq  10\ln^2(n)/p$, we obtain
    \[
        |S| + |N(S) \setminus N(I_t)| 
        \geq (|V_t| - 8\ln^2(n)/p)/2
        \geq |V_t|/10
        .
    \]
    Therefore, at least $|V_t|/10$ vertices $u\in V_t$ are in $S$ or adjacent to a vertex from $S$. 
    From \cref{lem:active-vertex-st},
    each $u\in S \subseteq A_t^{3k}$ is stable black (and all its neighbors are stable white) at the end of round $t+\log n$, with probability at least $1/(6ek)$.
    It follows that 
    \[
        \Exp[|V_{t+\log n}|] 
        \leq 
        |V_t| - (|V_t|/10) \cdot 1/(6ek)
        \leq
        |V_t| - |V_t|/(1.1\cdot 10^5\ln n )
        .
        \qedhere
    \]
\end{proof}

In the next lemma we analyze the remaining case, in which $|V_t| = O(\ln^2 (n)/p)$ and $|A_t| = O(\ln(n)/p)$. 
In fact, the lemma does not require a bound on $|A_t|$. 
Unlike the previous lemmas, however, it requires that $p = O(\sqrt{\log(n)/n})$.
\gtodo{FUTURE-TODO: make proof of \cref{lem:GG-case-small-AtVt} more similar to proof of \cref{lem:GG-case-small-AtVt-3c-small-p}}

\begin{lemma}\label{lem:GG-case-small-AtVt}
  If $|V_t| \leq  10\ln^2(n)/p$
  and $p \leq c \sqrt{\log(n)/ n}$, for some constant $c>0$,
  then 
  there is a constant $\epsilon = \epsilon(c) > 0$ such that
  $
     \Exp[|V_{t+2\log n}| ] 
     \leq \left(1-
     \epsilon/\ln^{3.5} n
     \right)
     \cdot |V_t|
     .
  $\footnote{If $c$ is super constant, then the proof gives $\Exp[|V_{t+2\log n}| ] 
     \leq \left(1-
     \epsilon /(c^2\ln^{3.5} n)
     \right)
     \cdot |V_t|$.
     }
\end{lemma}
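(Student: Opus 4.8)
The plan is to treat the induced subgraph $G[V_t]$ on the $\le 10\ln^2(n)/p$ non-stable vertices and show that a $1/\poly(\log n)$ fraction of its vertices become stable in $O(\log n)$ rounds, each with probability $1/\poly(\log n)$. First I would invoke property~\cref{def:good-degree} of good graphs: the average degree of $G[V_t]$ is at most $\max\{8p|V_t|,\,4\ln n\} = O(\ln^2 n)$ (using $|V_t|\le 10\ln^2(n)/p$), so a constant fraction of vertices $u\in V_t$ have $d_u := |N(u)\cap V_t| = O(\ln^2 n)$. Call this set $S$; it suffices to show each $u\in S$ becomes stable in $O(\log n)$ rounds with probability $1/\poly(\log n)$ (then linearity of expectation and $|S|\ge\Omega(|V_t|)$ give the claimed progress, with the extra $2\log n$ horizon).

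Fix such a $u$ with degree $d=d_u=O(\ln^2 n)$ in $G[V_t]$. Set $r = \Theta(\log\log n)$. The key dichotomy, following the roadmap in \cref{sec:Techniques}, is on the probabilities that the neighbors of $u$ become non-stable black. If \emph{every} neighbor $v\in N(u)\cap V_t$ has probability $<1/(2d)$ of being in $B_{t'}$ for some $t'$ in the next $r$ rounds (more precisely, of being non-stable black — stable black neighbors only help $u$), then a union bound shows that with constant probability none of $u$'s $V_t$-neighbors is black at any of these rounds, while $u$'s non-$V_t$-neighbors are stable white; hence $u$ itself is active with at most $d$ active neighbors, i.e.\ $u\in A^d_{t'}$ for some such $t'$ (or $u$ is already stable). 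Since $d=O(\ln^2 n)=\poly(\log n)$, \cref{lem:active-vertex-st} then gives that $u\in I$ within a further $O(\log d)=O(\log\log n)$ rounds with probability $\Omega(1/d)=1/\poly(\log n)$. Otherwise, there is a neighbor $v$ with probability $b\ge 1/(2d)$ of becoming non-stable black in the window; I would apply \cref{lem:non-active-to-pretty} (or \cref{lem:active-to-pretty}/\cref{lem:non-active-to-pretty2}, depending on whether $v$ is initially active, non-active with $p_r-2^{-d}$ large, or neither) to conclude that $v$ becomes $k$-active for $k = \theta_v(\alpha\log(\poly(\log n))) + O(\log\log n) = \theta_v(O(\log\log n)) + O(\log\log n)$ with probability $r^{1-\alpha}(b/O(r))^\alpha = 1/\poly(\log n)$. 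Here I use $p\le c\sqrt{\log(n)/n}$ together with property~\cref{def:good-common}: any two vertices share at most $\max\{6np^2,4\ln n\} = O(\ln n)$ common neighbors, so $\theta_v(i) \le i\cdot O(\ln n) + (\text{a few}) = O(\ln^2 n\log\log n)$ when $i = O(\log\log n)$ — this is where the hypothesis on $p$ is essential. Thus $k=\poly(\log n)$, and once $v\in A^k$, \cref{lem:active-vertex-st} again makes $v$ stable black within $O(\log k)=O(\log\log n)$ rounds with probability $\Omega(1/k)=1/\poly(\log n)$, which in turn makes the neighbor $u$ of $v$ stable.

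Putting the two cases together: in either case, with probability $1/\poly(\log n)$, within $O(\log\log n)$ rounds $u$ or one of its neighbors becomes $\poly(\log n)$-active, and hence within a further $O(\log\log n)$ rounds stable black, so $u\in I_{t+O(\log\log n)}\subseteq I_{t+2\log n}$ with probability $1/\poly(\log n)$ — I expect the final exponent to work out to roughly $3.5$ after accounting for the $\alpha\approx 2.41$ loss in \cref{lem:non-active-to-pretty} applied with $b\ge 1/(2d)$, $d=O(\ln^2 n)$, and the $1/k=1/\poly(\log n)$ from \cref{lem:active-vertex-st}. Then
\[
    \Exp[\,|V_{t+2\log n}|\mid |V_t|\,] \le |V_t| - \Omega(|S|)\cdot \frac{1}{\poly(\log n)} \le \Big(1 - \frac{\epsilon}{\ln^{3.5}n}\Big)|V_t|,
\]
as $|S|\ge\Omega(|V_t|)$.

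\textbf{Main obstacle.} The delicate part is the case analysis for the "good" neighbor $v$ and verifying the hypotheses of \cref{lem:non-active-to-pretty,lem:active-to-pretty,lem:non-active-to-pretty2} uniformly: one must correctly define "probability $v$ becomes non-stable black in a subwindow" so that it feeds into the $b_r$/$p_r$ parameters of those lemmas, handle the initial-condition cases (is $v$ active, white-and-non-active, etc.), and ensure the $\ell$-round offsets and the lower bounds $b_r\ge 2^{-(\ell-2)}$ etc.\ are met — possibly by discarding the first few rounds of the window. A second, more bookkeeping-heavy obstacle is bounding $\theta_v$ of the relevant argument by $\poly(\log n)$: this needs property~\cref{def:good-common} and only goes through in the regime $p=O(\sqrt{\log(n)/n})$, which is precisely why this lemma — unlike \cref{lem:GG-case-large-At,lem:GG-case-large-Vt} — carries that restriction.
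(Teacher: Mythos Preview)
Your overall plan matches the paper's strategy, but you are missing one structural ingredient that is responsible for the exponent $3.5$. In your dichotomy you union-bound over all $d_u=O(\ln^2 n)$ neighbors of $u$ in $V_t$, which in the second branch only guarantees a neighbor $v$ with $b\ge 1/(2d_u)=1/O(\ln^2 n)$. Feeding this into \cref{lem:active-to-pretty} or \cref{lem:non-active-to-pretty2} yields $q_r\gtrsim r^{1-\alpha}(b/16)^\alpha\approx(\ln n)^{-2\alpha}$, and after the final $\Omega(1/k)$ factor from \cref{lem:active-vertex-st} (with $k=O(\ln n\,\ln\ln n)$ via property~\cref{def:good-common}) you end up with roughly $(\ln n)^{-(2\alpha+1)}\approx(\ln n)^{-5.8}$, not $(\ln n)^{-3.5}$. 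Your expectation that ``the final exponent works out to roughly $3.5$'' does not survive this computation.

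The paper obtains the better exponent by first splitting $V_t=S\cup T$, where $T$ consists of the $\min\{\ln(n)/p,\,|V_t|/2\}$ vertices of largest degree in $G[V_t]$, and then invoking property~\cref{def:good-ST} to conclude $|E(S,T)|\le 6|S|\ln n$; hence at least half of $S$ --- call this set $R$ --- satisfies $|N(u)\cap T|\le 12\ln n$. For $u\in R$ the paper runs a three-way (not two-way) case split that treats $S$-neighbors and $T$-neighbors of $u$ differently: any $S$-neighbor is itself low-degree, so if it is ever active it is already $d$-active for $d=O(\ln^3 n)$ and \cref{lem:active-vertex-st} applies directly, without the refined lemmas. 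Only the $T$-neighbors require \cref{lem:non-active-to-pretty,lem:active-to-pretty,lem:non-active-to-pretty2}, and because there are merely $12\ln n$ of them, pigeonhole produces some $v^\ast\in N(u)\cap T$ with $b\ge(48\ln n)^{-1}$ --- one logarithmic factor better than your dichotomy delivers. That single saved $\ln n$, raised to the power $\alpha$ inside the refined lemmas, is exactly what brings the exponent down from $2\alpha+1\approx 5.8$ to $\alpha+1\approx 3.4$. Without the $S$--$T$ partition and property~\cref{def:good-ST}, your argument still establishes a valid $(1-\epsilon/\poly(\log n))$ progress bound, but not the lemma as stated.
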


\begin{proof}
    From property~\cref{def:good-degree} in \cref{def:good-graph}, the average degree of graph $G[V_t]$ is at most 
    \[
        k = \max\{8p|V_t|,\, 4\ln n\} \leq 80\ln^2 n.
    \]
    Let $T$ be a subset of $V_t$ consisting of the  $\min\{\ln(n)/p,\, |V_t|/2\}$ vertices $u\in V_t$ with the largest degree in $G[V_t]$, and let $S = V_t\setminus T$.
    Then $|S| \geq |T|$, and for all $u\in S$,  $|N(u)\cap V_t|$ is at most
    \[
        d =  k |V_t|/|T| \leq k \cdot \max\{p|V_t|/\ln n,\, 2\}
        \leq
        800\ln^3 n
        .
    \]
    From property \cref{def:good-ST} in \cref{def:good-graph}, the number of edges between $S$ and $T$ is
    $|E(S,T)| \leq 6|S|\ln n$.
    Let $R =  \{u\in S \colon |N(u)\cap T| \leq 12\ln n \}$.
    Then $|R| \geq |S|/2 \geq |V_t|/4$.
    We will show for some constant $\epsilon' = \epsilon'(c)$ that
    \begin{equation}
        \label{eq:uRstable}
        \Pr[u\notin V_{t+2\log n} ] 
        \geq 
        \epsilon' \ln^{-\alpha-1} n \cdot (\ln\ln n)^{-\alpha},\
        \text{ for all }
        u\in R.
    \end{equation}
    It follows that 
    $
        \Exp[|V_{t+2\log n}| ] 
        \leq 
        |V_t| - (|V_t|/4) \cdot \epsilon' \ln^{-\alpha-1} n \cdot (\ln\ln n)^{-\alpha}
        .
    $
    Since $\alpha = {1}/{\log(4/3)}\leq 2.41$, the above implies the lemma.
    To complete the proof it remains to show \cref{eq:uRstable}.    

    Let $u\in R$.
    We partition the neighbors of $u$ in $G[V_t]$ into sets $N(u)\cap S$ and $N(u)\cap T$, and let 
    \[
        x = \Pr[N(u)\cap S\cap A \neq\emptyset]
        \quad\text{and}\quad
        y = \Pr[N(u)\cap T\cap B \neq\emptyset]
        ,
    \]    
    where $A = A_t\cup\dots\cup A_{t+r-2}$, $B = B_{t+r-2}\cup B_{t+r-1}\cup B_{t+r}$, and $r = \log(48\ln n) + 6$. 
    We distinguish the following three cases: $x + y \leq 1/2$, $x\geq 1/4$, and $y \geq 1/4$.  

    \emph{Case $x+y \leq 1/2$}: 
    With probability at least $1-(x+y) \geq 1/2$, we have $N(u)\cap S\cap A  =\emptyset$ and $N(u)\cap T\cap B =\emptyset$.
    If $N(u)\cap S\cap A =\emptyset$ then $N(u) \cap S\subseteq W_{t+r-2}$  
    (it is easy to see that $N(u) \cap S \cap I_{t+r-2} = \emptyset$).
    Similarly, if $N(u)\cap T\cap B =\emptyset$, it is immediate that $N(u)\cap T\subseteq W_{t+r-2}$.
    Thus, with probability at least $1/2$, we have that $N(u) \subseteq W_{t+r-2}$. 
    If $N(u) \subseteq W_{t+r-2}$, then either $u\in A_{t+r-2}\cap W_{t-r-2}$ or $u\in I_{t+r-2}$.
    Therefore, with probability at least $1/2$, either $u\notin V_{t+r-2}$ or $u\in A_{t+r-2}$.
    If $u\in A_{t+r-2}$, then $u\in A^d_{t+r-2}$ since $|N(u)\cap V_t|\leq d$, and from \cref{lem:active-vertex-st}, the probability that  $u\in I_{t+r-2 +\log n}$ is at least $(2ed)^{-1}$.
    Combining the last two statements yields that the probability of $u\notin V_{t+r-2 +\log n}$ is at least $(1/2)\cdot(2ed)^{-1} \geq (8700\ln^3 n)^{-1}$,
    which implies  \cref{eq:uRstable}.
    
    \emph{Case $x \geq 1/4$}: 
    With probability at least $1/4$, there is a pair $v,j$ such that $v\in N(u)\cap S$, $0\leq j\leq r-2$, and  $v\in A_{t+j}$.
    And if $v\in A_{t+j}$ then $v\in A^d_{t+j}$ since $|N(v)\cap V_t|\leq d$, and from \cref{lem:active-vertex-st}, the probability that  $v \in I_{t+j +\log n}$ is at least $(2ed)^{-1}$.
    We conclude that the probability that $u \in N^+(I_{t+r-2+\log n})$ 
    is at least     $(1/4)\cdot(2ed)^{-1} \geq (17400\ln^3 n)^{-1}$, which implies \cref{eq:uRstable}.
    
    \emph{Case $y \geq 1/4$}: 
    There exists some $v^\ast \in N(u)\cap T$ such that
    \[
        \Pr[v^\ast \in B] 
        \geq 
        y/|N(u)\cap T|
        \geq 
        (4\cdot 12\ln n)^{-1}
        =
        (48\ln n)^{-1}
        .
    \]
    If $v^\ast \in A_t$ then we can apply
    \cref{lem:active-to-pretty},
    for $\ell = r- 2  \geq \log(48\ln n) + 2$ and $b_r \geq (48\ln n)^{-1}$, to obtain that $v^\ast \in A^\lambda_t\cup \dots\cup A^\lambda_{t+r-1}$ with probability at least 
    $ q = r^{1-\alpha} \cdot \left(16\cdot 48\ln n\right)^{-\alpha}$,
    where
    \[
        \lambda 
        = \theta_{v^\ast}\big(\alpha \log\left({32r}\cdot48\ln n \right)\big)
        +
        \log\left({32r}\cdot48\ln n\right)
        + \log(48\ln n)+3.
    \]
    Suppose now that $v^\ast \in V_t\setminus A_t$, and let 
    \[
        p^\ast 
        = \
        \Pr[v^\ast\in A_{t+1}\cup\dots\cup A_{t+r-1}] - 2^{-|N(v^\ast)\cap A_t|}
        .
    \]
    If $p^\ast \geq \Pr[v^\ast \in B]/2 \geq (96\ln n)^{-1}$, then we can apply \cref{lem:non-active-to-pretty} (using $r-1$ in place of $r$), to obtain that $v^\ast \in A^{\lambda'}_t\cup \dots\cup A^{\lambda'}_{t+r-2}$ with probability at least 
    $q' 
    = 
    r^{1-\alpha} \cdot \left(p^\ast/2\right)^\alpha
    \geq 
    r^{1-\alpha} \cdot \left(192\ln n\right)^{-\alpha}$,
    where
    \[
        \lambda'
        = \theta_{v^\ast}\big(\alpha \log\left({4r}\cdot 96\ln n\right)\big)
        +
        \log\left({4r}\cdot96\ln n\right)
        .
    \]
    If $p^\ast < \Pr[v^\ast \in B]/2$, then we can apply \cref{lem:non-active-to-pretty2}, for $\ell = r - 2 \geq \log(48\ln n) + 4$ and $b_r = \Pr[v^\ast\in B] \geq (48\ln n)^{-1}$, to obtain that that $v^\ast \in A^{\lambda''}_t\cup \dots\cup A^{\lambda''}_{t+r-1}$ with probability at least 
    $q'' 
    = 
    r^{1-\alpha} \cdot \left(64\cdot 48\ln n\right)^{-\alpha}$,
    where
    \[
        \lambda''
        = \theta_{v^\ast}\big(\alpha \log\left({128r}\cdot 48\ln n\right)\big)
        +
        \log\left({128r}\cdot48\ln n\right)
        + \log(4\cdot 48\ln n)+3.
    \]
    In all the settings above, we have $q,q',q''\geq \varepsilon  \ln^{-\alpha} n\cdot (\ln\ln n)^{1-\alpha}$ and $\lambda,\lambda',\lambda'' \leq \beta 
    \ln n\cdot \ln\ln n$, for some constants $\varepsilon,\beta >0$, where the bound on $\lambda,\lambda',\lambda''$ holds because property \cref{def:good-common} 
    and assumption $p \leq c \sqrt{\log(n)/ n}$ imply that for any $v\in V$, $\theta_{v}(i)\leq i\cdot (6c^2+4)\log n$ (recall the definition of $\theta_v$ from \eqref{eq:theta}).
    Therefore, the probability that $v^\ast$ is $(\beta \cdot \ln n\cdot \ln\ln n)$-active at the end of some round in $\{t,\ldots,t+r-1\}$ is at least $\varepsilon \cdot \ln^{-\alpha} n\cdot (\ln\ln n)^{1-\alpha}$, and from \cref{lem:active-vertex-st}, the probability that  $v^\ast \in I_{t + r - 1 +\log n}$ is at least $\varepsilon \cdot \ln^{-\alpha} n\cdot (\ln\ln n)^{1-\alpha} \cdot(2e\beta \cdot \ln n\cdot \ln\ln n)^{-1}$. 
    Thus with at least that probability we have $u \notin V_{t + r - 1 +\log n}$. 
    This completes the proof of \cref{eq:uRstable}.
\end{proof}  



\paragraph{Putting the Pieces Together.}
First, 
\itodo{I modified the proof}
suppose that $p \leq c \sqrt{\log(n)/ n}$ for some constant $c>0$. 
From  \cref{lem:GG-case-large-At,lem:GG-case-large-Vt,lem:GG-case-small-AtVt},
$\Exp[|V_{t+2\log n}|] 
\leq \left(1-\epsilon/\ln^{3.5} n \right) \cdot \Exp[|V_t|]$,
for any $t\geq 0$.
Iteratively applying this inequality, we obtain that for any $i\geq 0$,
\[
    \Exp[|V_{2i\log n}|] 
    \leq 
    \left(1-\epsilon/\ln^{3.5} n \right)^i \cdot n
    .
\]
Substituting $i = 3\ln^{4.5} n/\epsilon$ yields
$
    \Exp\big[|V_{(6/\epsilon)\log n\cdot \ln^{4.5} n}|\big] 
    \leq 
    n^{-2}
    ,
$
and by Markov's inequality, it follows
$
    \Pr[|V_{(6/\epsilon)\log n\cdot \ln^{4.5} n}|\geq 1]
    \leq 
    n^{-2}.
$

If $p \geq \varepsilon/\ln^{2.5} n$
\gtodo{new}
for some constant $\varepsilon>0$, then we use
\cref{lem:GG-case-large-At,lem:GG-case-large-Vt} as above to obtain that
$\Pr[|V_t| \geq  10\ln^2(n)/p] \leq n^{-2}$ for some $t = O(\log^3 n)$.
We also observe that if $|V_t| <  10\ln^2(n)/p$ then the maximum degree of graph $(V,E(V_t))$ is $\Delta < |V_t| \leq  10\ln^2(n)/p \leq 10\ln^{4.5}(n)/\varepsilon$, and \cref{thm:UB-delta}  yields a bound of $O(\Delta\log n) = O(\log^{5.5}n)$.
Combining the two completes the proof of \cref{lem:Good-2s}.


\begin{remark}
    \label{rm:not-tight-analysis}
    Some of the logarithmic factors can be shaved off with a more careful analysis.
    For example, using a ``pipelining'' argument, one could improve the bound on halving $|V_t|$ obtained from \cref{lem:GG-case-small-AtVt}, from $O(\log n \cdot \ln^{3.5}n)$ to $O(\log n + \ln^{3.5}n)$, thus saving one logarithmic factor.
\end{remark}

\section{Logarithmic Switch and the 3-Color MIS Process}
\label{sec:log-switch-3-color}

We present an extension of the 2-state MIS process, called \emph{3-color MIS process}, which uses one additional color, \emph{grey}, and includes also a sub-process, called \emph{logarithmic switch}, which runs in parallel to the main process.
Then we analyze the 3-color MIS process on $G_{n,p}$ random graphs.

\subsection{The Logarithmic Switch Process}
\label{se:log-switch}

We first introduce an abstract \emph{logarithmic switch} process, by specifying its properties.
Then we describe an actual randomized graph process that satisfies these properties with high probability and in a self-stabilizing manner, using $6$ states per vertex. 

\begin{definition}[Logarithmic Switch Process]
    \label{def:switch}
    An \emph{$(a,b)$-logarithmic switch process} on $G=(V,E)$ generates for each vertex $u\in V$ a binary sequence $\sigma_0(u),\sigma_1(u),\ldots,$ where $\sigma_t(u) \in \{\on,\off\}$ for each $t\geq 0$, such that the following properties hold for all $u\in V$.
    \begin{enumerate}[(S1)]
        \item \label{prop:switch-off-ub}
        Every run of consecutive $\off$ values in sequence $\sigma_0(u), \sigma_1(u),\ldots$ has length at most $a\ln n$.    
        
        \item \label{prop:switch-off-lb}
        If $\diam(G) \leq 2$ then every run of consecutive $\off$ values in sequence $\sigma_t(u), \sigma_{t+1}(u),\ldots$ has length at least $\frac a6\ln n$, where $t = \min\{i\geq \frac a6\ln n \colon \sigma_i(u) = \on\}$.    
        
        \item \label{prop:switch-on-ub}
        If $\diam(G) \leq 2$ then every run of consecutive $\on$ values in sequence $\sigma_t(u), \sigma_{t+1}(u),\ldots$ has length at most $b$, where $t$ is some constant independent of $n$.     
    \end{enumerate}
\end{definition}


\begin{definition}[Randomized Logarithmic Switch]
    \label{def:rswitch}
    In the \emph{randomized logarithmic switch process} on $G=(V,E)$,
    each vertex $u\in V$ has a state, called \emph{level}, that takes on values in the set $\{0,1,\ldots,5\}$. 
    The initial value $\level_0(u)$ of $u$ can be arbitrary, and in each round $t \geq 1$ the level of $u$ is updated according to the following rule, which uses a global parameter $0 < \zeta < 1$.
    
    \begin{algorithm}[H] {\small
        \DontPrintSemicolon
        \If{$\level_{t-1}(u) = 5$}
        {choose a random bit $b_t(u)$ such that $\Pr[b_t(u) = 0] = \zeta$\;}
        \uIf{$(\level_{t-1}(u) = 5$ 
        {\bf and} $b_t(u) = 1)$
        {\bf or} $\level_{t-1}(u) = 0$}
            {set $\level_t(u) = 5$\;}
        \lElse
            {set $\level_t(u) = \max\{\level_{t-1}(v)\colon v\in N^+(u)\} -1$}
    }   
    \end{algorithm}
    
    \noindent
    Finally, we define the following mapping of the levels to the binary $\on/\off$ values of \cref{def:switch}.
    For each $u\in V$ and $t\geq 0$,
    \[
        \sigma_t(u) 
        =
        \begin{cases}
            \on & \text{if $\level_t(u) \leq 2$} \\
            \off & \text{if $\level_t(u) \geq 3$}.\\
        \end{cases}
    \]
\end{definition}

\begin{lemma}
    \label{lem:rswitch}
    For any graph $G = (V,E)$, the randomized logarithmic switch process with parameter $0<\zeta\leq1/2$ satisfies properties \cref{prop:switch-off-ub,prop:switch-off-lb,prop:switch-on-ub} for $a = 4/\zeta$ and $b = 3$, with probability $1- O(n^{-2})$, during the first $n$ rounds.    
\end{lemma}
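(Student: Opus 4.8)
The plan is to establish each of \cref{prop:switch-off-ub,prop:switch-off-lb,prop:switch-on-ub} with probability $1-O(n^{-2})$ and then take a union bound. Write $L_t(v)$ for $\level_t(v)$, and recall that a vertex enters level $5$ only by a \emph{hold} (it was already at level $5$ and drew bit $1$, which has probability $1-\zeta$) or by a \emph{reset} (it was at level $0$), and that a reset of $v$ at round $r$ forces $L_{r-2}(w)=1$ for every $w\in N^+(v)$. Call a maximal interval of rounds in which a fixed vertex is at level $5$ a \emph{$5$-episode}; each of its rounds except the last is a hold using a fresh bit, so by a union bound over the $\le n$ vertices and $\le n$ possible start rounds, and using $(1-\zeta)^{1/\zeta}\le e^{-1}$, with probability $1-O(n^{-2})$ every $5$-episode beginning in the first $n$ rounds has length at most $(4/\zeta)\ln n=a\ln n$. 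The second ingredient is a deterministic \emph{causal domination} fact, proved by induction on $j\ge0$: \emph{if $t\ge j$ and no vertex $w$ with $\operatorname{dist}(u,w)\le j$ satisfies $L_{t'}(w)=5$ for any $t'\in[t-j,t]$, then $L_t(u)\le 4-j$}; in the inductive step $u$ must take the ``else'' branch at round $t$ (otherwise $L_t(u)=5$), so $L_t(u)=\max_{v\in N^+(u)}L_{t-1}(v)-1$ and the inductive hypothesis applies to the maximizing neighbor, which lies within distance $j+1$ of $u$. A symmetric induction shows $L_t(u)\ge\big(\max_{v:\operatorname{dist}(u,v)\le j}L_{t-j}(v)\big)-j$ for $t\ge j$; in particular, if $\diam(G)\le 2$ then $L_t(u)\ge\big(\max_{v\in V}L_{t-2}(v)\big)-2$ for every $u$ and every $t\ge 2$, i.e.\ a high level is felt everywhere within two rounds.

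For \cref{prop:switch-off-ub}, the case $j=2$ of causal domination says that whenever $\sigma_t(u)=\off$ with $t\ge 2$, some vertex within distance $2$ of $u$ is at level $5$ in some round of $[t-2,t]$. Hence, after discarding the first $\le 2$ rounds of an off-run of $u$, we may assume it is an interval $[s,s+T]$ with $s\ge 2$, and the set of rounds in $[s-2,s+T]$ at which the distance-$2$ ball of $u$ contains a level-$5$ vertex meets every window $[t-2,t]$ with $t\in[s,s+T]$; such a set has size at least $(T+1)/3$. It then remains to show this set cannot stay non-empty for much longer than a single $5$-episode. The point is that a \emph{fresh} $5$-episode of some $w$ near $u$ requires $w$ and all of $N^+(w)$ to have dropped to level $1$ two rounds earlier, which by causal domination is incompatible with a level-$5$ vertex having been close to $w$ in the preceding few rounds; this limits the way $5$-episodes inside the distance-$2$ ball of $u$ can chain, and combined with the $a\ln n$ bound on a single episode (and a final union bound over $u$ and the start round) it yields $T\le a\ln n$ with probability $1-O(n^{-2})$. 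I expect controlling this \emph{chaining} of nearby $5$-episodes to be the main obstacle; it plays the role here of the phase-length analysis of \emph{RandPhase} in \cite{EmekK21}, adapted to the fact that the graph may have unbounded diameter.

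For \cref{prop:switch-off-lb,prop:switch-on-ub} we may assume $\diam(G)\le 2$ and use the two-round propagation fact. Once a ``hold phase'' ends, i.e.\ no vertex is at level $5$ (and none at level $0$, since every vertex was just at level $\ge 3$), $\max_v L_t(v)$ decreases by exactly $1$ each round, and after at most two such rounds all vertices share a common level; that level then runs $\dots,3,2,1,0$, after which all vertices reset to $5$ in the same round. Therefore there is a graph-independent constant $t_0$ (the $O(1)$ rounds until the first level-$5$ vertex appears, plus two) such that from round $t_0$ on the process consists of repeated periods, each being: a synchronized reset of all vertices to level $5$; a hold phase in which vertex $v$ stays at level $5$ for $1+K_v$ rounds, the $K_v$ being independent with $\Pr[K_v\ge k]=(1-\zeta)^k$, and throughout which every vertex is at level $\ge 3$ (some vertex is at level $5$ and $\diam(G)\le 2$); a re-synchronization of at most two rounds; and a synchronized descent $\dots,3,2,1,0$. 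In any such period the $\on$-values of each vertex form one run of length exactly $3$ (levels $2,1,0$), giving \cref{prop:switch-on-ub} with $b=3$ (the only vertices/rounds where this could fail lie in the first $t_0=O(1)$ rounds). The $\off$-run of each period contains the entire hold phase, whose length is $1+\max_v K_v\ge \tfrac{2}{3\zeta}\ln n=\tfrac a6\ln n$ except with probability $(1-(1-\zeta)^{(2/(3\zeta))\ln n})^n$, which is at most $(1-n^{-2/3})^n\le e^{-n^{1/3}}$ (again using $(1-\zeta)^{1/\zeta}\le e^{-1}$); a union bound over the $O(n)$ periods within the first $n$ rounds gives \cref{prop:switch-off-lb}, where the threshold $\tfrac a6\ln n$ in its statement is precisely what lets us discard the possibly-short first off-run coming from the arbitrary initialization.
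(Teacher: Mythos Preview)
Your overall approach matches the paper's, but there are two gaps.

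\textbf{For \cref{prop:switch-off-ub}}, you assemble the right ingredients but leave the ``chaining'' step open, even flagging it as the main obstacle. In fact there is no chaining to control, and your own observation already finishes the argument. You note that a fresh $5$-episode of $w$ at round $r$ forces every $z\in N^+(w)$ to have $L_{r-2}(z)\le 1$. Now if $w\in S_u$ (the distance-$2$ ball of $u$), either $u\in N^+(w)$ and so $L_{r-2}(u)\le 1$, or there is $z\in N(u)\cap N(w)$ with $L_{r-2}(z)\le 1$, whence one more ``else'' step gives $L_{r-3}(u)\le 2$. Either way $\sigma(u)=\on$ at round $r-2$ or $r-3$. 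Hence during $u$'s $\off$-run (after the first few rounds) \emph{no vertex in $S_u$ enters level $5$ at all}, so the set of level-$5$ vertices in $S_u$ can only shrink. Combined with the causal-domination fact that some vertex of $S_u$ is at level $5$ near the end of the run, a \emph{single} vertex $v^\ast\in S_u$ must remain at level $5$ throughout, i.e.\ it held for $a\ln n - O(1)$ consecutive rounds, an event of probability $(1-\zeta)^{a\ln n - O(1)}\le O(n^{-4})$. A union bound over $u$ and the starting round finishes. This is exactly the paper's argument; the issue is not that distinct $5$-episodes might chain near $u$, but that they provably cannot.

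\textbf{For \cref{prop:switch-off-lb}}, your probability estimate applies the inequality $(1-\zeta)^{1/\zeta}\le e^{-1}$ in the wrong direction. You need $\bigl(1-(1-\zeta)^{(a/6)\ln n}\bigr)^n$ to be small, which requires a \emph{lower} bound on $(1-\zeta)^{(a/6)\ln n}$; the inequality you cite gives only an upper bound, so the step ``$\le(1-n^{-2/3})^n$'' is false (indeed $(1-\zeta)^{1/\zeta}<e^{-1}$ strictly for $\zeta\in(0,1)$). The paper instead uses $1-\zeta\ge 4^{-\zeta}$ for $0\le\zeta\le 1/2$, giving $(1-\zeta)^{(a/6)\ln n}\ge 4^{-(a\zeta/6)\ln n}=4^{-(2/3)\ln n}=n^{-(4/3)\ln 2}$, and hence a bound of $e^{-n^{\Omega(1)}}$ per period. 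Your synchronization argument and the treatment of \cref{prop:switch-on-ub} with $b=3$ are correct and coincide with the paper's.
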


\begin{proof}
    Let $u\in V$, and let $S_v \subseteq V$ be the set of vertices at distance at most $2$ from $u$.
    If $u$ has level at least $3$ in all rounds $t,\ldots, t+a\ln n$, then no vertex $v\in S_u$ has level $0$ in rounds $t+2,\ldots,t+a\ln n$; and at least one vertex $v\in S_u$ must be at level $5$ in all rounds $t+2,\ldots, t+a\ln n-2$.
    It follows that the probability there is some $u\in V$ and $t\leq n$ such that $u$ has level at least $3$ in all rounds $t,\ldots, t+a\ln n$ is at most
    \[
        n^2 (1-\zeta)^{a\ln n - 4} 
        \leq 
        n^{2-a\zeta}/(1-\zeta)^4
        \leq
        16\cdot n^{-2},
    \]
    when $a\zeta = 4$.
    Thus, property \cref{prop:switch-off-ub} holds with probability at least $1-O( n^{-2})$.
    
    Next we assume $\diam(G)\leq 2$.
    The rest of the proof is similar to that in \cite{EmekK21}.
    Observe that there must be a vertex $v$ and a round $t^\ast\leq 5$ such that $\level_{t^\ast}(u) = 5$. 
    And from the end of round $t^\ast+2$, all vertices ``synchronize'' in the sense that once a vertex reaches level $2$ in a round, all vertices reach level $2$ in that round, then the they all reach level $1$ in the next round, then level $0$, and then $5$. 
    It follows that property \cref{prop:switch-on-ub} holds for $b = 3$, starting from round  $t^\ast + 2 \leq 7$.
    The property holds with probability $1$, and for all rounds after round $t^\ast+2$, not just for the first $n$.
    
    As mentioned above, after vertices have synchronized, all $n$ vertices move from level $0$ to level $5$ simultaneously, each time.
    When that happens, the number of rounds until there are no vertices left at level $5$ is greater than $a\ln n-6$ with probability at most
    \[
        n(1-\zeta)^{a\ln n-6} \leq 64\cdot n^{-3}
        ,
    \]
    as before;
    and is smaller than $r = \frac a6\ln n$ with probability at most 
    \[
        (1 - (1-\zeta)^r)^n
        \leq
        e^{-n(1-\zeta)^r}
        \leq
        e^{-n4^{-\zeta r}}
        =
        e^{-n4^{-(a\zeta/6)\ln n }}
        \leq
        e^{-n^{0.07}}
        =
        O(n^{-3})
        .
    \]
    Combining the above, using a union bound, we obtain that property \cref{prop:switch-off-lb} holds with probability $1-O(n^{-2})$.
\end{proof}

\subsection{The 3-Color MIS Process}

We now define the 3-color MIS process, which is an extensions of  the 2-state MIS process.

\begin{definition}[3-Color MIS Process]
    \label{def:3color}
    The process consists of two (sub-)processes that run in parallel on $G=(V,E)$.
    The first is an $(a,3)$-logarithmic switch process, where $a=512$, which generates a value $\sigma_t(u) \in\{\on,\off\}$ for each vertex $u\in V$ in each round $t\geq 0$.
    The second is a variant of the 2-state MIS process, where each vertex $u\in V$ has a state $c_t(u) \in\{\black,\white,\gray\}$, $c_0(u)$ can be arbitrary, and in each round $t\geq 1$, $u$'s state is updated as follows.
    \gtodo{FUTURE-TODO: slightly modify definition so that $\black\to\{\black,\white\}$  when $\sigma_{t-1}(u) = \on$; that way the 2-state process is a special case of the 3-color one; then give only one analysis?}
    
    \begin{algorithm}[H] {\small
        \DontPrintSemicolon
        let $\mathit{NC}_t(u) = \{c_{t-1}(v)\colon v\in N(u)\}$\;
        \uIf{$c_{t-1}(u) = \black$ {\bf and} $\mathit{NC}_t(u)\ni \black$}
        {
            let $c_t(u)$ be a uniformly random state from $\{\black,\gray\}$\;}
        \uElseIf{$c_{t-1}(u) = \white$ {\bf and} $\mathit{NC}_t(u)\not\:\!\ni \black$}
            {let $c_t(u)$ be a uniformly random state from $\{\black,\white\}$\;}
        \uElseIf{$c_{t-1}(u) = \gray$ {\bf and} $\sigma_{t-1}(u) = \on$}
            {set $c_t(u) = \white$\;}
        \lElse
            {set $c_t(u) = c_{t-1}(u)$}
    }    
        \label{alg:3-color-mis}
    \end{algorithm}
    \end{definition}

There are precisely two differences in the update rule above compared to that for the 2-state MIS process: a black vertex with a black neighbor changes to gray with probability $1/2$, rather than to white; and a gray vertex changes to white if its switch value is $\on$.
Note that a gray vertex is treated similarly to a non-active white vertex.  

A vertex is \emph{\textbf{stable}}, if it is black and has no black neighbors, or \emph{it is not black} and has a neighbor that is \emph{stable black}.
Other than that, the remaining definitions and notations are the same as in the 2-state MIS process, 
namely, of \emph{\textbf{active}} vertices, \emph{\textbf{stabilization times}}, $B_t$, $W_t$, $A_t$, $A^k_t$, $I_t$, and $V_t$. 
We also let $\GG_t = V\setminus (B_t\cup W_t)$ denote the set of gray vertices at the end of round $t$.  

The definition of the 3-color MIS process above assumes an arbitrary logarithmic switch process. 
We can use the randomized logarithmic switch from \cref{def:rswitch}, which uses 6 states per vertex, to obtain a 3-color MIS process that uses $6\cdot 3=18$ states in total.
The probability parameter of the randomized switch is $\zeta = 4/a = 2^{7}$, thus at most $7$ random bits are required per round for each vertex (plus one more for each active vertex).

We note that \cref{lem:active-vertex-st,lem:neighbors2-st} and their proofs carry over to the 3-color MIS process, without changes.
We will use also the two simple lemmas below that are specific to the 3-color MIS process.
Recall that $a = 512$ is a parameter of the logarithmic switch.

\begin{lemma}
    \label{lem:gray-active}
    If $t\geq a\ln n$ and $u\in \GG_t$ then $u\in A_{t-a\ln n}\cup\dots\cup A_{t-1}$.  
\end{lemma}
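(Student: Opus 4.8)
The plan is to trace the gray vertex $u$ backwards in time to the round in which it most recently \emph{entered} the gray state, and to argue two things: (i) by the only rule that produces a gray state, entering gray in round $s$ forces $u$ to have been active at the end of round $s-1$; and (ii) by the upper-bound property of the logarithmic switch, $u$ cannot have stayed gray for more than $a\ln n$ consecutive rounds, so $s-1$ lies in the window $\{t-a\ln n,\ldots,t-1\}$.

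First I would read off two elementary facts from the update rule of \cref{def:3color}. Fact~(a): the only branch that can set $c_s(u)=\gray$ when $c_{s-1}(u)\neq\gray$ is the first branch, which requires $c_{s-1}(u)=\black$ and $\black\in\mathit{NC}_s(u)$, i.e.\ $u\in B_{s-1}$ and $N(u)\cap B_{s-1}\neq\emptyset$; by the definition of an active vertex this is exactly $u\in A_{s-1}$. Fact~(b): if $c_{i-1}(u)=\gray$ and $c_i(u)=\gray$, then the update of round $i$ must have fallen through to the final ``else'' branch (the third branch would have set $c_i(u)=\white$), hence $\sigma_{i-1}(u)=\off$.

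Now, given $u\in\GG_t$ with $t\geq a\ln n$, I would let $s$ be the smallest index with $u\in\GG_i$ for all $i\in\{s,\ldots,t\}$. Applying Fact~(b) to each consecutive pair in this block yields $\sigma_s(u)=\sigma_{s+1}(u)=\cdots=\sigma_{t-1}(u)=\off$, a run of consecutive $\off$ values. By property \cref{prop:switch-off-ub} of the $(a,3)$-logarithmic switch used in \cref{def:3color} (which holds with high probability for the randomized switch of \cref{def:rswitch}, by \cref{lem:rswitch}), every run of $\off$ values has length at most $a\ln n$, so $t-s\leq a\ln n$, i.e.\ $s\geq t-a\ln n\geq 0$. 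If $s\geq 1$, minimality of $s$ gives $u\notin\GG_{s-1}$, so $u$ entered gray in round $s$, and Fact~(a) gives $u\in A_{s-1}$ with $s-1\in\{t-a\ln n,\ldots,t-1\}$, which is the claim.

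I expect the only delicate point to be the boundary bookkeeping: being precise about which switch indices $\sigma_i(u)$ are forced to $\off$ when $u$ remains gray through a block of rounds, and dispatching the corner case $s=0$. By the same counting, $s=0$ can occur only when $t$ equals $a\ln n$ up to the rounding of $a\ln n$, and it is absorbed by the slack in $s\geq t-a\ln n$; I would just note this explicitly. There is no analytic obstacle here — the entire content is a careful reading of the update rule of \cref{def:3color} together with a single invocation of property \cref{prop:switch-off-ub}.
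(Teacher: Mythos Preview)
Your proposal is correct and follows exactly the paper's approach: the paper's two-sentence proof invokes the same two observations (entering gray forces active-black in the previous round, and property \cref{prop:switch-off-ub} caps the gray run at $a\ln n$), and you have simply spelled these out more carefully via your Facts~(a) and~(b). The boundary and $s=0$ bookkeeping you flag is indeed the only delicate point, and the paper glosses over it just as you do.
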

\begin{proof}
    By property \cref{prop:switch-off-ub} in \cref{def:switch} of the logarithmic switch, a vertex is gray for at most $a\ln n$ consecutive rounds. 
    Also if a vertex becomes gray in round $j>0$, it must be active black at the end of round $j-1$.  
    Combining these two facts implies the lemma.
\end{proof}

\begin{lemma}
    \label{lem:freq-black}
    If $\diam(G)\leq 2$, $u\in V$, $t\geq \frac a6 \ln n$, and $t'= t+ \frac a6 \ln n$, then the expected number of times $u$ is active black between rounds $t$ and $t'$ is 
    $
        \Exp\left[\left|\{j\colon u\in B_j\cap V_j\}\cap\{t,\ldots,t'\}\right| \ \middle|\ B_t,W_t \right] 
        \leq 
        4.
    $
\end{lemma}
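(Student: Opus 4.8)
The plan is to bound the number of rounds in $\{t,\ldots,t'\}$ in which $u$ is active black by charging each such round to a distinct transition of $u$ into a black state, and then bounding the expected number of such transitions using the logarithmic switch properties. First I would observe that if $u\in B_j\cap V_j$ (i.e. $u$ is active black at the end of round~$j$), then $u$ has a black neighbor, so in round~$j+1$ the vertex $u$ updates to a uniformly random state in $\{\black,\gray\}$; in particular $u\notin B_{j+1}$ unless the coin $\phi_{j+1}(u)$ comes up $\black$. Thus each ``active black'' round is immediately followed either by $u$ leaving the black state (going gray) or by a fresh coin flip landing $\black$. Consequently, between any two consecutive active-black rounds $j_1<j_2$ with no active-black round strictly in between, $u$ must have left the black state at some point in $\{j_1+1,\ldots,j_2-1\}$ (it was black at $j_1$, then updated to a random color; if it stayed black it would still be active black, contradiction) and re-entered it; and re-entering the black state can only happen via the $\white\to\{\black,\white\}$ rule or the $\black\to\{\black,\gray\}$ rule with coin $\black$. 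So I can associate to each active-black round (past the first) a distinct ``white-to-black'' transition of $u$, which happens only when $u$ is a non-active white vertex flipping its coin; and each such transition requires $u$ to first have become gray (via an active-black round) and then gray-to-white, i.e. requires the switch value $\sigma(u)$ to turn $\on$.

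The key quantitative input is property~\cref{prop:switch-off-lb}: since $\diam(G)\le 2$, from round $t_0=\min\{i\ge \frac a6\ln n: \sigma_i(u)=\on\}\le \frac a6\ln n$ onward, every run of consecutive $\off$ values of $\sigma(u)$ has length at least $\frac a6\ln n$. Hence in the interval $\{t,\ldots,t'\}$ of length $\frac a6\ln n$, the switch $\sigma(u)$ can turn $\on$ at most a constant number of times — I would argue at most $2$, accounting for a possible leftover $\on$-run overlapping the start (which by \cref{prop:switch-on-ub} has length at most $b=3$) plus at most one full off-then-on cycle. Each $\gray\to\white$ transition of $u$ requires $\sigma(u)=\on$, so $u$ becomes white at most a constant number of times in the interval, hence re-enters the black state from white at most that many times; adding the at most one black-run present at the very start of the interval, the total number of maximal black-runs of $u$ inside $\{t,\ldots,t'\}$ is $O(1)$. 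Finally, within a single maximal black-run, $u$ is active black on some prefix of rounds and then stabilized (never active again); but actually I do not even need that — the number of active-black rounds is at most the number of distinct black-runs plus the number of rounds $u$ spends active black across all runs, and the latter telescopes: whenever $u$ is active black it leaves black within one round unless its coin is $\black$, so the expected number of consecutive active-black rounds inside one run is a geometric-type sum bounded by $2$. Multiplying the $O(1)$ bound on runs by the expected-$2$ bound per run, and tuning the constants, gives the bound $4$ claimed (the constant $a=512$ being chosen generously precisely so the slack works out).

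The main obstacle I expect is getting the bookkeeping exactly right so the final constant is $\le 4$ rather than merely $O(1)$: one must carefully separate (a) the at-most-one black-run that straddles the left endpoint $t$, (b) any $\on$-run straddling $t$ (length $\le 3$ by \cref{prop:switch-on-ub}, contributing at most one $\gray\to\white$ and hence one new black-run), and (c) the at most one fresh off-then-on cycle of $\sigma(u)$ that can complete within an interval of length only $\frac a6\ln n$ given the $\ge\frac a6\ln n$ lower bound on off-runs — so really only $O(1)$, and in fact at most one, new white-to-black opportunity arises. Combined with the per-run geometric bound $\sum_{i\ge 0} (1/2)^i\cdot 1 \le 2$ on expected active-black rounds (each active-black round survives to the next only with probability $1/2$ via the $\black/\gray$ coin, and once $u$ goes gray it cannot be active black again until it has cycled back through white), a short case analysis yields expectation at most $2+2=4$. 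Everything else — the structural facts about which transitions can produce a black state, and the conditioning on $B_t,W_t$ — is routine and follows directly from \cref{def:3color} and the switch properties \cref{prop:switch-off-lb,prop:switch-on-ub}.
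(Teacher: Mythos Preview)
Your proposal is correct and follows essentially the same approach as the paper: bound the number of maximal black runs of $u$ in $\{t,\ldots,t'\}$ by $2$ using the switch properties \cref{prop:switch-off-lb,prop:switch-on-ub}, then bound the expected number of active-black rounds within each run by $2$ via the geometric argument (each active-black round survives to the next with probability $1/2$ through the $\black/\gray$ coin), for a total of $2\cdot 2=4$. The paper's proof is terser but the structure is identical; your longer charging discussion at the start is unnecessary but harmless.
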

\begin{proof}
    From properties \cref{prop:switch-off-lb,prop:switch-on-ub} in \cref{def:switch}, it is easy to see that sequence $c_t(u),\ldots,c_{t'}(u)$ contains at most two runs of consecutive black states.
    Moreover, the expected length of the prefix of each black run until $u$ becomes stable black or the run finishes (and $u$ becomes gray) is $2$.
    It follows that $u$ is non-stable black in at most $4$ rounds in expectation.
\end{proof}

\cref{lem:key-lemma,lem:non-active-to-pretty} hold also for the 3-color MIS process, when $u\in V_t\setminus (A_t\cup \GG_t)$ and thus $u\in W_t$.\footnote{The proofs require just minor modifications, mostly replacing some occurrences of ``white"  by ``not black" or ``gray".}
The next simple lemma will be used together with \cref{lem:non-active-to-pretty}.

\begin{lemma}
    \label{lem:gray-happy}
    Let $t\geq 0$, $u\in V$, and $d>0$.
    Let $t'\geq t$ be the first round when either $u$ is white and has at least $d$ black neighbors, or $u$ is stable.
    The expected number of rounds $t < j < t'$ at which $u$ is black and has at least $d$ black neighbors is at most $3$.
\end{lemma}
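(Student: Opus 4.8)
The plan is to reduce to counting ``active black'' rounds and then exploit the fair coin flips together with the stopping rule defining $t'$. Write $N$ for the quantity to be bounded. Since $d>0$, in each round counted by $N$ the vertex $u$ is in particular \emph{active} (black with a black neighbour), so it suffices to bound the expected number $N'$ of rounds $j\in(t,t')$ at which $u$ is active black; then $\Exp[N]\le\Exp[N']$.

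First I would decompose the black rounds of $u$ inside $(t,t')$ into maximal ``black runs'' and observe that every round of such a run is a round at which $u$ is active black --- otherwise $u$ would be stable in that round, so that round would lie at or after $t'$ and $u$ would remain black forever, leaving no further black round in $(t,t')$. Thus $N'$ equals the total length of the black runs lying in $(t,t')$. Next I would bound the length of a single black run: conditioned on the history up to its first round $j_0$, the coin $\phi_{j_0+1}(u)$ is fair and independent of the past and decides whether $u$ stays black or turns gray in the next round; as long as $u$ stays black it remains active black and keeps flipping a fresh fair coin, turning gray with probability exactly $1/2$. Hence a black run has geometrically distributed length with parameter $1/2$, contributing at most $2$ to $\Exp[N']$ counted from its first round (and at most $1$ for the initial run, whose first round is $\le t$, since round $t$ is excluded from $(t,t')$).

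The remaining and main task is to bound the expected number of black runs meeting $(t,t')$. Here I would look at the successive rounds $>t$ at which $u$ is active (active black, or white with no black neighbour) and at the fair coin it uses at each of them; a short case check shows that, while $t'$ has not yet been reached, the black/white type of $u$ at the \emph{next} active round equals the \emph{current} coin outcome, so after the first active round these types form an i.i.d.\ fair bit sequence stopped at the number $M$ of active rounds occurring before $t'$, with $N'$ the count of ``black'' types among them. Each black run after the first then opens with a fresh fair ``black'' coin, and the stopping rule for $t'$ --- in particular that $t'$ occurs the moment $u$ gains a stable black neighbour or becomes white with at least $d$ black neighbours --- should let me bound the expected number of runs, and hence $\Exp[N']$, by the constant $3$ via an optional-stopping argument.

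I expect this last step to be the real obstacle. The subtlety is that a white, black-neighbour-free $u$ that fails its ``flip'' coin need not retry at once: it may drift through white states (black-neighbour-free, or white with between $1$ and $d-1$ non-stable black neighbours) for several rounds before it either flips back to black or triggers $t'$, so ``each re-entry into a black run costs an independent fair coin'' is not literally a per-round statement. My plan to handle this is to count only the fair coins that $u$ actually uses \emph{while it is black} --- these encode the black-run structure exactly --- show that they form an i.i.d.\ fair sequence up to a stopping time, and then control that stopping time using the $t'$ rule, which terminates the process precisely when $u$'s environment has settled enough around $u$.
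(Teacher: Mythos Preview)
Your reduction $N\le N'$ throws away exactly the structure that makes the bound work, and this is a genuine gap. The quantity $N'$ (number of active-black rounds of $u$ in $(t,t')$) is \emph{not} bounded by a constant in general: take $d$ larger than the degree of $u$, so that the first clause in the definition of $t'$ never fires and $t'$ is simply the stabilization time of $u$; then $N'$ counts all active-black rounds of $u$ until it stabilizes, which can be of order $\log n$ (or more) in expectation. Your optional-stopping plan cannot succeed here because the stopping rule for $t'$ gives no control over black runs that start with fewer than $d$ black neighbours.

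The missing observation is a monotonicity property specific to the $\ge d$ condition: while $u$ is black, no non-black neighbour of $u$ can turn black (such a neighbour has $u$ as a black neighbour, so it is never active white). Hence within a single black run of $u$ the number of black neighbours of $u$ is non-increasing. Consequently a black run contributes to $N$ only if it \emph{starts} with $\ge d$ black neighbours, and in that case the counted rounds form a prefix of the run. Now the paper's two observations finish the job cleanly: for every such run starting at round $s>t$ (via a white$\to$black coin), the alternative coin outcome makes $u$ white with $\ge d$ black neighbours at round $s$, i.e.\ $t'=s$; so the number of contributing runs started after $t$ is stochastically at most a geometric variable with mean $\le 1$. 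Each contributing run has expected length $\le 2$ (the gray coin), giving $\le 2$ in total, plus at most $1$ expected counted round from the initial run if $u$ is already black at round $t$. That yields the bound $3$. Your decomposition into runs and the geometric run-length argument are fine; what you need to add is the monotonicity step and then restrict attention to runs that begin with $\ge d$ black neighbours rather than to all active-black rounds.
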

\begin{proof}
    The lemma is obtained using the observations that: each time $u$'s state changes from white to black, it is equally likely that it remained white;
    and, when $u$ is active black, it becomes gray in the next step with probability $1/2$. 
\end{proof}

\subsection{Analysis of the 3-Color MIS Process on 
 \texorpdfstring{$G_{n,p}$}{Gnp}}
\label{sec:3color-Gnp}

We show that the stabilization time of the 3-color MIS process on $G_{n,p}$ random graphs is $\poly(\log n)$, for the complete range of values of $p$. 

\begin{theorem}
    \label{thm:GnpUB-3s}
    The stabilization time of the 3-color MIS process on a random graph drawn from $G_{n,p}$ is $O(\log^{6} n)$
    with probability $1-O(n^{-2})$.
\end{theorem}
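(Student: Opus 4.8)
The plan is to combine \cref{lem:Gnp-good} with a structural analysis of the 3-color MIS process on a good graph, in the same spirit as the proof of \cref{lem:Good-2s}, but handling the extra range $p = \Omega(\sqrt{\log(n)/n})$ via the logarithmic switch. Concretely, it suffices to prove a lemma of the form: on any $(n,p)$-good graph, the 3-color MIS process satisfies $\Exp[|V_{t+O(\log^k n)}|] \leq (1-1/\poly(\log n))\cdot|V_t|$ for all $t$ (after the $O(\log n)$ warm-up rounds needed for the randomized switch to start behaving; by \cref{lem:rswitch} its guarantees hold w.h.p.\ for the first $n$ rounds, which covers the whole run). Iterating this contraction $O(\poly(\log n))$ times drives $\Exp[|V_t|]$ below $n^{-2}$, and Markov plus a union bound over the $O(n^{-2})$ failure probabilities of \cref{lem:Gnp-good} and \cref{lem:rswitch} finishes the argument, giving $O(\log^6 n)$.

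First I would dispose of the two easy regimes exactly as before. The cases $|A_t| \geq 80\ln(n)/p$ and $|V_t|\geq 10\ln^2(n)/p$ with $|A_t|\leq 80\ln(n)/p$ are handled by (the 3-color analogues of) \cref{lem:GG-case-large-At,lem:GG-case-large-Vt}; these only use \cref{lem:active-vertex-st,lem:neighbors2-st} and properties \cref{def:good-degree,def:good-subset-neighbors,def:good-STI} of good graphs, all of which carry over since gray vertices behave like non-active white vertices. For the remaining case, $|V_t| = O(\ln^2(n)/p)$, I would split on whether $p \leq c\sqrt{\log(n)/n}$ or $p > c\sqrt{\log(n)/n}$. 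In the small-$p$ sub-case, the argument of \cref{lem:GG-case-small-AtVt} goes through essentially verbatim — property \cref{def:good-common} still bounds $\theta_v(i) \leq i\cdot(6c^2+4)\log n$, and \cref{lem:key-lemma,lem:non-active-to-pretty} hold for $u\in V_t\setminus(A_t\cup\GG_t)$; the gray state only helps (a gray vertex with a gray or white closed neighborhood is either stable or becomes active-white soon, using \cref{prop:switch-off-ub}).

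The real work is the sub-case $p > c\sqrt{\log(n)/n}$, where $\diam(G)\leq 2$ by \cref{def:good-common-lb}, so the logarithmic switch satisfies all three properties of \cref{def:switch} with $a=512$, $b=3$. Here I would again take $V_t = O(\ln^2(n)/p)$, pick $T\subseteq V_t$ to be the $\min\{\ln(n)/p, |V_t|/2\}$ highest-degree vertices in $G[V_t]$ and $S = V_t\setminus T$, so by \cref{def:good-ST} a constant fraction $R\subseteq S$ has $|N(u)\cap T|\leq 12\ln n$ and $|N(u)\cap V_t|\leq d = O(\log^3 n)$ for $u\in R$. Fix $u\in R$; to keep $u$ from becoming $d$-active or stabilizing over a window of $r = O(\log\log n)$ rounds, every round must leave some neighbor of $u$ non-stable black. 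Now instead of the $b_r/r$ machinery of \cref{lem:non-active-to-pretty2}, I would use the switch: by \cref{lem:freq-black}, over a window of length $\frac{a}{6}\ln n$ any fixed vertex $v$ is non-stable black in expectation at most $4$ times, so by Markov and linearity, the expected number of (vertex, round) pairs $(v,j)$ with $v\in N(u)\cap V_t$, $v$ non-stable black at round $j$, summed over a window of $\Theta(\log n)$ rounds, is $O(d) = O(\log^3 n)$; hence in a constant fraction of a suitably chosen sub-window, $u$ has \emph{no} non-stable black neighbor, so $u$ is active-white or stable, hence $d$-active or stable, and by \cref{lem:active-vertex-st} stabilizes within a further $O(\log n)$ rounds with probability $\Omega(1/d)$. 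More carefully: either most neighbors of $u$ are rarely non-stable black, in which case a union bound over the window gives $u$ a round with no non-stable black neighbor with constant probability; or some neighbor $v^\ast$ is non-stable black often, and then by \cref{lem:gray-happy} together with the $\GG_t$-version of \cref{lem:key-lemma,lem:non-active-to-pretty}, $v^\ast$ becomes $\poly(\log n)$-active with probability $1/\poly(\log n)$ (the point of the switch is precisely that the transition black$\to$gray$\to$white is throttled, so \cref{lem:non-active-to-pretty} applies with a favorable parameter even when $\theta_{v^\ast}$ is as large as $6np^2 = \Theta(n)$, because the relevant bound on $k$ no longer multiplies $\theta$ by a polylog factor coming from the number of active neighbors). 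Either way $u\notin V_{t+O(\log n)}$ with probability $1/\poly(\log n)$, giving the contraction.

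\textbf{The main obstacle} I expect is exactly this last argument: making precise how the throttled gray transition lets us apply \cref{lem:non-active-to-pretty} (or a variant) to a neighbor $v^\ast$ whose common-neighborhood parameter $\theta_{v^\ast}$ is no longer polylogarithmic — this is the whole reason the 2-state analysis failed for $p > \poly(\log n)\cdot n^{-1/2}$. The switch ensures that $v^\ast$ is non-stable black in only $O(1)$ rounds per $\Theta(\log n)$-window (\cref{lem:freq-black}) and that after becoming gray it stays non-black for $\Theta(\log n)$ rounds (\cref{prop:switch-off-lb,prop:switch-on-ub}), which should let us charge the event "$u$ has a non-stable black neighbor every round" to a small number of "$v^\ast$ turns black" events and then extract $\theta_{v^\ast}$-activity of $v^\ast$ with the correct (polylog, not $n$) value of $k$ — but getting the bookkeeping of windows, the $O(\log n)$ warm-up, and the interaction between the switch's w.h.p.\ guarantees and the per-vertex probabilistic statements all consistent is the delicate part. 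Everything else (the two easy cases, the small-$p$ case, the final iteration-and-Markov wrap-up) is routine given the lemmas already proved.
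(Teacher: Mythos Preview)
Your overall architecture is right (reduce to good graphs, case-split on $|A_t|$ and $|V_t|$, iterate a contraction, finish with Markov), but there are two gaps, one minor and one that reflects a real misunderstanding of how the large-$p$ case is handled.

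\textbf{Minor gap (large $|V_t|$).} The claim that \cref{lem:GG-case-large-Vt} ``carries over since gray vertices behave like non-active white vertices'' is not quite true. In the 2-state process every non-stable vertex is active or has an active neighbor, so $V_t \subseteq N^+(A_t)$; this is what makes the four-set partition in that proof exhaustive. In the 3-color process a gray vertex can be non-stable with no active neighbor, so there is a fifth piece $V_t\setminus N^+(A_t)\subseteq \GG_t$. The paper deals with this by adding the hypothesis $|\GG_t|\leq 80a\ln^2(n)/p$ to the lemma, and proving a separate reduction lemma: if $|\GG_t|$ is large, then by \cref{lem:gray-active} some round in the last $a\ln n$ rounds had $|A_j|\geq 80\ln(n)/p$, and \cref{lem:GG-case-large-At} applies there.

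\textbf{Main gap (small $|V_t|$, $\diam(G)\leq 2$).} You have misidentified the obstacle. The switch does \emph{not} make \cref{lem:non-active-to-pretty} applicable with large $\theta_{v^\ast}$ --- $\theta$ is a purely structural quantity and for $p\gg n^{-1/2}$ it is genuinely $\Theta(np^2)$ regardless of the dynamics. What the switch buys is that the $v^\ast$ / $\theta$ machinery is \emph{unnecessary} in this regime. The correct move is to apply \cref{lem:freq-black} only to the vertices in $N(u)\cap T$, of which there are at most $12\ln n$ (not to all of $N(u)\cap V_t$, which has up to $d=O(\log^3 n)$ vertices --- that is why your counting did not close). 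Over a window of $\tfrac{a}{6}\ln n$ rounds the expected number of (vertex,\,round) pairs with a $T$-neighbor of $u$ non-stable black is at most $4\cdot 12\ln n$; since $a/6>60$, Markov gives with constant probability a round $j$ in the window with $(N(u)\cap T)\cap(B_j\cap V_j)=\emptyset$. At such a round, either $u$ is already stable, or $u$ was active at some earlier round in the window, or (the key case) some vertex in $N^+(u)\cap S$ is active at round $j$ --- because any black neighbor of $u$ at round $j$ lies in $S$ and is non-stable, hence active. In every case some vertex of $N^+(u)\cap S$ is $d$-active within the window, and \cref{lem:active-vertex-st} finishes with probability $\Omega(1/d)$. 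No $v^\ast$, no \cref{lem:non-active-to-pretty}, no $\theta$. (The refined lemmas \emph{are} still used, but only in the small-$p$ sub-case, where $\theta$ is polylogarithmic by \cref{def:good-common}.)
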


As before, it suffices to show that the above 
bound holds for good graphs, and 
apply  \cref{lem:Gnp-good}.

\begin{lemma}
    \label{lem:Good-2c}
    The stabilization time of the 3-color MIS process on any $(n,p)$-good graph $G=(V,E)$ is $O(\log^{6} n)$
    with probability $1-O(n^{-2})$.
\end{lemma}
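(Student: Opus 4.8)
The plan is to follow the same high‑level strategy as in the proof of \cref{lem:Good-2s}. First I would condition on the randomized logarithmic switch of \cref{def:rswitch} satisfying properties \cref{prop:switch-off-ub,prop:switch-off-lb,prop:switch-on-ub}; by \cref{lem:rswitch} this holds with probability $1-O(n^{-2})$ throughout the first $n$ rounds, and since the claimed stabilization time is $O(\log^6 n)\ll n$ this conditioning is harmless. It then suffices to show that from any vertex‑state vector, at any round $t\ge a\ln n$, one has $\Exp[|V_{t+\Delta}|\mid \mathcal{F}_t]\le(1-\epsilon/\poly(\log n))\,|V_t|$ for a suitable $\Delta=O(\log n)$; iterating this inequality $\poly(\log n)$ times drives $\Exp[|V_\cdot|]$ below $n^{-2}$ and Markov's inequality finishes the proof (the first $a\ln n=O(\log n)$ rounds, needed so that \cref{lem:gray-active,lem:freq-black} apply and the switch has ``synchronized'', contribute nothing and are ignored). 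Since \cref{lem:active-vertex-st,lem:neighbors2-st,lem:key-lemma,lem:non-active-to-pretty} carry over to the 3‑color process, the case analysis parallels that of \cref{sec:2state-GG}, with one new case for large $p$.

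The cases $|A_t|\ge 80\ln(n)/p$ and $\{|V_t|\ge\poly(\log n)/p,\ |A_t|\le 80\ln(n)/p\}$ I would handle essentially as in \cref{lem:GG-case-large-At,lem:GG-case-large-Vt}; the only new point is that $V_t$ now also contains gray vertices. For this I would use \cref{lem:gray-active}, which gives $\GG_t\subseteq\bigcup_{j=t-a\ln n}^{t-1}A_j$, to split into the sub‑case where gray non‑stable vertices form at most half of $V_t$ (then, after adjusting constants, the proof of \cref{lem:GG-case-large-Vt} goes through, since the remaining white non‑active non‑stable vertices still lie in $N(A_t\cap B_t)$) and the sub‑case where they are the majority (then some recent round $j$ had $|A_j|$ large, and the large‑$A_t$ bound applied at round $j$ yields the progress). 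The case $\{|V_t|\le 10\ln^2(n)/p,\ p\le c\sqrt{\log(n)/n}\}$ I would handle as in \cref{lem:GG-case-small-AtVt}: the structural decomposition of $G[V_t]$ is unchanged, \cref{def:good-common} still gives $\theta_v(i)=O(i\log n)$ because $p$ is small, and the role played there by \cref{lem:active-to-pretty,lem:non-active-to-pretty2} for a neighbor $v^\ast$ of $u$ that is black too often is played instead by \cref{lem:gray-happy} together with \cref{lem:non-active-to-pretty}: \cref{lem:gray-happy} shows $v^\ast$ cannot remain black‑with‑many‑black‑neighbors for long, so it soon becomes stable or white and non‑active, and from that round \cref{lem:non-active-to-pretty} applies.

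The main new case, and the only place where the logarithmic switch is essential, is $\{|V_t|\le 10\ln^2(n)/p,\ p>c\sqrt{\log(n)/n}\}$; here $\diam(G)\le 2$ by \cref{def:good-common-lb}, so \cref{prop:switch-off-lb,prop:switch-on-ub} are available. By \cref{def:good-degree} the average degree of $G[V_t]$ is $O(\log^2 n)$; let $H$ be the vertices of degree $\ge\log^3 n$ in $G[V_t]$, so $|H|=O(\log n/p)\le\ln(n)/p$, and by \cref{def:good-ST} applied to $(V_t\setminus H,H)$ one gets $|E(V_t\setminus H,H)|\le 6|V_t|\ln n$; hence a constant fraction of the vertices $u\in V_t$ are \emph{good}, meaning $u\notin H$ and $|N(u)\cap H|\le 12\ln n$. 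Fix a good $u$ and set $r=\tfrac a6\ln n$. The key point is that each high‑degree neighbor $v\in N(u)\cap H$ is, by \cref{lem:freq-black} (which is exactly where $\diam(G)\le 2$ enters), non‑stable‑black — equivalently active‑black — in at most $4$ rounds of $[t,t+r]$ in expectation; since $|N(u)\cap H|\le 12\ln n$ while $r+1>48\ln n$, Markov's inequality gives that with constant probability at least $21\ln n$ rounds of $[t,t+r]$ are \emph{clean}, i.e.\ have no active‑black high‑degree neighbor of $u$. In any clean round $j$ at which $u$ is still not stable: if $u$ is active, then since $u\notin H$ its active neighbors (all in $V_j\subseteq V_t$) number fewer than $\log^3 n$, so $u$ is $\log^3 n$‑active; and if $u$ is not active, it either has a stable‑black neighbor (so $u$ is already stable, contradiction) or has an active‑black neighbor $v$, which by clean‑ness is not in $H$ and hence is $\log^3 n$‑active. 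Applying \cref{lem:active-vertex-st} to $u$ in the first case, and to $v$ in the second (noting $v\in I_\cdot$ forces its neighbor $u$ to become stable), shows that with probability $\Omega(1/\log^3 n)$ the vertex $u$ becomes stable within $O(\log\log n)$ further rounds. Treating each clean round as a trial and greedily discarding an $O(\log\log n)$‑round window after each trial yields $\Omega(\ln n/\log\log n)$ essentially independent trials, each succeeding with probability $\Omega(1/\log^3 n)$, so $\Pr[u\notin V_{t+O(\log n)}]=\Omega\big(1/(\log^2 n\,\log\log n)\big)$; summing over the $\Omega(|V_t|)$ good vertices gives the required expected progress. (Rounds in which $u$ itself is gray are reduced to earlier active rounds of $u$ via \cref{lem:gray-active}; this bookkeeping is routine.)

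Finally, as in the ``Putting the Pieces Together'' paragraph of \cref{sec:2state-GG}, I would split on whether $p\le c\sqrt{\log(n)/n}$ or $p>c\sqrt{\log(n)/n}$, combine the three relevant cases into an inequality $\Exp[|V_{t+\Delta}|\mid\mathcal{F}_t]\le(1-\epsilon/\poly(\log n))\,|V_t|$ with $\Delta=O(\log n)$, iterate it until $\Exp[|V_\cdot|]<n^{-2}$, and conclude by Markov's inequality together with the $O(n^{-2})$ failure probabilities from \cref{lem:rswitch} and from $G$ being $(n,p)$‑good. The main obstacle I expect is precisely the large‑$p$, small‑$|V_t|$ case: realizing that \cref{lem:freq-black} (enabled by the switch, i.e.\ by $\diam(G)\le 2$) lets one treat the high‑degree neighbors of a good vertex as ``usually not black'', so that only its low‑degree neighbors — which are automatically $\log^3 n$‑active whenever black — remain to be controlled, and then executing the booster argument carefully; handling gray vertices (here and in the large‑$|V_t|$ case), via \cref{lem:gray-active,lem:gray-happy}, is the secondary source of technical friction.
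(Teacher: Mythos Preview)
Your overall strategy and case decomposition match the paper's exactly: \cref{lem:GG-case-large-At} carries over; the large-$|V_t|$ case is split according to whether $|\GG_t|$ is large, using \cref{lem:gray-active} to reduce the gray-heavy sub-case to a recent round with large $|A_j|$ (this is the paper's \cref{lem:GG-case-large-Vt-3c,lem:GG-case-large-Vt-3c-Gamma}); for small $|V_t|$ and small $p$ you correctly replace \cref{lem:active-to-pretty,lem:non-active-to-pretty2} by \cref{lem:gray-happy} plus \cref{lem:non-active-to-pretty} (this is \cref{lem:GG-case-small-AtVt-3c-small-p}); and for small $|V_t|$ with $\diam(G)\le 2$ you use \cref{lem:freq-black} on the few high-degree neighbours of a ``good'' vertex $u$ to find a clean round, exactly as in \cref{lem:GG-case-small-AtVt-3c}.

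The one flaw is the booster in that last case. Your ``essentially independent trials'' claim does not go through: the event $E=\{\text{at least }21\ln n\text{ clean rounds}\}$ is obtained via Markov's inequality with probability only $\Omega(1)$, and $E$ depends on the same coins that determine the trial outcomes, so you cannot multiply the gains over $\Omega(\ln n/\log\log n)$ trials and still subtract $\Pr[\bar E]$. Concretely, from $\Pr[A\cup B]\ge 1-(1-c/\log^3 n)^K$ with $B\subseteq\bar E$ you only get $\Pr[A]\ge 1-(1-c/\log^3 n)^K-\Pr[\bar E]$, which is useless since $\Pr[\bar E]$ is a constant. The paper avoids this entirely: it uses a \emph{single} clean round, via the stopping time $\tau=\min\{j:(N^+(u)\cap S)\cap A_j\neq\emptyset\text{ or }u\notin V_j\}$, so that the Markov bound gives $\Pr[\tau\le t+r]\ge 1/5$ and then \cref{lem:active-vertex-st} applied at time $\tau$ uses only fresh coins, yielding $\Pr[u\notin V_{t+r+\log n}]\ge (1/5)\cdot 1/(6ed)=\Omega(1/\log^3 n)$. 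Since the small-$p$ case is the bottleneck anyway (it gives $\epsilon/\ln^{3.9}n$ per $\log^{1.1}n$ rounds), your boosted bound would not improve the final $O(\log^6 n)$; dropping the booster and keeping the single-clean-round argument, which you already describe, makes your proof correct and identical to the paper's.
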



\subsubsection{Proof of \texorpdfstring{\cref{lem:Good-2c}}{Lemma Good-2c}}
The proof strategy is similar to \cref{lem:Good-2s}'s: 
From any vector of vertex states at the end of round $t$, we show that the process makes sufficient progress in expectation in $\poly(\log n)$ rounds.
The main difference is that now we show that this is also true even in the case of $|V_t| = O(\log^2(n)/p)$ when $\diam(G) \leq 2$, which corresponds to the case of $p = \Omega (\sqrt{\log(n)/ n})$, by property \cref{def:good-common-lb} in \cref{def:good-graph}.
This is precisely the case that we could not handle in the analysis of the 2 state MIS process.
The relevant lemma is \cref{lem:GG-case-small-AtVt-3c}.

We first observe that \cref{lem:GG-case-large-At}, which considers that case of $|A_t| = \Omega(\log(n)/p)$, carries over to the 3-color MIS process, without any changes in the proof.

Next we consider the case where $|A_t| = O(\ln(n)/p)$, $|V_t| = \Omega(\ln^2 (n)/p)$, and $|\GG_t| = O(\ln^2(n)/p)$.
The following lemma is very similar to \cref{lem:GG-case-large-Vt}, except that it requires also a bound on $|\GG_t|$. 
Recall that $a = 512$ is a parameter of the logarithmic switch.
 
\begin{lemma}\label{lem:GG-case-large-Vt-3c}
  If $|V_t| \geq  82a\ln^2(n)/p$,  
  $|A_t|\leq 80\ln(n)/p$, and
  $|\GG_t|\leq 80a\ln^2(n)/p$,
  then 
  there is a constant $\epsilon>0$ such that
  $
     \Exp[|V_{t+\log n}| ] 
     \leq \left(1-
     \epsilon/\ln n
     \right)
     \cdot |V_t|
     .
  $
\end{lemma}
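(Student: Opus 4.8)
The plan is to follow the template of \cref{lem:GG-case-large-Vt} closely, the only genuine novelty being how to account for the gray vertices, which behave like non-active white vertices but can "hide" black vertices from $u$ and thus prevent $u$ from ever becoming active. First I would invoke property~\cref{def:good-degree} on the induced subgraph $G[A_t]$ to get average degree $k=\max\{8p|A_t|,4\ln n\}\le 640\ln n$ (using $|A_t|\le 80\ln(n)/p$). As in the earlier proof, I would let $S\subseteq A_t$ be the $2|A_t|/3$ vertices of smallest degree in $G[A_t]$ and set $T=A_t\setminus S$, so that $|S|\ge|T|$ and $S\subseteq A_t^{3k}$. Then I would partition $V_t$ — which consists of the active vertices $A_t=S\cup T$ together with the non-active, non-stable vertices (each of which is white \emph{or gray} and has an active neighbor) — into the sets $S$, $N(S)\setminus N(I_t)$, $T\setminus N(S)$, and the "remainder" $R:=V_t\setminus(S\cup (N(S)\setminus N(I_t))\cup(T\setminus N(S)))$.

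The key extra step, and the main obstacle, is bounding $|R|$. In the 2-state proof this was done by property~\cref{def:good-STI} applied with the triple $(S,T,I_t)$, because every non-active non-stable vertex not in $N(S)$ necessarily lies in $N(T)$. Here that is still true \emph{for white vertices}, but a gray vertex $v$ has no black neighbor in $B_t$ at all, so it need not be in $N(A_t)$ — it is gray simply because it was active black recently. So I would split $R$ into its gray part $R\cap\GG_t$, bounded trivially by $|\GG_t|\le 80a\ln^2(n)/p$, and its white part $R\cap W_t$, which \emph{is} contained in $N(T)\setminus N^+(S\cup I_t)$ and hence, by property~\cref{def:good-STI}, has size at most $|N(S)\setminus N(I_t)|+8\ln^2(n)/p$. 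Combining, $|R|\le |N(S)\setminus N(I_t)|+8\ln^2(n)/p+80a\ln^2(n)/p\le |N(S)\setminus N(I_t)|+81a\ln^2(n)/p$. Using $|T\setminus N(S)|\le|T|<|S|$ and that the four sets sum to $|V_t|$, I get
\[
  |S|+|N(S)\setminus N(I_t)| \;\ge\; \tfrac12\bigl(|V_t| - 81a\ln^2(n)/p\bigr) \;\ge\; |V_t|/82,
\]
where the last inequality uses the hypothesis $|V_t|\ge 82a\ln^2(n)/p$. Hence at least $|V_t|/82$ vertices of $V_t$ are in $S$ or adjacent to a vertex of $S$.

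Finally I would apply \cref{lem:active-vertex-st} (which carries over verbatim to the 3-color process) to each $u\in S\subseteq A_t^{3k}$: such a $u$ becomes stable black — making all its neighbors stable — by the end of round $t+\log n$ with probability at least $1/(6ek)$. Therefore
\[
  \Exp[|V_{t+\log n}|] \;\le\; |V_t| - \frac{|V_t|}{82}\cdot\frac{1}{6ek} \;\le\; \Bigl(1-\frac{\epsilon}{\ln n}\Bigr)|V_t|
\]
for a suitable constant $\epsilon>0$ (absorbing $k\le 640\ln n$ into the denominator), which is the claimed bound. The only subtlety to double-check is that $N(S)\setminus N(I_t)$ vertices are indeed "stabilized for free" once their $S$-neighbor stabilizes black — a gray or white neighbor of a stable black vertex becomes stable by definition in the 3-color process — and that the disjointness/size hypotheses of property~\cref{def:good-STI} are met, namely $(S\cup T)\cap N(I_t)=\emptyset$, which holds because $A_t\cap N^+(I_t)=\emptyset$, and $|S|\ge 2|T|$, which holds by construction since $|S|=2|A_t|/3$ and $|T|=|A_t|/3$.
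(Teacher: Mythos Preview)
Your proof is correct and follows essentially the same approach as the paper's: the paper uses a five-set decomposition of $V_t$ (adding the set $V_t\setminus N^+(S\cup T)\subseteq\GG_t$ to the four sets from \cref{lem:GG-case-large-Vt}), while you lump the last two into a remainder $R$ and split it by color, but the bounds and the use of property~\cref{def:good-STI} and \cref{lem:active-vertex-st} are identical. One minor arithmetic slip: from $|V_t|\ge 82a\ln^2(n)/p$ you only get $\tfrac12(|V_t|-81a\ln^2(n)/p)\ge |V_t|/164$, not $|V_t|/82$ (the paper uses the even looser $|V_t|/(82a)$); this of course does not affect the conclusion.
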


\begin{proof}
    The proof is very similar to \cref{lem:GG-case-large-Vt}'s.
    As before, from property~\cref{def:good-degree} in \cref{def:good-graph}, the average degree of $G[A_t]$ is at most 
    $
        k = \max\{8p|A_t|,\, 4\ln n\} \leq 640\ln n.
    $
    We let $S$ be a subset of $A_t$ consisting of the  $2|A_t|/3$ vertices $u\in A_t$ with the smallest degree in $G[A_t]$, and let $T = A_t\setminus S$.
    Then for all $u\in S$,
    $|N(u)\cap A_t|\leq 3k$,
     thus $S \subseteq A_t^{3k}$.
%

    The set $V_t$ consist of 
    (i)~all active vertices, $u\in A_t = S\cup T$, 
    (ii)~all non-active non-stable vertices that have some active neighbor, and
    (iii)~all non-active non-stable vertices have no active neighbors (these vertices are gray).
    We can thus partition $V_t$ into the five distinct sets: $S$, $N(S) \setminus N(I_t)$, $T\setminus N(S)$, $N(T) \setminus N^+(S\cup I_t)$, and $V_t\setminus N^+(T\cup Sf)\subseteq \GG_t$.
    We have that $|V_t\setminus N^+(T\cup S)| \leq |\GG_t|\leq 80a\ln^2(n)/p$, 
    $|T\setminus N(S)| \leq |T| < |S|$, and,
    by property~\cref{def:good-STI} in \cref{def:good-graph},
    \[
        |N(T) \setminus N^+(S\cup I_t)|
        \leq
        |N(S) \setminus N(I_t)| + 8\ln^2(n)/p
        .
    \]
    Using these three inequalities, the fact that the sizes of the five sets above sum to $|V_t|$, the assumption $|V_t| \geq  82a\ln^2(n)/p$, and that $a \geq 8$, we obtain
    \[
        |S| + |N(S) \setminus N(I_t)| 
        \geq (|V_t| - (80a+8)\ln^2(n)/p)/2
        \geq (|V_t| - 81a\ln^2(n)/p)/2
        \geq |V_t|/82a
        .
    \]
    Therefore, at least $|V_t|/82a$ vertices $u\in V_t$ are in $S$ or adjacent to $S$. 
    And, from \cref{lem:active-vertex-st},
    each $u\in S \subseteq A_t^{3k}$ is in $I_{t+\log n}$, with probability at least $1/(6ek)$.
    It follows
    \[
        \Exp[|V_{t+\log n}|] 
        \leq 
        |V_t| - (|V_t|/82a) \cdot 1/(6ek)
        \leq
        |V_t| - |V_t|/(1.1\cdot 82a \cdot 10^4\ln n )
        .
        \qedhere
    \]
\end{proof}

Next we assume $|A_t| = O(\ln(n)/p)$ and $|V_t| = \Omega(\ln^2 (n)/p)$, as in the previous lemma, but now $|\GG_t| = \Omega(\ln^2(n)/p)$.
We reduce this case to the previous cases using \cref{lem:gray-active}.

\begin{lemma} \label{lem:GG-case-large-Vt-3c-Gamma}
  If $|V_t| \geq  83a\ln^2(n)/p$,  
  $|A_t|\leq 80\ln(n)/p$, and
  $|\GG_t|> 80a\ln^2(n)/p$,
  then 
  there is a constant $\epsilon>0$ such that
  $
     \Exp[|V_{t + a\ln n +\log n}| ] 
     \leq \left(1-
     \epsilon/\ln n
     \right)
     \cdot |V_t|
     .
  $    
\end{lemma}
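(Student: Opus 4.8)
The plan is to reduce this case, after $r := a\ln n$ rounds, to one of the two cases already treated, namely \cref{lem:GG-case-large-At} (large active set) or \cref{lem:GG-case-large-Vt-3c} (few grey vertices). Two facts make this possible: first, $|V_j|$ is non-increasing in $j$ (the stable-black set $I_j$ only grows), so it suffices to bound $|V_{s+\log n}|$ for \emph{some} round $s\le t+r$; and second, \cref{lem:gray-active} says that every grey vertex was active black at some point within the preceding $a\ln n$ rounds, so a small active set throughout a window of length $a\ln n$ forces a small grey set at the end of that window.

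I would define the stopping time $t^\ast=\min\{j\ge t\colon|A_j|>80\ln(n)/p\}$ (set $t^\ast=\infty$ if there is no such round), and split the sample space into three disjoint events.
\emph{(i) On $\{t^\ast\le t+r\}$}: at round $t^\ast$ we have $|A_{t^\ast}|\ge 80\ln(n)/p$, so \cref{lem:GG-case-large-At} — which, as already observed, carries over verbatim to the 3-color process — gives $\Exp[\,|V_{t^\ast+\log n}|\mid\text{state at round }t^\ast\,]\le(1-\epsilon)|V_{t^\ast}|$ for a constant $\epsilon>0$; since $\{t^\ast=s\}$ is measurable with respect to the states up to round $s$, the Markov property, together with $|V_{t^\ast}|\le|V_t|$ and the monotonicity of $|V_\cdot|$, yields $\Exp[\,|V_{t+r+\log n}|\mid t^\ast=s\,]\le(1-\epsilon)|V_t|$ for each $s\le t+r$.
\emph{(ii) On $\{t^\ast> t+r\}\cap\{|V_{t+r}|<82a\ln^2(n)/p\}$}: by monotonicity and the hypothesis $|V_t|\ge 83a\ln^2(n)/p$ we get $|V_{t+r+\log n}|\le|V_{t+r}|<\tfrac{82}{83}|V_t|$ deterministically, a constant-factor decrease, hence certainly at most $(1-\epsilon/\ln n)|V_t|$ for a small enough constant $\epsilon$.
\emph{(iii) On $\{t^\ast> t+r\}\cap\{|V_{t+r}|\ge 82a\ln^2(n)/p\}$}: here $|A_j|\le 80\ln(n)/p$ for all $t\le j\le t+r$, so in particular $|A_{t+r}|\le 80\ln(n)/p$, and by \cref{lem:gray-active} applied at round $t+r$ (note $t+r\ge a\ln n$) we get $\GG_{t+r}\subseteq A_t\cup\dots\cup A_{t+r-1}$, whence $|\GG_{t+r}|\le (a\ln n)\cdot 80\ln(n)/p=80a\ln^2(n)/p$; thus \cref{lem:GG-case-large-Vt-3c} applies at round $t+r$ and gives $\Exp[\,|V_{t+r+\log n}|\mid\text{state at round }t+r\,]\le(1-\epsilon/\ln n)|V_{t+r}|\le(1-\epsilon/\ln n)|V_t|$.

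Combining the three cases via the law of total expectation yields $\Exp[\,|V_{t+a\ln n+\log n}|\,]\le(1-\epsilon/\ln n)|V_t|$, after taking $\epsilon$ to be the minimum of the three constants obtained, which is exactly the assertion of the lemma.

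I do not expect a genuine obstacle here; the only real care needed is bookkeeping. One must be precise when transferring the guarantees of \cref{lem:GG-case-large-At,lem:GG-case-large-Vt-3c} — stated conditionally on the state at a \emph{fixed} round — to guarantees conditioned on the random event $\{t^\ast=s\}$, which is where the Markov property of the process enters, and then push every resulting bound forward to the common round $t+a\ln n+\log n$ using monotonicity of $|V_\cdot|$. One must also keep the threshold constants consistent (treating $a\ln n$ as an integer, as elsewhere), so that in case~(iii) the hypotheses $|V_{t+r}|\ge 82a\ln^2(n)/p$ and $|\GG_{t+r}|\le 80a\ln^2(n)/p$ of \cref{lem:GG-case-large-Vt-3c} are met exactly — this is the reason the hypothesis of the present lemma carries the slightly larger constant $83a$.
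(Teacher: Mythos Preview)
Your proposal is correct and follows essentially the same approach as the paper. The paper defines a single stopping time $\tau$ on all three conditions (small $|V_j|$, large $|A_j|$, or small $|\GG_j|$) and shows $\tau\le t+a\ln n$ via the same pigeonhole argument you use in case~(iii); you instead stop only on large $|A_j|$ and derive the $|\GG_{t+r}|$ bound directly at the end of the window, which is just a minor reorganization of the same three-case reduction to \cref{lem:GG-case-large-At} and \cref{lem:GG-case-large-Vt-3c}. Your treatment of the Markov property and the measurability of $\{t^\ast=s\}$ is in fact more explicit than the paper's.
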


\begin{proof}
    Let 
    $
        \tau 
        = 
        \min \{j\geq t\colon |V_j|\leq 82a\ln^2(n)/p
        \text{\; or \;}
        |A_j|\geq 80\ln(n)/p 
        \text{\; or \;}
        |\GG_j|\leq  80a\ln^2 (n)/p
        \}
        .
    $
    We have $\tau \leq t+a\ln n$, because if $|\GG_{t+a\ln n}| > 80a\ln^2 (n)/p$, then \cref{lem:gray-active} implies there is some $j\in \{t,\ldots,t+a\ln n -1\}$ such that $|A_j| \geq |\GG_{t+a\ln n}|/(a\ln n) \geq 80\ln(n)/p$.
    We distinguish three cases depending on which condition in the definition of $\tau$ is satisfied first.
    If $|V_\tau|\leq 82a\ln^2(n)/p$, then 
    \[
        |V_{t+a\ln n}|
        \leq
        |V_\tau|
        \leq 
        82a\ln^2(n)/p 
        \leq
        (1-1/83)\cdot |V_t|
        .
    \]
    If $|A_\tau|\geq 80\ln(n)/p$, then \cref{lem:GG-case-large-At} yields
    $
        \Exp[|V_{t + a\ln n + \log n}| ]
        \leq
        \Exp[|V_{t+\tau + \log n}| ] 
        \leq 
        \left(1-
        \epsilon
        \right)\cdot |V_{t}|
            .
    $
    Last, if $|\GG_\tau| \leq  80a\ln^2 (n)/p$ and the other two conditions do not hold, then \cref{lem:GG-case-large-Vt-3c} gives
    $\Exp[|V_{t + a\ln n +\log n}| ] 
     \leq \left(1-
     \epsilon/\ln n
     \right)
     \cdot |V_t|.
$
\end{proof}

The next two lemmas deal with the case of $|V_t| = O(\ln^2(n)/p)$. 
The first one assumes $\diam(G)\leq 2$, and thus covers the case of $p = \Omega (\sqrt{\log(n)/ n})$, by property \cref{def:good-common-lb} in \cref{def:good-graph}; while the second lemma assumes $p = O (\sqrt{\log(n)/ n})$ and is similar to \cref{lem:GG-case-small-AtVt}.

\begin{lemma}
  \label{lem:GG-case-small-AtVt-3c}
  For any $t\geq \frac a6 \ln n$, if $|V_t| \leq  83a \ln^2(n)/p$ and $\diam(G) \leq 2$ then 
  there is a constant $\epsilon > 0$ such that
  $
     \Exp[|V_{t + \frac76 a\log n + \log n}| ] 
     \leq \left(1-
     \epsilon/\ln^{3} n
     \right)
     \cdot |V_t|
     .
  $
\end{lemma}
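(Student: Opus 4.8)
The plan is to follow the proof of \cref{lem:GG-case-small-AtVt} through the construction of the sets $S,T,R$, and then replace the use of the refined $\theta$-lemmas (which are useless here, since $\theta_v(i)$ can be as large as $\Omega(n)$ when $p$ is close to $1$) by the switch-based \cref{lem:freq-black}. By property~\cref{def:good-degree}, the average degree of $G[V_t]$ is $k=\max\{8p|V_t|,\,4\ln n\}=O(\ln^2 n)$; let $T\subseteq V_t$ consist of the $\min\{\ln(n)/p,\,|V_t|/2\}$ vertices of largest degree in $G[V_t]$ and put $S=V_t\setminus T$, so $|S|\ge|T|$ and every $u\in S$ has $|N(u)\cap V_t|\le d:=k|V_t|/|T|=O(\ln^3 n)$. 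Property~\cref{def:good-ST} gives $|E(S,T)|\le 6|S|\ln n$, hence $R:=\{u\in S:|N(u)\cap T|\le 12\ln n\}$ satisfies $|R|\ge|S|/2\ge|V_t|/4$. As in \cref{lem:GG-case-small-AtVt}, it then suffices to show that every $u\in R$ has $\Pr[u\notin V_{t'}]\ge\epsilon\ln^{-3}n$ for $t'=t+\tfrac76 a\log n+\log n$ and some constant $\epsilon>0$, since summing this bound over $R$ yields $\Exp[|V_{t'}|]\le|V_t|-|R|\cdot\epsilon\ln^{-3}n\le(1-\epsilon'\ln^{-3}n)|V_t|$.

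Fix $u\in R$. If $u\in A_t$ then, since $|N(u)\cap A_t|\le|N(u)\cap V_t|\le d+12\ln n\le 2d$, we have $u\in A^{2d}_t$, and \cref{lem:active-vertex-st} already gives $\Pr[u\in I_{t+\lceil\log(2d+1)\rceil}]\ge(4ed)^{-1}=\Omega(\ln^{-3}n)$; if $u\in\GG_t$ then, by the logarithmic switch property~\cref{prop:switch-off-ub}, $u$ becomes stable or white within $a\ln n+1$ rounds, at which point we are back in one of the other cases with the clock advanced by at most $a\ln n+1$ rounds (this is where the $\tfrac76 a\log n$ term comes from). So assume $u$ is white and not active at round $t$. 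Let $W=\tfrac a6\ln n$ and let $J$ be the first round $j\in\{t+1,\dots,t+W\}$ at which \emph{either} $u$ is stable, \emph{or} $u$ is active, \emph{or} some $v\in N(u)\cap S$ is non-stable black (i.e.\ $v\in B_j\cap V_j$); set $J=\infty$ if no such round exists, and let $\mathcal{B}=\{J=\infty\}$. The key step is to prove $\Pr[\mathcal{B}]\le 288/a<1$. On $\mathcal{B}$, $u$ never turns black (a white vertex turns black only when it is active, which is excluded) and hence never turns gray, so $u$ is white in every round $j\in\{t+1,\dots,t+W\}$; being neither active nor stable, $u$ then has in each such round a non-stable black neighbor, which cannot lie outside $V_t$ (a vertex outside $V_t$ is stable, hence never non-stable black) nor in $S$ (excluded on $\mathcal{B}$), so it lies in $N(u)\cap T$. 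Therefore $\sum_{v\in N(u)\cap T}X_v\ge W$ on $\mathcal{B}$, where $X_v:=|\{j\in\{t,\dots,t+W\}:v\in B_j\cap V_j\}|$; since $\diam(G)\le2$ and $t\ge\tfrac a6\ln n$, \cref{lem:freq-black} gives $\Exp[X_v\mid B_t,W_t]\le 4$, so $\Exp\bigl[\sum_{v\in N(u)\cap T}X_v\bigr]\le 4\cdot 12\ln n=48\ln n$, and Markov's inequality yields $\Pr[\mathcal{B}]\le 48\ln n/W=288/a=288/512<1$.

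It remains to combine the two observations. Conditioning on $\{J=j\}$ and on the vertex states at the end of round $j$: if $u$ is stable at round $j$, then $u\notin V_{t'}$ deterministically; if some $v\in N(u)\cap S$ is non-stable black at round $j$, then $v\in A_j$ with $|N(v)\cap A_j|\le|N(v)\cap V_t|\le d$, so $v\in A^d_j$ and \cref{lem:active-vertex-st} gives $\Pr[v\in I_{j+\lceil\log(d+1)\rceil}]\ge(2ed)^{-1}$, whence $u\notin V_{t'}$ with at least that probability; and if $u$ is active at round $j$ (and neither previous case holds), then $u\in A^{2d}_j$ and \cref{lem:active-vertex-st} gives $\Pr[u\in I_{j+\lceil\log(2d+1)\rceil}]\ge(4ed)^{-1}$. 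In every case, using $j+\lceil\log(2d+1)\rceil\le t+W+O(\log\log n)\le t'$, we get $\Pr[u\notin V_{t'}\mid J=j]\ge(4ed)^{-1}$, so $\Pr[u\notin V_{t'}]\ge\Pr[\neg\mathcal{B}]\cdot(4ed)^{-1}=\Omega(\ln^{-3}n)$, completing the argument. I expect the only delicate point to be the bound on $\Pr[\mathcal{B}]$: it works precisely because $a=512$ is chosen so that $48<\tfrac a6$, i.e.\ the $\Theta(\log n)$-round ``cooldown'' that the logarithmic switch forces on every black vertex outweighs the $O(\log n)$ high-degree $T$-neighbors that could otherwise keep $u$ from becoming active or stable; with a smaller switch constant the argument would break.
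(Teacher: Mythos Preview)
Your proposal is correct and follows essentially the same route as the paper. Both arguments set up $S,T,R,k,d$ identically, reduce to the per-vertex bound $\Pr[u\notin V_{t'}]\ge\Omega(\ln^{-3}n)$ for $u\in R$, and then exploit \cref{lem:freq-black} together with Markov's inequality to show that the at most $12\ln n$ high-degree $T$-neighbours of $u$ cannot cover every round of a window of length $\tfrac a6\ln n$ with a non-stable black vertex; the paper phrases this via a pigeonhole step (with probability $\ge 1/5$ there is a round $j$ with $(N(u)\cap T)\cap(B_j\cap V_j)=\emptyset$, and then a trichotomy (i)/(ii)/(iii) at that round), whereas you phrase it contrapositively via the stopping time $J$ and the bound $\Pr[J=\infty]\le 288/a$ --- but the content is the same, including the handling of the $u\in\GG_t$ case by first waiting $\le a\ln n$ rounds.
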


\begin{proof}
    From property~\cref{def:good-degree} in \cref{def:good-graph}, the average degree of induced subgraph $G[V_t]$ is at most 
    $
        k = \max\{8p|V_t|,\, 4\ln n\} \leq 664a\ln^2 n.
    $
    Let $T$ be a subset of $V_t$ consisting of the  $\min\{\ln(n)/p,\, |V_t|/2\}$ vertices $u\in V_t$ with the largest degree in $G[V_t]$, and let $S = V_t\setminus T$.
    Then $|S|\geq |T|$, and all $u\in S$, $|N(u)\cap V_t|$ is at most
    \[
        d =  k |V_t|/|T| \leq k \cdot \max\{p|V_t|/\ln n,\, 2\}
        \leq
        55112\ln^3 n
        .
    \]
    From property \cref{def:good-ST} in \cref{def:good-graph}, 
    $|E(S,T)| \leq 6|S|\ln n$.
    Let $R =  \{u\in S\colon |N(u)\cap T| \leq 12\ln n \}$.
    Then $|R| \geq |S|/2 \geq |V_t|/4$.
    We will show that, for some constant $\epsilon' > 0$,
    \begin{equation}
        \label{eq:uRstable-3c}
        \Pr[u\notin V_{t+\frac76 a\ln n + \log n}] 
        \geq 
        \epsilon' \ln^{-3}n,\
        \text{ for all }
        u\in R.
    \end{equation}
    From this, it follows that
    $
        \Exp[|V_{t+\frac76 a\ln n + \log n}| ] 
        \leq 
        |V_t| - (|V_t|/4) \cdot \epsilon' \ln^{-3} n
        .
    $    
    To complete the proof of the lemma it remains to prove \cref{eq:uRstable-3c}.  
    
    Let $u\in R$, and suppose that $u\notin \GG_t$ (we deal with the case $u\in \GG_t$ at the end).
    From \cref{lem:freq-black}, the expected value of $\sum_{t\leq j\leq t+\frac a6\ln n}|(N(u)\cap T)\cap(B_j\cap V_j)|$, that is, the total number of times that vertices $v\in N(u)\cap T$ are active black between rounds $t$ and $\frac a6\ln n$, is at most $4\cdot |N(u)\cap T| \leq 4\cdot12\ln n$. 
    Then, by Markov's inequality, that number is at most $5\cdot 12\ln n$ with probability at least $1/5$. 
    And since $\frac a6 > 5\cdot 12$, it follows that, with probability at least $1/5$, there is some $j\in\{t,\ldots,t+\frac a6\ln n\}$ such that $(N(u)\cap T)\cap(B_{j}\cap V_{j}) = \emptyset$.
    
    Next we claim that, if  $(N(u)\cap T)\cap(B_{j}\cap V_{j}) = \emptyset$ for some $j\geq t$, then (i)~$u\notin V_{j}$, or (ii)~$u\in A_{j'}$ for some $t\leq j'< j$, or (iii)~$(N^+(u)\cap S)\cap A_{j} \neq \emptyset$. 
    Indeed, suppose that (i) and (ii) do not hold, i.e., $u\in V_{j}$ and $u\notin A_{t}\cup\dots\cup A_{j-1}$.
    From $u\in V_{j}$, it follows $N^+(u)\cap I_{j} = \emptyset$.
    From $u\notin A_{t}\cup\dots\cup A_{j-1}$ and the assumption $u\notin\GG_t$, it follows $u \in W_{j}$.
    Then, if $(N(u) \cap S)\cap B_{j} \neq \emptyset$, each vertex $v\in (N(u) \cap S)\cap B_{j}$ is in $A_{j}$; while if $(N(u)\cap S) \cap B_{j} =\emptyset$, then $N(u) \cap B_{j} =\emptyset$ and $u\in A_{j}$.  
    Therefore (iii) holds.

    From the above, it follows that with probability at least $1/5$, there is some $t\leq j\leq t+ \frac a6\ln n$ such that $u\notin V_{j}$ or $(N^+(u)\cap S)\cap A_{j} \neq \emptyset$.
    And if $v\in (N^+(u)\cap S)\cap A_{j}$, then $v\in A_j^d$, and from \cref{lem:active-vertex-st}, the probability that
    $v\in I_{j+\log n}$ is at least $1/(6ed)$.
    We conclude that
    \[
        \Pr[u\notin V_{t+\frac a6\ln n + \log n}]
        \geq 
        (1/5)
        \cdot
        1/(6ed)
        \geq
        (4.5\cdot10^6\ln^3 n)^{-1},
    \]
    which implies \cref{eq:uRstable-3c}.
 
    Finally, if $u\notin \GG_t$, we consider the first round $j>t$ such that $u\notin \GG_j$.
    From property \cref{prop:switch-off-ub}, 
    $j\leq t+ a\ln n$.
    Then we apply the result for the previous case to complete the proof of \cref{eq:uRstable-3c}.
\end{proof}


\begin{lemma}\label{lem:GG-case-small-AtVt-3c-small-p}
  If $|V_t| \leq  83a\ln^2(n)/p$ and $p \leq c \sqrt{\log(n)/ n}$ for some constant $c>0$, then there is a constant $\epsilon = \epsilon(c) > 0$ such that
  $
     \Exp[|V_{t + \log^{1.1} n}| ] 
     \leq \left(1-
     \epsilon/\ln^{3.9} n
     \right)
     \cdot |V_t|
     . 
  $
\end{lemma}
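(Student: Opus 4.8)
The statement is the small-$|V_t|$, small-$p$ analogue of \cref{lem:GG-case-small-AtVt} for the 3-color process, and the plan is to mimic that proof almost verbatim, making the few adjustments needed to accommodate gray vertices. First I would invoke property~\cref{def:good-degree} to bound the average degree of $G[V_t]$ by $k = \max\{8p|V_t|,\,4\ln n\} = O(\ln^{2}n)$ (using $|V_t|\le 83a\ln^2(n)/p$), then take $T\subseteq V_t$ to be the $\min\{\ln(n)/p,\,|V_t|/2\}$ highest-degree vertices and $S=V_t\setminus T$, so that $|S|\ge|T|$ and every $u\in S$ has $|N(u)\cap V_t|\le d = k|V_t|/|T| = O(\ln^{3}n)$. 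By property~\cref{def:good-ST}, $|E(S,T)|\le 6|S|\ln n$, so the set $R=\{u\in S\colon |N(u)\cap T|\le 12\ln n\}$ satisfies $|R|\ge|S|/2\ge|V_t|/4$. As before it suffices to show that each $u\in R$ has probability $\Omega(\ln^{-3.9}n)$ of becoming stable within $\log^{1.1}n$ rounds, and summing over $R$ gives the claimed expected-progress bound.

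**The per-vertex argument.** Fix $u\in R$; as in \cref{lem:GG-case-small-AtVt-3c} I would first dispose of the case $u\in\GG_t$ by waiting (at most $a\ln n$ rounds, by property~\cref{prop:switch-off-ub}) until $u$ leaves the gray state, so we may assume $u\notin\GG_t$. Set $r = \Theta(\log\log n)$ (specifically $r=\log(48\ln n)+O(1)$ as in \cref{lem:GG-case-small-AtVt}), and define $x=\Pr[N(u)\cap S\cap A\ne\emptyset]$ and $y=\Pr[N(u)\cap T\cap B\ne\emptyset]$ with $A=A_t\cup\dots\cup A_{t+r-2}$ and $B=B_{t+r-2}\cup B_{t+r-1}\cup B_{t+r}$. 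The three cases $x+y\le 1/2$, $x\ge 1/4$, $y\ge 1/4$ are handled exactly as in \cref{lem:GG-case-small-AtVt}: in the first two cases some $S$-neighbor (or $u$ itself) is $d$-active at some point in a window of $r+\log n$ rounds, and \cref{lem:active-vertex-st} with $d=O(\ln^3 n)$ gives probability $\Omega(\ln^{-3}n)$ of stabilization; note that in the 3-color process ``$N(u)\cap S\cap A=\emptyset$'' together with $u\notin\GG$ still forces $u$ into $A$ or $I$, since gray neighbors are treated as white. In the case $y\ge 1/4$ there is $v^\ast\in N(u)\cap T$ with $\Pr[v^\ast\in B]\ge(48\ln n)^{-1}$, and I would apply \cref{lem:active-to-pretty}, \cref{lem:non-active-to-pretty}, or \cref{lem:non-active-to-pretty2} (these carry over to the 3-color process, as noted after \cref{lem:gray-happy}) to conclude $v^\ast$ becomes $\lambda$-active for some $\lambda = O(\theta_{v^\ast}(\cdot) + \log\log n)$ with probability $r^{1-\alpha}\cdot\Omega((\ln n)^{-\alpha})$. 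Since $p\le c\sqrt{\log(n)/n}$, property~\cref{def:good-common} gives $\theta_v(i)\le i\cdot O(\log n)$, so $\lambda = O(\ln n\cdot\ln\ln n)$, and one more application of \cref{lem:active-vertex-st} gives $u$ stable within an additional $\log n$ rounds with probability $\Omega(\ln^{-\alpha-1}n\cdot(\ln\ln n)^{-\alpha}) = \Omega(\ln^{-3.9}n)$ (since $\alpha\le 2.41$, the exponent $\alpha+1\le 3.41<3.9$ absorbs the $(\ln\ln n)^{-\alpha}$ factor). The total round count is $a\ln n + r + \log n = O(\log n) \le \log^{1.1}n$.

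**Main obstacle.** The bookkeeping is routine given \cref{lem:GG-case-small-AtVt-3c}; the one genuine subtlety is that, unlike the $\diam(G)\le 2$ case of \cref{lem:GG-case-small-AtVt-3c}, here we cannot use \cref{lem:freq-black} (which needs $\diam(G)\le2$) to control how often a $T$-neighbor of $u$ is non-stable black. Instead, as in the original 2-state Lemma~\cref{lem:GG-case-small-AtVt}, we must go through the ``refined'' lemmas of \cref{sec:refined} — and the correctness of applying those to the 3-color process rests precisely on the remark after \cref{lem:gray-happy} that \cref{lem:key-lemma} and \cref{lem:non-active-to-pretty} (and hence \cref{lem:active-to-pretty}, \cref{lem:non-active-to-pretty2}) transfer with ``white'' replaced by ``not black'' wherever appropriate, together with \cref{lem:gray-happy} to bound the number of rounds $u$ spends black with $\ge d$ black neighbors. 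So the crux is verifying that the case analysis on $v^\ast$ (active vs.\ non-active, with the $p^\ast$ threshold) goes through unchanged, which it does because the only 3-color-specific phenomenon — a black vertex becoming gray rather than white — only helps, as a gray vertex is treated as white by its neighbors and \cref{lem:gray-happy} shows $u$ does not linger in the ``black with many black neighbors'' configuration.
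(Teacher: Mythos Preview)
Your overall decomposition ($S$, $T$, $R$, $k$, $d$) matches the paper's, and the handling of the case $u\in\GG_t$ and of the first two sub-cases ($x+y\le 1/2$ and $x\ge 1/4$) is essentially right. The genuine gap is in the third sub-case, $y\ge 1/4$, and it stems from one incorrect inference: you assert that \cref{lem:active-to-pretty} and \cref{lem:non-active-to-pretty2} ``transfer'' to the 3-color process because \cref{lem:key-lemma} and \cref{lem:non-active-to-pretty} do. The paper's remark after \cref{lem:gray-happy} says only the latter pair carry over (and only when the vertex in question is white, not gray). The former pair do \emph{not} carry over, and this is not a technicality. The proof of \cref{lem:active-to-pretty} defines $\tau=\min\{j>t: u\in W_j \text{ or } |N(u)\cap B_j|\le k\}$ and uses $\Pr[\tau>t+\ell]\le 2^{-\ell}$, which holds in the 2-state process because an active black vertex becomes \emph{white} with probability $1/2$ each round. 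In the 3-color process an active black vertex becomes \emph{gray}, and a gray vertex can remain gray for up to $a\ln n$ rounds (property~\cref{prop:switch-off-ub} gives only an upper bound on off-runs; there is no lower bound when $\diam(G)>2$, but no sub-logarithmic upper bound either). Hence the vertex $v^\ast\in N(u)\cap T$ with $\Pr[v^\ast\in B]\ge(48\ln n)^{-1}$ may be active black at round $t$, become gray at round $t+1$, and sit gray for the entire $r=\Theta(\log\log n)$ window you chose, rendering both \cref{lem:active-to-pretty} and \cref{lem:non-active-to-pretty2} inapplicable. The same issue arises if $v^\ast$ starts gray.

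The paper's proof avoids needing \cref{lem:active-to-pretty} and \cref{lem:non-active-to-pretty2} altogether. It takes a much longer window $r=12\ln n\cdot\ln^2\ln n$ and, for each $v\in N(u)\cap T$, defines $t_v$ as the first round at which $v$ is \emph{white} with at least $\ell=\ln n$ black neighbors (or stable). \cref{lem:gray-happy} then bounds, for each such $v$, the expected number of rounds before $t_v$ in which $v$ is black with $\ge\ell$ black neighbors by a constant, so with high probability all $v$ contribute at most $\ln^2\ln n$ such rounds each. The case split is on $\sum_v p_v$ where $p_v=\Pr[v\in B_{t_v+1}\cup\dots\cup B_{t+r}]$: if the sum is small, a round with no ``dangerous'' $T$-neighbor exists in the window and one argues as in \cref{lem:GG-case-small-AtVt-3c}; if the sum is large, some $v^\ast$ has $p_{v^\ast}\ge(24\ln n)^{-1}$ and one applies \emph{only} \cref{lem:non-active-to-pretty} at round $t_{v^\ast}$, where $v^\ast$ is white by construction and has $d\ge\ell=\ln n$ black neighbors so that $2^{-d}$ is negligible. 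This is precisely the mechanism \cref{lem:gray-happy} was stated for, and it is the missing idea in your argument.
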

\begin{proof}[Proof Sketch]
    We define the set $S,T,R$ and the degree thresholds $k,d$ as in the proof of \cref{lem:GG-case-small-AtVt-3c}, and we show
    \begin{equation}
        \label{eq:uRstable-3c-small-p}
        \Pr[u\notin V_{t+ \log^{1.1} n}] 
        \geq 
        \epsilon' \ln^{-3.9}n,
        \
        \text{ for all }
        u\in R,
    \end{equation}
    which implies the lemma.
    Next we prove \cref{eq:uRstable-3c-small-p}.  
    
    Let $u\in R$, and suppose that $u\notin \GG_t$ (we deal with case $u\in \GG_t$ at the end).
    For each $v\in N(u)\cap T$, let $t_v \geq t$ be the first round when either $v$ is white and has at least $\ell = \ln n$ black neighbors, or is stable; and let $x_v$ be the number of rounds $t \leq j \leq \min\{t_v, t + r\}$ at which $v$ is black and has at least $\ell$ black neighbors, where $r = 12\ln n\cdot \ln^2\ln n$.
    From  \cref{lem:gray-happy}, the probability that $x_v \leq \ln^2\ln n$ for all $v$ is at least 
    $1 - |N(u)\cap T|\cdot e^{-\Omega(\ln^2\ln n)} = 1 - e^{-\omega(\ln\ln n)}$. 
    For each $v\in N(u)\cap T$ let $p_v$ be the conditional probability that $v\in B_{t_v+1}\cup\dots\cup B_{t + r}$, given $B_t,W_t$.

    If $\sum_{v\in N(u)\cap T} p_v \leq 1/2$, then with probability at least $1/2 - e^{-\omega(\ln\ln n)} > 1/3$, the total number of rounds in which at least one $v\in N(u)\cap T$ is black and has at least $\ell$ black neighbors is at most 
    $|N(u)\cap T| \cdot \ln^2\ln n \leq 12\ln n\cdot \ln^2\ln n \leq r$, thus there is some $j\in\{t,\ldots,t+r\}$ such that no $v\in N(u)\cap T$ is black and has at least $\ell$ black neighbors.
    Then we can infer that with probability at least $1/3$ some vertex in $N^+(u)$ is stable black or is $d$-active at some round in $\{t,\ldots,t+r\}$, 
    in the same way as in the proof of \cref{lem:GG-case-small-AtVt-3c},  and then obtain \cref{eq:uRstable-3c-small-p} using \cref{lem:active-vertex-st}.  

    If $\sum_{v\in N(u)\cap T} p_v > 1/2$, then there is some $v^\ast \in N(u)\cap T$ such that $p_{v^\ast} \geq (2|N(u)\cap T|)^{-1} \geq (24\ln n)^{-1}$.
    We can then apply \cref{lem:non-active-to-pretty} to $v^\ast$ at round $t_{v^\ast}$ to show that the probability vertex $v^\ast$ is $z$-active at some round in $\{t,\ldots,t+r\}$, where $z = \theta_u\Big(\alpha\log\frac{4r}{p_{v^\ast}-2^{-\ell}}\Big) + \log \frac{4r}{p_{v^\ast}-2^{-\ell}} = O(\log n \cdot \log\log n)$, is at least $r^{1-\alpha}\cdot \left(\frac{p_{v^\ast}-2^{-\ell}}{2}\right)^\alpha = \Omega(\ln^{3.9} n)$, as $\alpha \leq 2.41$
    Again we obtain \cref{eq:uRstable-3c-small-p} using \cref{lem:active-vertex-st}.

    Finally, as before, if $u\notin \GG_t$, we consider the first round $j>t$ such that $u\notin \GG_j$, and apply the result for the previous case to complete the proof of \cref{eq:uRstable-3c-small-p}.
\end{proof}


We can now conclude the proof of \cref{lem:Good-2c}, as we did for \cref{lem:Good-2s}. 
From  \cref{lem:GG-case-large-At,lem:GG-case-large-Vt-3c,lem:GG-case-large-Vt-3c-Gamma,lem:GG-case-small-AtVt-3c,lem:GG-case-small-AtVt-3c-small-p}, we have that for any $t\geq \frac a6\ln n$, 
$\Exp[|V_{t+\log^{1.1} n}|] 
\leq \left(1-\epsilon/\ln^{3.9} n \right) \cdot \Exp[|V_t|].$
Iteratively applying this inequality, and using by Markov's inequality, we  obtain as before  
$
    \Pr[|V_{c'\ln^{6} n}|\geq 1]
    \leq 
    n^{-2},
$
for a large enough constant $c'> 0$.
This completes the proof of \cref{lem:Good-2c}.

\appendix 

\section*{APPENDIX}

\section{Omitted Proofs}

\subsection{Proof of \texorpdfstring{\cref{lem:neighbors2-st}}{Lemma Veighbors2}}

\label{sec:proof-neighbors2-st}

  We assume $k_1\leq k_2\leq\cdots\leq k_\ell$. 
  For $1\leq i\leq\ell$,
  let $r_i = \lceil\log(k_i+1)\rceil$, 
  let $\EE_{i}$ be the event that 
  $\phi_{t+1}(u_i)=\dots=\phi_{t+r_i}(u_i)=\black$,
  and let $\EE = \bigcup_i\EE_{i}$.
  Then
  \[
    \Pr[\EE] 
    = 
    1 - \prod_i\left(1-\frac1{2^{r_i}}\right)
    \geq
    1 - \prod_i\left(1-\frac{1}{2k_i}\right)
    \geq
    \left(1-e^{-\sum_i \frac1{2k_i}}\right)
    \geq 
    \left(1-e^{-1}\right)\cdot
    \min\left\{1,\, \sum_i \frac1{2k_i}\right\}
    .
  \]
  Suppose that $\EE$ occurs and let $j$ be the smallest index such that $\EE_{j}$ occurs, i.e., $\bar\EE_{1}\cap\cdots\cap\bar\EE_{j-1}\cap\EE_{j}$ occurs.
  If $g_j = |N(u_j)\cap\{u_1,\ldots,u_{j-1}\}|$,
  then the probability that none of the $k_j$ vertices $v\in N(u_j)\cap A_t$ satisfies $\phi_{t+1}(v)=\dots=\phi_{t+r_j}(v)=\black$ is 
  \[  
    (1-2^{-r_j})^{k_j - g_j}
    \geq
    (1-2^{-r_j})^{k_j}
    \geq
    e^{-1},
  \]
  similarly to \cref{eq:coins_2}.
  Combining this with the previous inequality we obtain that the probability that $u_i \in I_{t+r_i}$ for at least one vertex $u_i\in \{u_1,\ldots,u_\ell\}$ is at least 
  $
    e^{-1}\cdot \left(1-e^{-1}\right)\cdot \min\big\{1,\, \sum_i \frac1{2k_i}\big\}
    \geq 
    \frac15 \cdot \min\big\{1,\, \sum_i \frac1{2k_i}\big\}
    .
  $





\subsection{Proof of \texorpdfstring{\cref{lem:active-to-pretty}}{Lemma Act-to-kAct}} 

\label{sec:proof-of-active-to-pretty}

    Let $\BB$ be the event $u\in B_{t+\ell}\cup\dots\cup B_{t+r}$; then $\Pr[\BB] = b_r$. 
    Let 
    \[
        \tau = \min \{j > t \colon u \in W_j \text{ or } |N(u) \cap B_j|\leq k \}
    \]
    be the first round $j>t$ at the end of which $u$ is white or has at most $k$ black neighbors.
    We have $\Pr[\tau > t+ \ell] \leq 2^{\ell} \leq  b_r/4$, since $\tau > t + \ell$ implies $\phi_{t+1}(u)=\dots=\phi_{t+\ell}(u)=\black$.
    Thus $\Pr[\tau\leq t+\ell]\geq 1 - b_r/4$. 
    Let 
    \[
        x = \Pr[|N(u) \cap B_\tau| \leq k \mid \tau\leq t+\ell]
        .
    \]
    We distinguish two cases, $x\geq b_r/4$ and $x \leq b_r/4$.

    First suppose that $x \geq  b_r/4$. 
    For any given $j>t$, 
    \[
        \Pr[u\in A_j \mid \tau = j,\, |N(u) \cap B_\tau| \leq k] = 1/2.
    \]
    The reason is that $u\in B_{j-1}$ and $|N(u) \cap B_{j-1}| > k>0$ if $\tau=j>t+1$, and $u \in A_t = A_{j-1}$ if $\tau = j=t+1$.
    In either case $u\in A_{j-1}$, thus the state of $u$ at the end of round $j$ is chosen uniformly at random, independently of the remaining choices in round $j$.
    In particular, $u$ is black with probability $1/2$ when $0 < |N(u) \cap B_{j}|\leq k$, and is white with probability $1/2$ when $|N(u) \cap B_{j}| = 0$.
    It follows that
    \[
        \Pr[\{u\in A_\tau\} \cap \{|N(u) \cap B_\tau| \leq k\} \cap \{\tau\leq t+\ell\}]\geq (1/2)\cdot x \cdot (1-b_r/4)
        \geq 
        3 b_r/32.
    \]
    Since the event on the left side implies that $u$ is $k$-active at the end of round $\tau \leq t+\ell< t+r$, and $3 b_r/32$ is greater than the desired lower bound for $q_r$, the lemma holds in this case.

    Suppose now that $x \leq b_r/4$. 
    Then
    \[
        \Pr[\BB \cap \{|N(u) \cap B_\tau| > k\}\cap \{\tau\leq t+\ell\}]
        \geq
        \Pr[\BB]
        -
        \Pr[\tau > t+\ell]
        -
        x 
        \geq 
        b_r - b_r/4 - 
        b_r/4 
        = 
        b_r/2
        .
    \]
    If $\tau \leq t+\ell$ and $|N(u) \cap B_\tau| > k$ (and thus $u\in W_\tau$ by $\tau$'s definition), we define the following events:
    $\AA^k$ is the event that $u \in A^k_{\tau+1}\cup\dots \cup A^k_{t+r-1}$;
    $\AA$ is the event that $u \in A_{\tau+1}\cup\dots \cup A_{t+r-1}$; and
    $\XX$ is the event that the states of vertices at the end of round $\tau$ are such that the conditional probability of $\AA$, given these states and $\tau$, is at least $b_r/4$.
    
    If $\tau \leq t+\ell$ and $|N(u) \cap B_\tau| > k$, then event $\BB$ implies $\AA$, because vertex $u$, which is non-active white at the end of round $\tau$, cannot become black before becoming active first.
    Thus, from the last inequality above, it follows
    \[
        \Pr[\AA \cap \{|N(u) \cap B_\tau| > k\}\cap \{\tau\leq t+\ell\}]
        \geq
        b_r/2
        .
    \]
    Also
    \[
        \Pr[\AA \cap \XX \cap \{|N(u) \cap B_\tau| > k\}\cap \{\tau\leq t+\ell\}]
        \geq
        b_r/2
        -
        b_r/4
        =
        b_r/4
        .
    \]
    We can now apply \cref{lem:non-active-to-pretty},
    starting from  round $\tau\leq t+\ell$, using $d>k\geq \log(1/b_r) +3$ and $p_r\geq b_r/4$,
    to obtain
    \[
        \Pr[\AA^k \cap \XX \cap \{|N(u) \cap B_\tau| > k\}\cap \{\tau\leq t+\ell\}]
        \geq
        r^{1-\alpha} \cdot \left(\frac{b_r/4-2^{k}}{2}\right)^\alpha
        \geq
        r^{1-\alpha} \cdot \left(\frac{b_r/4-b_r/8}{2}\right)^\alpha
        .
    \]
    It follows that $q_{r} = \Pr[\AA^k]\geq r^{1-\alpha} \cdot \left(\frac{b_r/4-b_r/8}{2}\right)^\alpha$, which concludes the proof of this case.  

\subsection{Proof of \texorpdfstring{\cref{lem:non-active-to-pretty2}}{Lemma NAct-to-kAct2}}

\label{sec:proof-non-active-to-pretty2}

\begin{proof}
    We have $\Pr[u\in A_{t+1}] = 2^{-d}$ and $\Pr[u\in (A_{t+2}\cup\dots\cup A_{t+r-1})\setminus A_{t+1}] = p_r - 2^{-d}$.
    We also note that if $u\notin A_{t+1}$ then $u\in W_{t+1}$, and $u$ may become black in a subsequent round only after it becomes active. 
    It follows that
    \begin{align*}        
        \Pr[
        \{u\in (B_{t+\ell}\cup\dots\cup B_{t+r})\cap A_{t+1} 
        ]
        &=
        b_r
        -
        \Pr[
        u\in (B_{t+\ell}\cup\dots\cup B_{t+r})\setminus A_{t+1} 
        ]
        \\&
        \geq
        b_r
        -
        \Pr[u\in (A_{t+2}\cup\dots\cup A_{t+r-1})\setminus A_{t+1}]
        \\&
        \geq
        b_r - (p_r - 2^{-d})
        \\&
        \geq
        b_r/2
        .
    \end{align*}
    Let $\XX$ be the event that the 
    states of vertices
    at the end of round $t+1$ are such that the conditional probability of $u \in B_{t+\ell}\cup\dots\cup B_{t+r}$ 
    is at least $b_r/4$.
    Then 
    \begin{align*}        
        \Pr[
        \{u \in (B_{t+\ell}\cup\dots\cup B_{t+r})\cap A_{t+1}\} 
        \cap 
        \XX
        ]
        &\geq 
        b_r/2 
        - 
        \Pr[
        \{u \in (B_{t+\ell}\cup\cdots \cup B_{t+r})\cap A_{t+1}\} 
        \cap 
        \bar\XX
        ]
        \\&
        \geq 
        b_r/4
        .
    \end{align*}
    We can now apply \cref{lem:active-to-pretty}, starting from round $t+1$ and using $b_r/4$ in place of $b_r$, to obtain 
    \[
        \Pr[
        \{u \in (A^k_{t+1}\cup\dots\cup A^k_{t+r-1})\cap A_{t+1}\} 
        \cap 
        \XX
        ]
        \geq 
        r^{1-\alpha} \cdot \left({b_r}/{64}\right)^\alpha
        .
    \]
    This implies the lemma.
\end{proof}

\subsection{Proof of \texorpdfstring{\cref{lem:Gnp-good}}{Lemma Gnp-is-Good}}

\label{sec:proof-Gnp-good}

The proof of 
consists of a series of lemmas.
In all these lemmas, the graph $G=(V,E)$ considered is a random graph drawn from $G_{n,p}$.



Property \cref{def:good-degree} holds trivially for sets $S$ of size $k\leq 4\ln n$.
The next lemma (applied for all $k > 4\ln n$, and then combining the results using a union bound) shows that $G$ satisfies the property for all larger sets, with probability at least $1-n^{-\Omega(\log n)}$.

\begin{lemma}\label{lem:Gnp-subset-degrees}
  Let $G=(V,E)$ be a random graph drawn from $G_{n,p}$, and let $k\geq 1$. 
  With probability at least  $1-n^{-k}$, all subgraphs of $G$ on $k$ vertices have at most $\max\{4pk^2,\, 2k\ln n\}$ edges.
\end{lemma}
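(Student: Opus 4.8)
The plan is to fix a set $S\subseteq V$ of size $k$, bound the probability that $G[S]$ has more than $m := \max\{4pk^2,\,2k\ln n\}$ edges via a Chernoff/union bound, and then take a union bound over all $\binom{n}{k}$ choices of $S$. The number of potential edges inside $S$ is $\binom{k}{2}<k^2/2$, and each appears independently with probability $p$, so $|E(S)|$ is stochastically dominated by a binomial random variable $X\sim\mathrm{Bin}(k^2/2,\,p)$ with mean $\mu := pk^2/2$. First I would treat the two regimes of the max separately, according to whether $pk^2\ge \tfrac12 k\ln n$ or not.

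In the first regime, $m=4pk^2 = 8\mu \ge 2\mu$, so I would apply the standard upper-tail Chernoff bound $\Pr[X\ge (1+\delta)\mu]\le e^{-\delta^2\mu/3}$ with $1+\delta = 8$, or more simply the bound $\Pr[X\ge s]\le 2^{-s}$ valid for $s\ge 2e\mu$ (here $s=8\mu\ge 2e\mu$ fails slightly, so I would instead use $\Pr[X\ge s]\le (e\mu/s)^s$, which with $s=8\mu$ gives $(e/8)^{8\mu}\le e^{-\mu}$). Since in this regime $\mu = pk^2/2 \ge \tfrac14 k\ln n$, we get $\Pr[|E(S)|> m]\le e^{-k\ln n/4}= n^{-k/4}$. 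In the second regime, where $4pk^2 < 2k\ln n$, i.e. $\mu < \tfrac12 k\ln n$, the threshold is $m = 2k\ln n \ge 4\mu$ (using $\mu<\tfrac12 k\ln n$ one checks $m\ge 4\mu$), and moreover $m\ge 2k\ln n \ge$ a constant times $\max\{\mu, k\ln n\}$; applying $\Pr[X\ge s]\le (e\mu/s)^s$ with $s=2k\ln n$ and $e\mu/s \le e/4 < 1$ yields $\Pr[|E(S)|>m]\le (e/4)^{2k\ln n}\le n^{-2k\cdot\ln(4/e)}\le n^{-2k/4}=n^{-k/2}$. In either case the per-set failure probability is at most $n^{-k/4}$; actually by being a bit more careful with constants (e.g. choosing the Chernoff parameters to extract an extra factor, or noting $\binom nk\le n^k$ is absorbed) one gets the clean statement.

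To finish, I would union-bound over the $\binom{n}{k}\le n^k = e^{k\ln n}$ sets $S$ of size $k$: the probability that some $k$-set $S$ violates the bound is at most $n^k\cdot n^{-2k}= n^{-k}$, provided the per-set bound is at most $n^{-2k}$. So the real content is to make the per-set Chernoff estimate come out as $n^{-2k}$ rather than just $n^{-k/4}$; this is achieved by picking the constant $4$ (resp.\ $2$) in the definition of $m$ large enough relative to the constant $1/2$ in $\binom{k}{2}<k^2/2$ and the base of the Chernoff bound. Concretely: in regime one, $\Pr[X\ge 8\mu]\le (e/8)^{8\mu}$ and $8\mu = 4pk^2 \ge 2k\ln n$ when $pk^2\ge \tfrac12 k\ln n$ fails; rather, I would simply uniformly use $m\ge 2k\ln n$ which always holds, giving $\Pr[X\ge m]\le \big(\tfrac{e\mu}{m}\big)^{m}$, and since $m\ge 2k\ln n$ and $e\mu/m\le e\cdot(pk^2/2)/(4pk^2)=e/8<1/2$ in regime one (and $\le e/4<1/2$ in regime two after a similar check), we get $\Pr[X\ge m]\le 2^{-m}\le 2^{-2k\ln n}= n^{-2k\ln 2} \le n^{-k}$... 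The one mild obstacle, and the place to be careful, is exactly this bookkeeping of constants: ensuring that $2^{-2k\ln n}$ (or whatever the Chernoff bound delivers) beats $n^k$ with room to spare, i.e. that $2\ln 2 > 2$, which is false — so I would in fact need the stronger observation that $m\ge 2k\ln n$ together with $e\mu/m$ bounded strictly below $1/e$, giving $\Pr[X\ge m]\le e^{-m}=e^{-2k\ln n}=n^{-2k}$, which does beat $n^k$ comfortably. That is the whole argument; there is no serious difficulty beyond choosing the right form of the Chernoff bound (the multiplicative form $\Pr[X\ge s]\le (e\mu/s)^s$) and checking the two cases of the maximum.
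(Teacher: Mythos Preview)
Your proposal is correct and follows essentially the same approach as the paper: a union bound over the $\binom{n}{k}\le n^k$ vertex sets together with the binomial tail estimate $\Pr[X\ge m]\le (e\mu/m)^m$, where the key numerical fact is $e\mu/m\le e/8<1/e$ (equivalently, the paper's $\ln(8/e)>1$), which combined with $m\ge 2k\ln n$ yields a per-set bound of $n^{-2k}$. The paper's proof is exactly this computation in a single display, whereas your write-up arrives at the same endpoint after a few false starts on the constants.
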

\begin{proof}
  The probability there is a subgraph with $k$ vertices and at least $r = \max\{2k\ln n,\, 4pk^2\}$ edges is at most 
  \[
    \binom{n}{k}\cdot \binom{k^2/2}{r}\cdot p^r
    \leq
    n^{k} \cdot
    \left(\frac{ek^2}{2r}\right)^r \cdot p^r
    =
    e^{k\ln n - r\ln\frac{2r}{epk^2}}
    \leq 
    e^{k\ln n - 2k\ln n \cdot \ln\frac{8pk^2}{epk^2}}
    \leq 
    n^{- k}
    .\qedhere
  \]
\end{proof}

The next lemma shows that $G$ satisfies property~\cref{def:good-subset-neighbors} with probability $1-n^{-\Omega(\log n/p)}$

\begin{lemma}\label{lem:Gnp-subset-neighbors}
  Let $G=(V,E)$ be a random graph drawn from $G_{n,p}$, and let $k \geq 40\ln(n)/p$.
  With probability at least  $1-n^{- k}$, every set $S\subseteq V$ of size $|S| = k$ satisfies 
  \[
    |\{u\in V \colon |N(u)\cap S)| < pk/2 \}|
    \leq k/2
    .
  \]
\end{lemma}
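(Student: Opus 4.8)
The plan is to fix a set $S$ with $|S| = k$, call a vertex $u$ \emph{bad} (for this $S$) if $|N(u)\cap S| < pk/2$, and prove that the number of bad vertices exceeds $k/2$ with probability at most $n^{-2k}$; a union bound over the at most $\binom{n}{k}\le n^k$ choices of $S$ then gives the claimed failure probability $n^{-k}$. To control the number of bad vertices I would union over \emph{witnesses}: if more than $k/2$ vertices are bad then some $T\subseteq V$ with $|T| = m := \lfloor k/2\rfloor + 1$ is entirely bad. Split $T = T_\mathrm{o}\sqcup T_\mathrm{i}$, where $T_\mathrm{o} = T\setminus S$ and $T_\mathrm{i} = T\cap S$, and write $a = |T_\mathrm{o}|$, $b = |T_\mathrm{i}|$. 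The key observation is that ``$T_\mathrm{o}$ is bad'' is determined by the edges in $E(T_\mathrm{o},S)$, while ``$T_\mathrm{i}$ is bad'' is determined by the edges of $G[S]$ incident to $T_\mathrm{i}$; these two edge sets are disjoint, so the two events are independent and $\Pr[T\text{ all bad}]$ factors as a product.

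For $T_\mathrm{o}$, the degrees $|N(u)\cap S|$ with $u\in T_\mathrm{o}$ are independent $\mathrm{Bin}(k,p)$ variables, so a Chernoff bound gives $\Pr[|N(u)\cap S| < pk/2]\le e^{-pk/8}\le n^{-5}$ using $pk\ge 40\ln n$, whence $\Pr[T_\mathrm{o}\text{ all bad}]\le n^{-5a}$.

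The main obstacle is $T_\mathrm{i}$, precisely because the within-$S$ degrees of its vertices are \emph{not} independent: an edge of $G[S]$ with both endpoints in $T_\mathrm{i}$ is shared by two such degrees, so I cannot simply treat the vertex events as independent. My idea to circumvent this is to pass from vertices to edges. If every $u\in T_\mathrm{i}$ has $|N(u)\cap S| < pk/2$, then $Z := \sum_{u\in T_\mathrm{i}}|N(u)\cap S| < b\,pk/2$; and $Z$ is a sum of \emph{independent} edge-indicators, each edge inside $T_\mathrm{i}$ carrying weight $2$ and each edge of $E(T_\mathrm{i},S\setminus T_\mathrm{i})$ weight $1$, with mean $p\,b(k-1)$, which is essentially twice the threshold $b\,pk/2$. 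A Chernoff bound for this weighted sum then yields $\Pr[T_\mathrm{i}\text{ all bad}]\le\Pr[Z < b\,pk/2]\le e^{-\Omega(pbk)}\le n^{-4b}$. Multiplying the two factors gives $\Pr[T\text{ all bad}]\le n^{-5a-4b}\le n^{-4m}\le n^{-2k}$, and summing over the $\binom{n-k}{a}\binom{k}{b}$ witnesses of each shape and over $S$, the binomial factors are absorbed by the exponential savings.

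The one delicate point lies in this last Chernoff bound. Because the weight-$2$ internal edges inflate the variance, the off-the-shelf bound $e^{-\delta^2\mu/(2W)}$ with maximum weight $W=2$ loses a factor of two in the exponent and gives only about $n^{-2.5}$ per vertex of $T_\mathrm{i}$ in the worst case $b\approx k/2$, which is short of what is needed to beat the union over $S$ together with the $\binom{k}{b}\le 2^k$ witness count. I would instead optimize the exponential moment $\Exp[e^{-sZ}]$ directly: since at most a fraction $\tfrac12$ of the mean of $Z$ comes from weight-$2$ terms (as $b\le m\approx k/2$), minimizing over $s$ recovers a per-vertex saving of about $n^{-4.2}$ even when $b\approx k/2$, comfortably above the threshold. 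With the constant $40$ in the hypothesis (enlarged slightly if needed to swallow the $O(k)$ and $\binom{k}{b}\le 2^k$ factors), this produces the stated bound $1-n^{-k}$.
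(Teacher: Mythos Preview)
Your argument is correct in outline, but considerably more involved than the paper's. The paper restricts attention to vertices $u\in V\setminus S$ only: for such $u$ the degrees $|N(u)\cap S|$ are independent $\mathrm{Bin}(k,p)$ variables, so the single Chernoff bound $\Pr[|N(u)\cap S| < pk/2]\le e^{-pk/8}$ together with a union over $S$ and over witness sets of size $k/2$ contained in $V\setminus S$ gives
\[
\binom{n}{k}\binom{n-k}{k/2}\,e^{-(k/2)\cdot pk/8}\;\le\; n^{3k/2}e^{-pk^2/16}\;\le\; n^{-k},
\]
using $pk\ge 40\ln n$. This matches property~\cref{def:good-subset-neighbors}, which is stated for $u\in V\setminus S$ and is all the paper ever uses; the ``$u\in V$'' in the displayed lemma appears to be a slip, and the paper's own proof does not address vertices inside $S$.

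You took the displayed statement literally and set out to cover $u\in S$ as well, which is precisely what forces the $T_i$ branch and the weighted-sum Chernoff. That part of your argument is sound --- the optimized moment bound does recover roughly $n^{-4.2}$ per vertex of $T_i$ even in the worst case $b\approx k/2$ --- but, as you already note, it falls just short of $n^{-k}$ with the constant $40$ because of the residual $2^k$ and $O(k)$ factors from the union over $T_i\subseteq S$. Restricting to $u\in V\setminus S$ from the outset removes the $T_i$ branch entirely and delivers the stated bound with room to spare.
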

\begin{proof}
  For any set $S$ of size $k$, and any vertex $u\in V\setminus S$, the expected number of neighbors of $u$ in $S$ is $pk$.
  By a Chernoff bound, the probability that $u$ has fewer than $pk/2$ neighbors in $S$ is at most 
  $
    e^{-pk/8}.
  $
  Then the probability there is some set $S$ of size $k$ such that at least $k/2$ vertices $u\in V\setminus S$ have fewer than $pk/2$ neighbors in $S$, is at most
  \[
    \binom{n}{k}\cdot \binom{n-k}{k/2}\cdot e^{-(k/2)\cdot pk/8}
    \leq
    n^{k}
    \cdot
    n^{k/2}
    \cdot e^{-pk^2/16}
    =
    e^{(3/2)k\ln n - pk^2/16}
    \leq
    n^{- k}
    .
    \qedhere
  \]
\end{proof}

\begin{lemma}\label{lem:Gnp-neighborhood}
  Let $G=(V,E)$ be a random graph drawn from $G_{n,p}$, and let $k = 3\ln(n)/p$.
  With probability at least $1-n^{-k}$, every set $S\subseteq V$ of size $|S| \geq  k$ satisfies $|V\setminus N^+(S)| \leq k$.
\end{lemma}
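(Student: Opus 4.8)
The plan is a routine first-moment (union bound) argument. First I would reduce to sets of size exactly $k$: if $S\subseteq S'$ then $N^+(S)\subseteq N^+(S')$, hence $|V\setminus N^+(S')|\le|V\setminus N^+(S)|$, so it suffices to prove that with probability at least $1-n^{-k}$ every $S$ with $|S|=k$ (rounding up to $\lceil k\rceil$ when $k$ is not an integer, which costs nothing) satisfies $|V\setminus N^+(S)|\le k$. Next, observe that for a fixed such $S$ the event $|V\setminus N^+(S)|>k$ implies the existence of a set $U$ of $k$ vertices, disjoint from $S$, none of which has a neighbour in $S$ — that is, two disjoint sets $S,U$ of size $k$ with no edge of $G$ joining them. (If one prefers an exact integer argument, take $|U|=\lfloor k\rfloor+1$; the discussion below still goes through with room to spare.)

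For a fixed disjoint pair $(S,U)$ with $|S|=|U|=k$, the $k^2$ potential edges between $S$ and $U$ are independent and present with probability $p$, so $\Pr[E(S,U)=\emptyset]=(1-p)^{k^2}\le e^{-pk^2}$. The key numeric identity is that, for $k=3\ln(n)/p$,
\[
    pk^2 = p\cdot\frac{9\ln^2 n}{p^2} = \frac{9\ln^2 n}{p} = 3k\ln n,
    \qquad\text{so}\qquad
    e^{-pk^2}=n^{-3k}.
\]
A union bound over the at most $\binom{n}{k}^2\le n^{2k}$ choices of the pair $(S,U)$ then yields
\[
    \Pr\bigl[\exists\,S\subseteq V,\ |S|=k,\ |V\setminus N^+(S)|>k\bigr]
    \ \le\
    n^{2k}\cdot n^{-3k}
    \ =\
    n^{-k},
\]
which is exactly the claimed probability bound. (In the regime where the event is even possible we have $n\ge 2k$, so disjoint pairs of size $k$ exist; when $k>n/2$ the statement is vacuously true since there are fewer than $k$ vertices outside $S$.)

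There is essentially no obstacle here; the only point needing a little attention is integrality. If one works with $\lceil k\rceil$ in the counting, use $p\lceil k\rceil^2\ge p k\lceil k\rceil=3\lceil k\rceil\ln n$ so that $(1-p)^{\lceil k\rceil^2}\le n^{-3\lceil k\rceil}$ while $\binom{n}{\lceil k\rceil}^2\le n^{2\lceil k\rceil}$, giving a final bound of $n^{-\lceil k\rceil}\le n^{-k}$; if instead one picks $U$ of size $\lfloor k\rfloor+1$, the extra factor $(1-p)^{k}\le e^{-pk}=n^{-3}$ absorbs the extra $\binom{n}{\lfloor k\rfloor+1}\le n^{k+1}$ term. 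Either way the slack is comfortable, so I do not expect any difficulty.
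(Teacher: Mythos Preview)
Your proof is correct and follows essentially the same argument as the paper: a union bound over pairs of disjoint $k$-sets $S,U$ with no edges between them, bounding $\binom{n}{k}^2(1-p)^{k^2}\le n^{2k}e^{-pk^2}=n^{-k}$ using $pk^2=3k\ln n$. You are slightly more careful than the paper in spelling out the monotonicity reduction to $|S|=k$ and the integrality issues, but the core is identical.
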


\begin{proof}
  The probability there is a set $S$ of size $k$ with $|V\setminus N^+(S)|  \geq k$ is at most
  \[
    \binom{n}{k}\cdot \binom{n-k}{k}\cdot (1-p)^{k^2}
    \leq
    n^k
    \cdot
    n^k
    \cdot e^{-pk^2}
    =
    e^{2k\ln n -pk^2}
    =
    n^{- k}
    .
    \qedhere
  \]
\end{proof}

The next lemma shows that $G$ satisfies property~\cref{def:good-STI} with probability $1-n^{-\Omega(\log n/p)}$.

\begin{lemma}\label{lem:Gnp-STI}
  Let $G=(V,E)$ be a random graph drawn from $G_{n,p}$.
  With probability at least  $1 - n^{-\ln(n)/p}$,
  every triplet of disjoint sets $S,T,I \subseteq V$, such that $|S|\geq 2|T|$ and $(S\cup T) \cap N(I)= \emptyset$, satisfies
  \begin{equation}
    \label{eq:Gnp-STI}
    |N(T) \setminus N^+(S\cup I)| - |N(S)\setminus N^+(I))| 
    \leq 
    8\ln^2(n)/p 
    .  
  \end{equation}
\end{lemma}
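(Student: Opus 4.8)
The plan is as follows. Since $|N(T)\setminus N^+(S\cup I)|\le n$ and $|N(S)\setminus N^+(I)|\ge0$, inequality \cref{eq:Gnp-STI} is trivially true whenever $8\ln^2(n)/p\ge n$, so I assume from now on that $p>8\ln^2(n)/n$ and write $\delta:=8\ln^2(n)/p<n$. I also assume $|T|\ge1$ (otherwise the left side of \cref{eq:Gnp-STI} is at most $0$) and hence $|S|\ge2$. The goal is to union-bound, over triples of disjoint sets $(S,T,I)$ with $|S|\ge2|T|$, the probability that both $(S\cup T)\cap N(I)=\emptyset$ and \cref{eq:Gnp-STI} fails; the a priori exponential number of admissible $I$ will be absorbed by the factor $\Pr[(S\cup T)\cap N(I)=\emptyset]=(1-p)^{(|S|+|T|)|I|}$. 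Throughout, ``$w\sim X$'' abbreviates ``$w$ has a neighbour in $X$''.

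\emph{Reduction to a two-set estimate.} Fix disjoint $S,T,I$ with $|S|\ge2|T|\ge2$, set $C:=\{(S\cup T)\cap N(I)=\emptyset\}$ and $U:=V\setminus(S\cup T\cup I)$. Unwinding the definitions gives
\[
  |N(T)\setminus N^+(S\cup I)|=\big|\{w\in U:w\sim T,\ w\not\sim S,\ w\not\sim I\}\big|,
\]
\[
  |N(S)\setminus N^+(I)|\ge\big|\{w\in U:w\sim S,\ w\not\sim I\}\big| .
\]
The event $C$ concerns only the edges between $S\cup T$ and $I$, so conditionally on $C$ and on the random set $U^\ast:=\{w\in U:w\not\sim I\}$ the indicators $\mathbf 1[w\sim S]$ and $\mathbf 1[w\sim T]$ (for $w\in U^\ast$) are mutually independent, with $\Pr[w\sim S]=\alpha:=1-(1-p)^{|S|}$ and $\Pr[w\sim T]=\beta:=1-(1-p)^{|T|}$. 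Hence it suffices to bound, uniformly over $(S,T)$ and over the set $U^\ast$ (of size $\le n$),
\[
  q:=\Pr\Big[\,\big|\{w\in U^\ast:w\sim T,\ w\not\sim S\}\big|-\big|\{w\in U^\ast:w\sim S\}\big|>\delta\,\Big].
\]
The difference is $\sum_{w\in U^\ast}Z_w$ with i.i.d.\ $Z_w$ taking values $1,-1,0$ with probabilities $(1-\alpha)\beta$, $\alpha$, $(1-\alpha)(1-\beta)$; the hypothesis $|S|\ge2|T|$ gives $\alpha\ge2\beta-\beta^2$, from which $\Exp[2^{Z_w}]=(1-\alpha)(1+\beta)+\alpha/2\le1$, so $q\le\Exp\!\big[2^{\sum Z_w}\big]\cdot2^{-\delta}\le2^{-\delta}$. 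Moreover the first count above is stochastically at most $\mathrm{Bin}\big(n,\beta(1-p)^{|S|+|I|}\big)$ while the second is nonnegative, so also $q\le\big(en\,\beta(1-p)^{|S|}/\delta\big)^{\delta}(1-p)^{|I|\delta}$.

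\emph{Large $S$, then the union over $I$.} If $|S|\ge3\ln(n)/p$ then $N(T)\setminus N^+(S\cup I)\subseteq V\setminus N^+(S)$, and on the single good event of \cref{lem:Gnp-neighborhood} (which has probability $1-n^{-3\ln(n)/p}$ and needs no further union bound) this has size $\le3\ln(n)/p\le\delta$, simultaneously for all $T,I$. For $|S|<3\ln(n)/p$ (hence $|T|<1.5\ln(n)/p$, and at most $n^{4.5\ln(n)/p}$ pairs $(S,T)$) I union-bound over $I$: for fixed $(S,T)$,
\[
  \Pr[\exists I:\ C\text{ holds and \cref{eq:Gnp-STI} fails}]\ \le\ \sum_I(1-p)^{(|S|+|T|)|I|}\,q .
\]
If the parameter $\rho:=en\,\beta(1-p)^{|S|}/\delta$ is $\le1/2$, then plugging in $q\le2^{-\delta}(1-p)^{|I|\delta}$ makes the right-hand side at most $2^{-\delta}\big(1+(1-p)^{\delta}\big)^{n}\le e\,2^{-\delta}$, and after the union over $(S,T)$ the contribution is $n^{-\Omega(\ln(n)/p)}$. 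If $\rho>1/2$, then $(1-p)^{|S|}>\delta/(2en)$ forces $|S|<(1+\ln n)/p$, leaving only $n^{O(\ln(n)/p)}$ pairs $(S,T)$; for these one again uses $q\le2^{-\delta}$ together with the weight $(1-p)^{(|S|+|T|)|I|}$, splitting the $I$-sum at $|I|$ of order $\ln(n)/(p(|S|+|T|))$, so that the tail contributes only a constant factor while the head has $n^{O(\ln(n)/p)}$ terms. Altogether the failure probability is $n^{-\Omega(\ln(n)/p)}$, which is at most $n^{-\ln(n)/p}$ thanks to the constant $8$ in $\delta$.

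\emph{Main obstacle.} The hard part is exactly this union over $I$. For a small set $S\cup T$ there are exponentially many admissible $I$ (the constraint $(S\cup T)\cap N(I)=\emptyset$ is only mildly unlikely when $p$ is small), so a uniform $2^{-\delta}$ estimate per $I$ is not affordable, and the accounting must balance the three competing effects: a larger $I$ shrinks $U^\ast$ and hence the positive count, but also carries the exponentially small weight $(1-p)^{(|S|+|T|)|I|}$, while $\beta$ governs the Poisson-tail bound on $q$. It is essential to have first removed the large-$S$ case through \cref{lem:Gnp-neighborhood} and the range $p\le8\ln^2(n)/n$ as vacuous, since those are precisely the regimes where the $I$-sum would otherwise be uncontrollable.
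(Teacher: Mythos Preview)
Your per-triple deviation bound $q\le 2^{-\delta}$ is correct and is exactly the paper's estimate: the paper phrases it as a biased random walk with $p_2/p_1\ge 2$, which is equivalent to your moment-generating-function inequality $\Exp[2^{Z_w}]=(1-\alpha)(1+\beta)+\alpha/2\le 1$. You also correctly dispose of large $|S|$ via \cref{lem:Gnp-neighborhood}.

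The genuine gap is the union over $I$. You never restrict $|I|$; instead you try to absorb all $I$ through the weight $\Pr[C]=(1-p)^{(|S|+|T|)|I|}$. That fails in your branch $\rho>1/2$ whenever $|S|+|T|$ is small. Concretely, take $|S|=2$, $|T|=1$, and $p$ of order $\ln(n)/\sqrt n$; then $\rho=enp^2(1-p)^2/(8\ln^2 n)=\Theta(1)$, so you are in that branch, while $(1-p)^{|S|+|T|}=(1-p)^3$ is essentially $1$ and hence
\[
\sum_{I}(1-p)^{3|I|}=(1+(1-p)^3)^{\,n-3}\approx 2^{\,n},
\qquad
2^{-\delta}=2^{-\Theta(\sqrt n\,\ln n)}.
\]
Your claim that ``the tail contributes only a constant factor'' is false here: splitting at $i_0=\Theta(\ln(n)/(p(|S|+|T|)))$ does not make $\sum_{i>i_0}\binom{n}{i}(1-p)^{3i}$ bounded, and your alternative bound $q\le\rho^{\delta}(1-p)^{|I|\delta}$ is useless once $\rho>1$.

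The paper sidesteps the issue entirely by applying \cref{lem:Gnp-neighborhood} to $S\cup I$ rather than just to $S$: since $N(T)\setminus N^+(S\cup I)\subseteq V\setminus N^+(S\cup I)$, the single good event of that lemma already gives the first term $\le 3\ln(n)/p\le\delta$ whenever $|S\cup I|\ge 3\ln(n)/p$. One may therefore assume $|S\cup I|<3\ln(n)/p$, which together with $|T|\le|S|/2$ yields $|S\cup T\cup I|<4.5\ln(n)/p$; there are then at most $n^{4.5\ln(n)/p}$ triples, and a straight union bound against $2^{-\delta}=2^{-8\ln^2(n)/p}$ finishes (since $8\ln 2>4.5$). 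Your argument is repaired the same way: in your large-set case, replace ``$|S|\ge 3\ln(n)/p$'' by ``$|S\cup I|\ge 3\ln(n)/p$'', which makes the whole $\rho>1/2$ discussion unnecessary.
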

\begin{proof}
  From \cref{lem:Gnp-neighborhood}, with probability at least $1-n^{-3\ln(n)/p}$, all sets $S,I\subseteq V$ such that $|S\cup I| \geq 3\ln(n)/p$ satisfy
  $
    |V\setminus N^+(S\cup I)| \leq 3\ln(n)/p,
  $ 
  and thus
  \[
    |N(T) \setminus N^+(S\cup I)|
    \leq 
    |V \setminus N^+(S\cup I)|
    \leq 
    3\ln(n)/p
    ,
  \]
  which implies \cref{eq:Gnp-STI}.
  
  Next we assume that $|S\cup I| \leq 3\ln(n)/p$.
  Since $|S| \geq 2|S|$, we have
  $|S\cup T \cup I| \leq 4.5\ln(n)/p$, thus there are at most $n^{4.5\ln(n)/p}$ different triplets $S,T,I$.
  Choose one such triplet $S,T,I$, before revealing the edges of $G$. 
  Then reveal the edges incident to vertices $u\in I$; this determines $N(I)$.
  Let $U = V \setminus (S\cup T \cup N^+(I))$.
  The two sets on the left side of \cref{eq:Gnp-STI} can then be expressed as $N(T) \setminus N^+(S\cup I) = U \cap N(T) \setminus N(S)$, and $N(S) \setminus N^+(I) = U \cap N(S)$.
  For every $u\in U$, the probability that $u\in N(T)\setminus N(S)$ is 
  \[
    p_1 
    = 
    \Pr[u \in N(T) \setminus N(S)]
    =
    \big(1 - (1-p)^{|T|}\big)\cdot (1-p)^{|S|}
    ,
  \]
  and the probability that $u\in N(S)$ is
  \[
    p_2 
    = 
    \Pr[u\in N(S)]
    =
    1 - (1-p)^{|S|}
    .
  \]
  Letting $\varepsilon = (1-p)^{|T|}$ and using that $|S|\geq 2|T|$, we obtain $p_1 \leq (1-\varepsilon) \cdot \varepsilon^2$ and $p_2 \geq 1 - \varepsilon^2$.
  Thus
  \[
    \frac{p_2}{p_1}
    \geq
    \frac{1 - \varepsilon^2}{(1-\varepsilon)\cdot \varepsilon^2} 
    =
    \frac{1 + \varepsilon}{\varepsilon^2}
    \geq
    2
    .
  \]
  It follows that, by considering all vertices $u\in U$ one after the other, and revealing all edges incident to each $u$ at the moment $u$ is considered, we can analyze the difference $$D = |U \cap N(T) \setminus N(S)| - |U \cap N(S)| = |N(T) \setminus N^+(S\cup I)| - |N(S)\setminus N^+(I)|$$ as a biased random walk on the integers starting at $0$, and moving to the right with probability $p_1$ and to the left with probability $p_2$.
  The probability that the (infinite) random walk every reaches value $i\geq 1$ is know to be $\left({p_1}/{p_2}\right)^i\leq 2^{-i}$.
  Thus, 
  $
    \Pr[D \geq i] \leq 2^{-i}
    .
  $
  And the probability that $D \geq 
  {8\ln^2(n)}/{p}$ for at least one possible triplet $S,T,I$ is then at most
  \[
    n^{4.5\ln(n)/p}\cdot 2^{-{8\ln^2(n)}/{p}}
    \leq
        n^{-{\ln n}/{p}}
        .
        \qedhere
  \]
\end{proof}

Property \cref{def:good-ST} holds trivially for sets $S$ of size $k \leq 6\ln n$, since $|S|\geq |T|$.
The next lemma (applied for all $k > 6\ln n$) shows that $G$ satisfies the property with probability at least $1-n^{-\Omega(\log n)}$ for all larger sets.

\begin{lemma}\label{lem:Gnp-ST}
  Let $G=(V,E)$ be a random graph drawn from $G_{n,p}$, and let $k\geq 1$.
  With probability at least  $1 - n^{-2k}$,
  every pair of disjoint sets $S,T \subseteq V$, such that $|S| = k \geq |T|$ and $|T| \leq \ln(n)/p$, satisfies
  $|E(S,T)| \leq 6k\ln n$.
\end{lemma}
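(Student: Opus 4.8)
The plan is a routine union-bound-over-Chernoff argument, with a small reduction to keep the counting tight. First I would fix disjoint sets $S,T\subseteq V$ with $|S|=k$ and $|T|=j$, where $0\le j\le\min\{k,\ln(n)/p\}$; the case $j=0$ is trivial since then $|E(S,T)|=0$. The edges between $S$ and $T$ are among the $kj$ independent potential edges of $G_{n,p}$, so $|E(S,T)|\sim\mathrm{Bin}(kj,p)$ with mean $\mu=kjp\le k\ln n$, where the bound uses $j\le\ln(n)/p$. To avoid an extra factor from summing over all admissible sizes $j$, I would first reduce to a single size: for fixed $S$, the quantity $|E(S,T)|=\sum_{v\in T}|N(v)\cap S|$ is nondecreasing in $T$, so it suffices to prove $|E(S,T)|\le 6k\ln n$ for all $T\subseteq V\setminus S$ of the maximum admissible size $j^\ast:=\min\{k,\lfloor\ln(n)/p\rfloor,n-k\}$; any smaller admissible $T$ can be enlarged inside $V\setminus S$ to such a $T'$ without decreasing the edge count.

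Second, for a fixed such pair $(S,T)$ I would invoke the standard Binomial upper-tail bound $\Pr[\mathrm{Bin}(m,p)\ge t]\le\binom{m}{t}p^t\le(e\mu/t)^t$ with $\mu=mp$. Taking $m=kj^\ast$ and $t=6k\ln n$, and using $\mu=kj^\ast p\le k\ln n=t/6$, this gives
\[
\Pr[|E(S,T)|\ge 6k\ln n]\le(e/6)^{6k\ln n}\le 2^{-6k\ln n}=n^{-6k\ln 2},
\]
where the last inequality uses $e/6<1/2$. The constant $6$ in the statement enters precisely here, chosen so that $6\ln 2>4$ (equivalently $\ln 2>2/3$), which is exactly what is needed for the tail bound $n^{-6k\ln 2}$ to beat the number of pairs in the union bound below.

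Finally I would union bound. There are at most $\binom{n}{k}\le n^k$ choices of $S$, and for each at most $\binom{n-k}{j^\ast}\le n^{j^\ast}\le n^k$ choices of $T$ of size $j^\ast$ (using $j^\ast\le k$), hence at most $n^{2k}$ pairs in total. Therefore the probability that some admissible pair violates $|E(S,T)|\le 6k\ln n$ is at most $n^{2k}\cdot n^{-6k\ln 2}=n^{-k(6\ln 2-2)}\le n^{-2k}$, since $6\ln 2-2=2.15\ldots>2$. This yields the claimed bound. I do not foresee any real obstacle: the only points requiring care are keeping the union bound down to $n^{2k}$ (which is why I first reduce to the single size $j^\ast$ rather than summing over all $j$, which would cost an extra factor of $k$ and could break the bound for very small $k$) and verifying that the constant $6$ makes $6\ln 2>4$, so that the Chernoff saving dominates.
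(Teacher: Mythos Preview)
Your argument is correct and follows essentially the same Chernoff-plus-union-bound route as the paper: bound $\Pr[|E(S,T)|\ge 6k\ln n]$ by $2^{-6k\ln n}$ using $\mu\le k\ln n$, then union bound over at most $n^{2k}$ pairs to get $n^{-2k}$. The only difference is cosmetic: the paper simply writes the number of pairs as $n^{|S|}\cdot n^{|T|}\le n^{2k}$ without worrying about summing over $|T|$, whereas you add the monotonicity reduction to a single size $j^\ast$ to make that count exact; this extra care is harmless (and indeed unnecessary in context, since the lemma is only invoked for $k>6\ln n$).
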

\begin{proof}
  For any given pair $S,T$, the expected value of $|E(S,T)|$ is $p\cdot|S|\cdot|T| \leq k\ln n$, and by a Chernoff bound, the probability that $|E(S,T)| \geq 6k\ln n$ is at most $2^{-6k\ln n}$.
  Then the probability there is at least one pair $S,T$ such that $|E(S,T)| \geq 6k\ln n$ is at most
  \[
    n^{|S|}\cdot n^{|T|}\cdot 2^{-6k\ln n}
    \leq
    n^{2k}\cdot 2^{-6k\ln n}
    \leq
    n^{-2k}
    .
    \qedhere
  \]
\end{proof}

Our last lemma implies that properties \cref{def:good-common,def:good-common-lb} hold with probability $1-O(n^{-2})$.

\begin{lemma}\label{lem:Gnp-common-neighbors}
  In a random graph $G$ drawn from $G_{n,p}$, the probability that no two vertices have $k$ common neighbors is at least 
  $1 - n^2\cdot ({ep^2n}/{k})^k$.
  And the probability that $\diam(G) \leq 2$ is at least $1-n^2\cdot e^{-p^2(n-1)}$.
\end{lemma}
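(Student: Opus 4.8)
The plan is to prove both claims by a first-moment argument over the $\binom n2 \le n^2$ unordered pairs of vertices, reducing each to a per-pair probability bound and then applying a union bound.

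For the first claim, fix distinct vertices $u,v$. A vertex $w\notin\{u,v\}$ is a common neighbor of $u$ and $v$ exactly when both edges $uw$ and $vw$ are present, which happens with probability $p^2$, independently over the $n-2$ choices of $w$. If $u$ and $v$ have at least $k$ common neighbors, then some $k$-subset $W\subseteq V\setminus\{u,v\}$ consists entirely of common neighbors, an event of probability $p^{2k}$. Union-bounding over the at most $\binom nk\le (en/k)^k$ such subsets gives $\Pr[\,u,v\text{ have }\ge k\text{ common neighbors}\,]\le (en/k)^k\,p^{2k}=(ep^2n/k)^k$. A second union bound over the $\le n^2$ pairs yields the bound $n^2(ep^2n/k)^k$ on the probability that some pair has $\ge k$ common neighbors, hence the stated lower bound on the complementary event.

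For the second claim, observe that for $n\ge 2$ we have $\diam(G)>2$ if and only if some pair of distinct vertices $u,v$ is simultaneously non-adjacent and has no common neighbor. For a fixed such pair this probability is $(1-p)(1-p^2)^{n-2}$, since the event ``$uv\notin E$'' and the $n-2$ events ``$w$ is not a common neighbor of $u$ and $v$'' are mutually independent (they involve disjoint edge slots). Using $1-p\le 1-p^2$ for $p\in[0,1]$, this is at most $(1-p^2)^{n-1}\le e^{-p^2(n-1)}$, and a union bound over the $\le n^2$ pairs gives $\Pr[\diam(G)>2]\le n^2 e^{-p^2(n-1)}$.

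There is no serious obstacle; the argument is a routine union bound. The only points needing a little care are the binomial estimate $\binom nk\le (en/k)^k$ so that the bound comes out in the clean form $(ep^2n/k)^k$, and the bookkeeping $n-2\mapsto n-1$ in the diameter estimate, obtained via $1-p\le 1-p^2$, so that the exponent matches the claimed $p^2(n-1)$; one should also note that the diameter characterization used already incorporates connectivity (a graph in which every pair is at distance $\le 2$ is automatically connected).
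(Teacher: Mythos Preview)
Your proof is correct and follows essentially the same approach as the paper: a union bound over pairs of vertices, with the per-pair common-neighbor tail bounded via $\binom{n}{k}p^{2k}\le(ep^2n/k)^k$, and the diameter bound obtained from $(1-p)(1-p^2)^{n-2}\le e^{-p^2(n-1)}$. Your write-up is in fact more explicit than the paper's (which simply states the two product bounds), particularly in justifying the $n-1$ exponent via $1-p\le 1-p^2$.
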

\begin{proof}
  The probability there is a pair of vertices that have at least $k$ common neighbors is at most
  $
    \binom{n}{2}\cdot \binom{n-2}{k}\cdot p^{2k}
    \leq
    n^2\cdot \left(\frac{ep^2n}{k}\right)^k.
  $
  And the probability there is a pair of vertices with no common neighbors and no adjacent to each other is
  $\binom{n}{2}\cdot 
  (1-p)\cdot (1-p^2)^{n-2}
  \leq
  n^2\cdot e^{-p^2(n-1)}$.
\end{proof}






\section{Other Related Work}
\label{sec:related-work}

In 1985, Luby \cite{Luby86} proposed a simple distributed randomized algorithm that finds an MIS in time $O(\log n)$ w.h.p. 
Simultaneously, Alon et al.\ \cite{AlonBI86} proposed a similar algorithm with the same performance. 
Both algorithms work with $O(\log n)$-bit messages and need access to $O(\log n)$ random bits at each round. 



Due to various applications in radio sensor networks, restricted distributed models of communication were introduced, in which the MIS problem has been widely studied. 
In the beeping model, introduced  by Cornejo and Kuhn \cite{CornejoK10}, nodes have no knowledge of the local or global structure of the network, do not have access to synchronized clocks and the communication among nodes relies completely on carrier sensing (as described in the introduction). 
Afek et al.~\cite{AfekABCHK13} show that
in the version of the beeping model where nodes are initially asleep and are woken up by an adversary, it is not possible to locally converge to an MIS in sub-polynomial time. 
Therefore, they consider various relaxations on the model, providing algorithms converging to an MIS in a polylogarithmic number of rounds. 
In detail, if the nodes know an upper bound on the size of the network, or if the beeping nodes are awakened by the neighbor's beep, the MIS can be found in time $O(\log^3 n)$ w.h.p. 
If the nodes have synchronous clocks, an MIS can be found in time $O(\log^2 n)$ w.h.p. 
We remark that the authors provide a self-stabilizing algorithm just in the first setting, i.e. when an upper bound on the size of the network is known by the nodes and that. 
In all algorithms, the nodes have super-constnt state and have access to a super-constant number of random bits.

In the version of the beeping model with
synchronized clocks, collision detection, and simultaneous wakeup, 
Afek et al.\ \cite{AfekABHBB11} had earlier shown that the MIS problem is solved by a biological process in time $O(\log^2 n)$ w.h.p.
\cite{AfekABCHK13} showed that this bound is also achievable without knowledge of an upper bound on the size of the network. 
Jeavons et al.\ \cite{JeavonsS016} improved these results, showing that an MIS can be found in time $O(\log n)$ w.h.p. 
An improved analyisis of the local complexity of this algorithm was provided by Ghaffari~\cite{Ghaffari17}. 
In the same version of the beeping model without collision detection, Holzer and Lynch in \cite{HolzerL17}, proposed a variant of the algorithm of \cite{Ghaffari16}, and showed that it converges locally in time $O((\log\Delta + \log 1/\varepsilon)\cdot \log 1/\varepsilon)$ with probability at least $1-\varepsilon$ on a network with maximum degree $\Delta$. 
All these algorithms require super constant space and random bits per round.

Emek et al.~\cite{EmekW13} introduced the stone age model, \gtodo{I do not understand this sentence} 
inspired by biological cellular networks or networks of microprocessor devices. 
\gtodo{\cite{EmekW13} also gives an MIS algorithm; we must mention that}
In this model, the nodes can communicate by transmitting messages belonging to a finite communication alphabet. The nodes communicate in an asynchronous environment, where the pattern is decided by an adversary, and they have no knowledge about the size of the network. In the stone age model, the MIS problem was considered by \cite{EmekW13,EmekK21}. In~\cite{EmekW13}, is provided an algorithm that compute a MIS in $O(\log^2 n)$ rounds. However, it assumes that all the nodes have the same initial state, and therefore is not self-stabilizing. In \cite{EmekK21}, they provided a self-stabilizing algorithm that stabilizes in time $O((D + \log n) \log n)$ w.h.p., and the possible number of states of each node is $O(D)$, where $D$ is the diameter of the graph. 


In \cite{MetivierRSZ11}, the authors introduced a randomized distributed algorithm that finds an MIS in time $O(\log n)$ w.h.p. In particular, the algorithm is an adaptation of Luby's algorithm so that messages of just 1 bit are used. They consider an anonymous network, but in their setting, the vertices can distinguish between their neighbors, and each vertex needs a number of states that depends on $n$ and the node degree.

MIS algorithm has also received a lot of attention from the Self-Stabilization community. 
For a survey of those algorithms see \cite{GuellatiK10}.

We first cite here the self-stabilizing algorithm for non-anonymous networks, i.e. where vertices have IDs. \gtodo{mention all these are deterministic} 
In \cite{GoddardHJS03}, the authors provide a simple deterministic distributed algorithm that stabilizes on an MIS in $O(n)$ time and $O(n^2)$ moves (i.e. total number of state changed), in a synchronous model.
\gtodo{mention that move = number of state changes; mention the models: central/synchronous/distributed daemon}
In \cite{Ikeda2002}, the authors proposed a deterministic two-state algorithm that works under distributed scheduler (an adversary that, at each time, selects arbitrarily a set of processes to execute). Both algorithms stabilize in time $O(n^2)$.
In \cite{Turau07}, Turau introduces a 3-state self-stabilizing algorithm that stabilizes in $O(n)$ moves, under a distributed scheduler.
A breakthrough was achieved by Barenboim et al.\ \cite{BarenboimEG22}, who proposed a self-stabilizing algorithm for the MIS and other related problems, in the synchronous model. They prove that the algorithm stabilizes after $O(\Delta+ \log^* n)$ rounds. 

Assuming anonymous networks Shukla et al.\ \cite{Shukla1995} proposed two deterministic two-state self-stabilizing algorithms,  that work under a centralized scheduler (an adversary that selects one process to execute at each round) and
stabilizes in $O(n)$ rounds. 
In \cite{Turau19}, Turau introduced a synchronous randomized self-stabilizing algorithm for MIS that stabilizes w.h.p. in $O(\log n)$ rounds w.h.p. 
The possible states of the nodes are $O(\log n)$.



Next, we briefly summarize the best known upper  bounds to compute an MIS in the distributed LOCAL model on arbitrary graphs.
Barenboim et al. \cite{BarenboimEK14} proved that an MIS can be computed with a distributed deterministic algorithm in $O(\Delta + \log^*n)$ rounds and  Ghaffari et al.\ \cite{GhaffariGR21} provide an upper bound of $O(\log^5 n)$. Regarding distributed randomized algorithms, Ghaffari~\cite{Ghaffari16} provides an upper bound of $O(\log \Delta)+ 2^{O(\sqrt{\log \log n})}$ w.h.p., which, thanks to \cite{RozhonG20,GhaffariGR21}, was improved to $O(\log \Delta +\log^5 \log n)$ w.h.p. See also~\cite{FaourGGKR22}.


The current best-known lower bound for finding an MIS is proved by Balliu et al.\ \cite{BalliuBHORS21}, who show that computing an MIS in the LOCAL model requires $\Omega(\min\{\Delta, \log n/\log \log n\})$ rounds deterministically, and $\Omega(\min\{\Delta, \log \log n / \log \log \log n\})$ rounds with a randomized algorithm.



\small 
\bibliographystyle{plain}
\bibliography{misp}

\clearpage 

\end{document}